\crefname{figure}{Figure}{Figure}
\tikzset{snake it/.style={decorate, decoration=snake}}
\newcommand{\figlabel}[1]{\label{fig:#1}}
\newcommand{\figref}[1]{Fig.~\ref{fig:#1}}
\newcommand{\tablabel}[1]{\label{tab:#1}}
\newcommand{\tabref}[1]{Table~\ref{tab:#1}}
\newcommand{\seclabel}[1]{\label{sec:#1}}
\newcommand{\secref}[1]{Section~\ref{sec:#1}}
\newcommand{\deflabel}[1]{\label{def:#1}}
\newcommand{\defref}[1]{Definition~\ref{def:#1}}
\newcommand{\thmlabel}[1]{\label{thm:#1}}
\newcommand{\thmref}[1]{Theorem~\ref{thm:#1}}
\newcommand{\proplabel}[1]{\label{prop:#1}}
\newcommand{\propref}[1]{Proposition~\ref{prop:#1}}
\newcommand{\applabel}[1]{\label{app:#1}}
\newcommand{\appref}[1]{Appendix~\ref{app:#1}}
\def\thm@space@setup{
\thm@postskip=0pt}
\theoremstyle{definition}
\newtheorem{example}{Example}
\numberwithin{example}{section}
\newtheorem{theorem}{Theorem}
\numberwithin{theorem}{section}
\newtheorem{definition}{Definition}
\numberwithin{definition}{section}
\numberwithin{problem}{section}
\newtheorem{lemma}{Lemma}
\numberwithin{lemma}{section}
\newtheorem{proposition}{Proposition}
\numberwithin{proposition}{section}
\numberwithin{corollary}{section}
\numberwithin{claim}{section}
\newtheorem{remark}{Remark}
\DeclareMathAlphabet{\mathpzc}{OT1}{pzc}{m}{it}
\newcommand{\set}[1]{\{#1\}}
\newcommand{\setpred}[2]{\{#1 \,|\, #2\}}
\newcommand{\proj}[2]{#1|_{#2}}
\renewcommand{\emptyset}{\varnothing}
\newcommand{\tuple}[1]{\langle #1 \rangle}
\newcommand{\concat}{}
\newcommand{\ev}[1]{\langle #1\rangle}
\newcommand{\mems}{\mathcal{X}}
\newcommand{\locks}{\mathcal{L}}
\newcommand{\threads}{\mathcal{T}}
\newcommand{\vals}{\mathcal{V}}
\newcommand{\ThreadOf}[1]{\mathsf{thr}(#1)}
\newcommand{\OpOf}[1]{\mathsf{op}(#1)}
\newcommand{\MemOf}[1]{\mathsf{mem}(#1)}
\newcommand{\LockOf}[1]{\mathsf{lock}(#1)}
\newcommand{\ObjOf}[1]{\mathsf{obj}(#1)}
\newcommand{\events}[1]{\mathsf{Events}_{#1}}
\newcommand{\tr}{\sigma}
\newcommand{\rf}[1]{\mathsf{rf}_{#1}}
\newcommand{\po}[1]{\mathsf{po}_{#1}}
\newcommand{\creorderings}[1]{\textsf{CReorderings}(#1)}
\newcommand{\eqcl}[2]{[#1]_{#2}}
\newcommand{\maz}{\mathcal{M}}
\newcommand{\mazeq}{\equiv_{\maz}}
\newcommand{\mazcl}[1]{\eqcl{#1}{\maz}}
\newcommand{\indrel}{\mathbb{I}}
\newcommand{\labs}{\Sigma}
\newcommand{\mem}{\textsf{mem}}
\newcommand{\lck}{\textsf{lck}}
\newcommand{\rfcl}[1]{\eqcl{#1}{\rf{}}}
\newcommand{\subsq}[1]{\mathbf{i}}
\newcommand{\grains}{\mathcal{G}} 
\newcommand{\graincl}[1]{\eqcl{#1}{\grains}} 
\newcommand{\scc}{\mathcal{C}}
\newcommand{\scatteredgrains}{\mathcal{SG}}
\newcommand{\scgraincl}[1]{\eqcl{#1}{\scatteredgrains}} 
\newcommand{\opfont}[1]{\texttt{#1}}
\newcommand{\acq}{\opfont{acq}}
\newcommand{\rel}{\opfont{rel}}
\newcommand{\rd}{\opfont{r}}
\newcommand{\wt}{\opfont{w}}
\newcommand{\lk}{\ell}
\newcommand{\poly}{\operatorname{poly}}
\newcommand{\acr}[1]{{\sf #1}}
\newcommand{\acrtr}{\operatorname{\acr{seq}}}
\newcommand{\ord}[2]{\leq^{#1}_{\mathsf{#2}}}
\newcommand{\strictord}[2]{<^{#1}_{\mathsf{#2}}}
\newcommand{\trord}[1]{\ord{#1}{\acrtr}}
\newcommand{\stricttrord}[1]{\strictord{#1}{\acrtr}}
\newcommand{\aut}{\mathcal{A}}
\newcommand{\chk}{{\sf check}}
\newcommand{\subsumes}{\preceq}
\newcommand{\syncp}{\acr{SyncP}\xspace}
\newcommand{\confp}{\acr{ConfP}\xspace}
\newcommand{\seqp}{\acr{SeqP}\xspace}
\newcommand{\grconfp}{\grains\textsf{Prefix}\xspace}
\newcommand{\optseqp}{{\sf Opt}\acr{SeqP}\xspace}
\newcommand{\optgrconfp}{{\sf Opt}\grains\textsf{Prefix}\xspace}
\newcommand{\hb}{\acr{HB}\xspace}
\newcommand{\shb}{\acr{SHB}\xspace}
\newcommand{\osr}{\acr{OSR}\xspace}
\newcommand{\cp}{\acr{CP}\xspace}
\newcommand{\wcp}{\acr{WCP}\xspace}
\newcommand{\sdp}{\acr{SDP}\xspace}
\newcommand{\mtwo}{\acr{M2}\xspace}
\newcommand{\aftset}{\acr{AftSet}\xspace}
\colorlet{RED}{red}
\newcommand{\racy}[1]{\textcolor{red}{#1}}
\newcommand{\execution}[2]{
\scalebox{0.7}{
  \begin{tikzpicture}%
    \foreach \x in {1,...,#1}
    \node[right] at (1.5*\x+0.2,0.25) {$T_{\x}$};
    \draw (1.2,0) -- (#1*1.5+1.2,0);%
    \pgfmathsetmacro{\y}{1};%
    #2%
    \draw (1.2,0) -- (1.2,-0.4*\y);%
    \draw (#1*1.5+1.2,0) -- (#1*1.5+1.2,-0.4*\y);%
    \foreach \x in {2,...,#1}
    \draw[dashed] (1.5*\x-0.3,0) -- (1.5*\x-0.3,-0.4*\y);%
    \draw (1.2,-0.4*\y) -- (#1*1.5+1.2,-0.4*\y);%
  \end{tikzpicture}
}
}
\newcommand{\figev}[2]{
\pgfmathsetmacro{\y}{\y+1};
\pgfmathsetmacro{\y}{\y-1};
\node [left] at (1.25,-0.4*\y)  {\pgfmathprintnumber{\y}};%
\node at (#1*1.5 + 0.45,-0.4*\y) { #2 };%
\pgfmathsetmacro{\y}{\y+1};
}
\newcommand{\executionfull}[9]{
\scalebox{#3}{
  \begin{tikzpicture}
    \foreach \x in {1,...,#1}
    \node[right] at (#4*\x+#6, #8) {$T_{\x}$}; 
    \draw (#7,0) -- (#1*#4+#7,0); 
    \pgfmathsetmacro{\y}{1};
    #2 
    \draw (#7,0) -- (#7,-#5*\y); 
    \draw (#1*#4+#7,0) -- (#1*#4+#7,-#5*\y); 
    \draw (#7,-#5*\y) -- (#1*#4+#7,-#5*\y); 
    \ifthenelse{#9 = 1}{
      \foreach \x in {2,...,#1}
      \draw[dashed] (#4*\x+#7-#4,0) -- (#4*\x+#7-#4,-#5*\y); 
    }{}
  \end{tikzpicture}
}
}
\newcommand{\figevfull}[9]{
\ifthenelse{#7 = 1}{
  \ifthenelse{#8 = -1}{
    \node [left] at (#5,(-#4*\y))  {\pgfmathprintnumber{\y}};%
  }{
    \node [left] at (#5,(-#4*\y))  {#9};%
  }
}{}
\node at (#1*#3 + #6,(-#4*\y)) {$ #2 $};%
\pgfmathsetmacro{\y}{\y+1};
}
\newcommand{\trgrconfpwins}{\tr_{7}}
\newcommand{\trgrconfpwinsreordering}{\rho_{7}}
\newcommand{\trsyncpnotcomm}{\tr_{8}}
\newcommand{\trsyncpnotcommreordering}{\rho_{8}}
\newcommand{\trosrmisses}{\tr_{9}}
\newcommand{\trosrmissesreordering}{\rho_{9}}
\newcommand{\hLRU}{{\sf LRU}\xspace}
\newcommand{\hSz}{{\sf Sz}\xspace}
\newcommand{\hSh}{{\sf Sh}\xspace}
\begin{document}

\title{Enhanced Data Race Prediction Through Modular Reasoning}

\author{Zhendong Ang}
\email{zhendong.ang@u.nus.edu}
\affiliation{%
  \institution{National University of Singapore}
  \city{Singapore}
  \country{Singapore}
}

\author{Azadeh Farzan}
\email{azadeh@cs.toronto.edu}
\affiliation{%
  \institution{University of Toronto}
  \city{Toronto}
  \country{Canada}
}

\author{Umang Mathur}
\email{umathur@comp.nus.edu.sg}
\affiliation{%
  \institution{National University of Singapore}
  \city{Singapore}
  \country{Singapore}
}



\begin{abstract}
There are two orthogonal methodologies for efficient prediction of data races from concurrent program runs: commutativity and prefix reasoning. There are several instances of each methodology in the literature, with the goal of predicting data races using a streaming algorithm where the required memory does not grow proportional to the length of the observed run, but these instances were mostly created in an ad hoc manner, without much attention to their unifying underlying principles. 
In this paper, we identify and formalize these principles for each category with the ultimate goal of paving the way for combining them into a new algorithm which shares their efficiency characteristics but offers strictly more prediction power. In particular, we formalize three distinct classes of races predictable using commutativity reasoning, and compare them. We identify three different styles of prefix reasoning, and prove that they predict the same class of races, which provably contains all races predictable by any commutativity reasoning technique. 

Our key contribution is combining prefix reasoning and commutativity reasoning in a modular way to introduce a new class of races, {\em granular prefix races}, that are predictable in constant-space and linear time, in a streaming fashion. This class of races includes all races predictable using commutativity and prefix reasoning techniques. We present an improved constant-space algorithm for prefix reasoning alone based on the idea of antichains (from language theory). This improved algorithm is the stepping stone that is required to devise an efficient algorithm for prediction of granular prefix races. We present experimental results to demonstrate the expressive power and performance of our new algorithm.
\end{abstract}



\maketitle


\section{Introduction}
\seclabel{intro}

Dynamic data race detectors have emerged as the first line of defense against data races, which are often symptomatic of deeper and critical problems in concurrent software, and yet inherently hard to find.
But the effectiveness of such tools can be sensitive to
thread scheduling observed during the execution of the underlying program-under-test,
and is often poor. 
{\em Predictive} data race detection techniques attempt to identify races by looking at all alternate executions of the underlying program that can be inferred from the observed execution.
In its full generality, predictive data race detection is an intractable problem~\cite{Mathur2020b}, 
thanks to the exponentially many reorderings that need to be enumerated to expose a data race.
Nevertheless, sound polynomial time algorithms have emerged 
recently~\cite{Smaragdakis12,Kini17,genc2019,Roemer20,Mathur18,Pavlogiannis2020,OSR2024,Mathur21,seqcheck2021,Ang2024CAV} that trade completeness for running time,
often achieving the holy grail of runtime verification --- a monitorable
implementation, i.e., a streaming algorithm whose
memory requirement does not increase with the length of the execution.

This work was motivated by the questions 
(1) what are the key ingredients behind fast highly predictive 
data race detection algorithms?, and 
(2) can they be combined to yield better algorithms?
A clean and elegant answer to (1) has been elusive so far, and likely for that reason, (2) has never been systematically studied before. 
Existing algorithms, and their correctness proofs, have been
difficult~\cite{Smaragdakis12,Kini17}, largely unprincipled, 
and sometimes even incorrect~\cite{Roemer18,Kini17,Mathur18}.
In this work, we take on the challenge of demystifying the ingredients
behind fast algorithms, and identify two key principles of reasoning that
allow for monitorability  --- \emph{commutativity} and \emph{prefix reasoning}.

Commutativity reasoning is older and is based on a {\em sound} commutativity relation over individual events in the observed program run. 
The race prediction question boils to checking if two conflicting events can be made adjacent through a sequence of valid swaps between commuting events. 
Data race prediction techniques based on the happens-before partial order~\cite{Flanagan09,Pozniansky03,djit1999,Elmas07} and its variants~\cite{Mathur18} rely on this style of reasoning using similar independence relations.

Consider the program run illustrated in \figref{com}(a), where there is a race between the two red events.
\begin{wrapfigure}[15]{r}{0.4\textwidth}
\vspace{-10pt}
\includegraphics[scale=0.9]{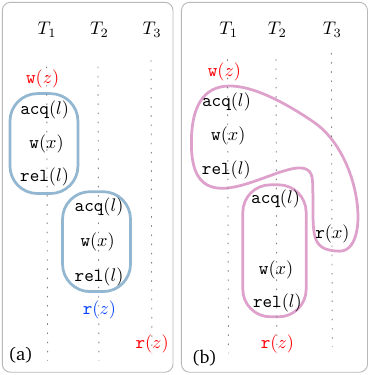}
\vspace{-18pt}
\caption{\small Commutativity-based races} 
\label{fig:com}
\end{wrapfigure}
This race can be predicted through the following commutativity argument: 
the $\textcolor{red}{\rd(z)}$ event of thread $T_3$ 
(denoted $\ev{T_3, \textcolor{red}{\rd(z)}}$)
can be commuted upwards against all actions of threads
$T_1$ and $T_2$ until it reaches $\ev{T_1, \textcolor{red}{\wt(z)}}$. 
This is based on the simple observation that any pair of events from different threads commute unless they share a memory location and at least one is a write. In this view, acquire and releases of locks can be treated as write accesses to the memory location corresponding to the lock. It is also known in the literature as a {\em happens-before} race. 

Now consider the event $\ev{T_2, \textcolor{blue}{\rd(z)}}$.
There is a predictable race between this event and 
the event $\ev{T_1, \textcolor{red}{\wt(z)}}$ as well, 
but a simple commutativity argument as described above cannot predict it. 
The accesses to lock $l$ and memory location $x$ are in the way and do not commute. 
In \cite{FarzanMathurPOPL2024}, a more general notion of commutativity is introduced, which can be used to predict this race. 
Its premise is that if the circled collections of events are considered as atomic {\em grains}, then, under the assumption that the two $\wt(x)$ events (in $T_1$ and $T_2$ respectively) are not observed by any reads not in the figure, {\em the two grains commute}, 
the events $\ev{T_1, \textcolor{red}{\wt(z)}}$ and $\ev{T_2, \textcolor{blue}{\rd(z)}}$
commute against these grains as before, and can be made concurrent.

Finally, consider the two events $\ev{T_1, \textcolor{red}{\wt(z)}}$ and $\ev{T_2, \textcolor{red}{\rd(z)}}$ in \figref{com}(b), which also form a predictable race. 
But, neither of the two commutativity-based techniques above can predict this race. 
In \cite{FarzanMathurPOPL2024}, the notion of {\em scattered grains} is introduced where you can pick a non-contiguous sequence of events as a (scattered) grain, 
as is the case for both pink blocks marked in the figure. 
Then, one can argue that in the absence of any latter $\rd(x)$ events that would observe either $\wt(x)$ event in the figure, the two (scattered) grains commute, and 
the two events $\ev{T_1, \textcolor{red}{\wt(z)}}$ and $\ev{T_2, \textcolor{red}{\rd(z)}}$ commute with them, witnessing the race. In Section \ref{sec:recap-commutativity}, we define the class of races that can be predicted using each commutativity principle (events, grains, and scattered grains), and argue that the class of races discovered using scattered grain commutativity is strictly larger than those discovered using grain commutativity, which is itself strictly larger than the class of races discovered using event commutativity.

\begin{wrapfigure}[11]{r}{0.16\textwidth}
\vspace{-10pt}
\includegraphics[scale=0.9]{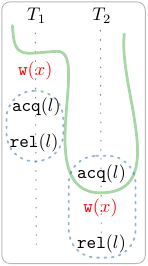}
\vspace{-20pt}
\caption{\scriptsize Prefix Races} 
\label{fig:syncp}
\end{wrapfigure}
The mechanism predicting races through {\em prefix reasoning} \cite{Mathur21,Tunc2023deadlock,Ang2024CAV} is fundamentally different. 
Consider the program run illustrated in \figref{syncp}, where there is a race between 
the events $\ev{T_1, \textcolor{red}{\wt(x)}}$ and $\ev{T_2, \textcolor{red}{\wt(x)}}$. First, observe that this race cannot be witnessed by any of the three commutativity-style techniques outlined above. For the race to be witnessed, the two critical sections on lock $l$ must be reordered. 
{\em Grain commutativity} (using the dashed grains) permits this reordering. However, after this reordering,  the grain containing the second $\ev{T_2, \textcolor{red}{\wt(x)}}$, as a whole, becomes adjacent to the event $\ev{T_1, \textcolor{red}{\wt(x)}}$; but the pair of red events cannot become adjacent. 

Prefix reasoning views the solid (green) curve as a {\em cut-off} point. 
If we consider the {\em prefix} before this point, which includes only a single $\acq(l)$ event from thread $T_2$, then this {\em prefix} is an executable run of the underlying program whenever the entire illustrated run is. 
After this prefix, both $\textcolor{red}{\wt(x)}$ events are enabled and can be executed simultaneously to cause a data race. 
Hence, the prefix marked by the green curve {\em witnesses the race}. 
In full generality, prefix reasoning seeks a cut-off point where the set of events before can be correctly and efficiently {\em linearized} into an executable run of the program.

In Section \ref{sec:recap-prefixes}, we first survey two existing such classes of prefixes in the literature. We then propose a new class which helps frame and explain the core principle behind this style of reasoning for predictive race detection. 
Then, with the aid of this new proposed class, we give a formal argument that the class of races predicted through prefix reasoning is strictly larger than the class of races predicted through scattered grain commutativity, and therefore all known commutativity-style reasonings for predicting races. This, for the first time, settles the question of which style has the better (theoretical) predictive power. 

The natural question, that comes next, motivates the main contribution of this paper: ``Can prefix and commutativity reasoning be combined to yield a new technique that overcomes the limitations of both techniques?''.
Moreover, ``can we arrive at a systematic and {\em modular} combination?''. To this end, one hopes to find a way for the techniques to complement one another and partially compensate for each other's limitations.

The limitations to commutativity reasoning are easier to formulate: a race cannot be predicted between a pair of events if there is at least one event in the middle of the pair that cannot be commuted out of the range. For understanding the limitations of prefix reasoning, consider the runs illustrated in \figref{syncp-miss}, which are 
slight modifications of the one from \figref{syncp}. In both (a) and (b), 
\begin{wrapfigure}[14]{r}{0.43\textwidth}
\vspace{-10pt}
\includegraphics[scale=0.9]{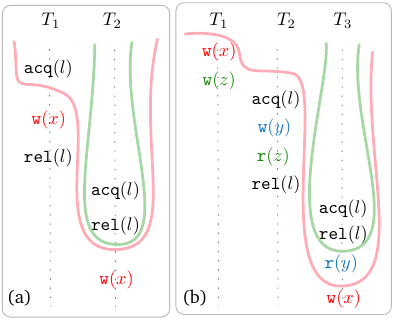}
\vspace{-15pt}
\caption{Prefix reasoning limitations} 
\label{fig:syncp-miss}
\end{wrapfigure}
the pair of events on location $\textcolor{red}{x}$ form a predictable race, yet neither race can be predicted by prefix reasoning. In both cases, the outer (pink) curves mark the cut-off point after which the pair of racy events are at the boundary of the prefix. However, each case is problematic in a different way. In \figref{syncp-miss}(a), the outer (pink) curve marks a prefix that is not executable, due to an inconsistency with lock semantics: while the first lock block remains open, the second lock block cannot be executed. In \figref{syncp-miss}(b), the outer (pink) curve marks an executable prefix, but the event $\ev{T_3, \textcolor{red}{\wt(x)}}$ is not {\em enabled} after this prefix. Note that the prefix excludes the event $\ev{T_2, \textcolor{blue!70!green}{\wt(y)}}$ from which the read event $\ev{T_3, \textcolor{blue!70!green}{\rd(y)}}$ observes its value in the given run. In the execution of this prefix, the value of this read may be different, which implies that one cannot soundly rely on the thread to follow the same local execution path as before. If one attempts to correct this by including $\ev{T_2, \textcolor{blue!70!green}{\wt(y)}}$ in the prefix, then one has to include the entire lock block from $T_2$ (to respect lock semantics for executability), which in turn triggers the inclusion of the two events of $T_1$ to maintain the executability of the prefix, and it will no longer witness the race.

It is easy to speculate that limitations of prefix reasoning can be overcome through richer classes of prefixes. For instance, 
looking at \figref{syncp-miss}(a), the reader may wonder that if we reorder the elements in the prefix, specifically by executing the lock block in thread $T_2$ first, then we arrive at an executable prefix that witnesses this race, and whether this additional reasoning can be done through an efficient algorithm. In \cite{OSR2024}, a partial solution is provided in the form of a heuristic algorithm that can do some partial reasoning of this kind. However, the set of predictable races are incomparable against any of aforementioned (prefix and commutativity reasoning) techniques. In particular, this heuristic does not even subsume vanilla prefix reasoning. 
More importantly, commutativity and prefix reasoning techniques for race prediction all have constant-space (wrt the length of the input run) complexity, while reasoning about more elaborate prefixes makes
reasoning harder~\cite{Mathur21,Ang2024CAV}; 
indeed the algorithm of \cite{OSR2024} has linear space complexity. 

Nevertheless, the existence of this heuristic poses the more specific question: If one uses commutativity reasoning in the prefixes, to expand the class of prefixes, would one arrive at a strictly better predictive algorithm? Somewhat surprising, this is not the case.  
In Section \ref{sec:combining}, we argue that using (event, grain, or scattered grain) commutativity in defining a richer class of prefixes does not yield any additional power to a race prediction algorithm that uses these prefixes for the means of race prediction.  Intuitively, the prefix up to the cut-off point is executable if and only if any up-to-commutativity reordering of it is, and the set of events enabled at the boundary do not change.  Therefore, using commutativity \emph{inside} the prefix does not yield any new executable prefixes or any new races at the end of existing ones.

What if we use commutativity reasoning {\em after} the prefix? Consider the prefix that is marked by the inner (green) curve in \figref{syncp-miss}(a). This prefix is executable, but does not witness any race; the two events enabled immediately after it are $\ev{T_1, \acq(l)}$  and $\ev{T_2, \textcolor{red}{\wt(x)}}$, and do not constitute a race. However, it is straightforward to reason, using event commutativity why there is a race in the remaining executable suffix: the event $\ev{T_2, \textcolor{red}{\wt(x)}}$ commutes against the event $\ev{T_1, \rel(l)}$ 
and can be brought next to the event $\ev{T_1, \textcolor{red}{\wt(x)}}$, {\em in the suffix}. But, for this reasoning to kick in, one needs to first get rid of the prefix marked by the inner curve. The same scenario plays out if we use the prefix marked by the inner (green) curve in \figref{syncp-miss}(b). 

In Section \ref{sec:suffix}, we present our first approach for combining prefix reasoning and commutativity reasoning in tandem, yielding a constant-space algorithm for the results in Section \ref{sec:algo}. Prefix reasoning contributes by removing some obstacles that commutativity reasoning cannot overcome alone, and event-based commutativity reasoning {\em in the remaining executable suffix} (i.e., \emph{outside} the prefix) adds a new dimension of expressiveness to races that would otherwise be missed by prefix reasoning alone, as illustrated by the examples in \figref{syncp-miss}. However, as we argue in Section \ref{sec:suffix}, this new class of races does not strictly subsume all prefix races. In a sense, this algorithm suggests a 
\begin{wrapfigure}[15]{r}{0.22\textwidth}
\vspace{-10pt}
\includegraphics[scale=1]{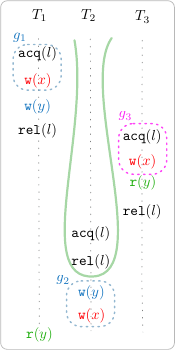}
\vspace{-10pt}
\caption{\footnotesize Grains + Prefixes} 
\label{fig:grains}
\end{wrapfigure}
new point of expressiveness in the style of \cite{OSR2024}: it can beat vanilla prefix reasoning in some instances (e.g. the races in \figref{syncp-miss}(a,b)), but it can also be beaten by vanilla prefix reasoning. The distinction is that unlike \cite{OSR2024}, it admits a constant-space prediction algorithm. 

Consider the example run illustrated on the right. The prefix marked with the green curve is executable, however, the race between the events $\ev{T_1, \textcolor{red}{\wt(x)}}$ and $\ev{T_2, \textcolor{red}{\wt(x)}}$ in the executable suffix (consisting of all the remaining events in this example) cannot be witnessed using event-based commutativity;
the pair of $\textcolor{blue!70!green}{\wt(y)}$ events become a commutativity obstacle to this race. 
If we consider the suffix without the very last $\textcolor{green!70!blue}{\rd(y)}$ event and all events from $T_3$, then grain commutativity can predict this race, because as singleton grains (without any future reads), the two
$\textcolor{blue!70!green}{\wt(y)}$ events commute. 
But, even only with the addition of
the last $\textcolor{green!70!blue}{\rd(y)}$ event, no existing commutativity reasoning technique can predict the race in the suffix. 
Also, consider the two grains $g_1, g_2$ marked (with dashes) in the figure, and imagine that in the spirit of grain commutativity, we consider these two grains as two compound events. 
These two compound events are {\em enabled} after the green prefix. 
So, if we were to shift our view from individual events to grains, we could declare the two grains to be racy witnessed by this prefix. 
Once we have this fact, we can zoom into the segment of the suffix (instead of the complete suffix) that consists only of the concatenation of these two grains $g_1, g_2$: 
$$\ev{T_1, \acq(l)} \ev{T_1,\textcolor{red}{\wt(x)}} \ev{T_2,\textcolor{blue!70!green}{\wt(y)}} \ev{T_2,\textcolor{red}{\wt(x)}}$$  
in which the race can be predicted with a single event commutation. There is a similar situation with the race between $\ev{T_3, \textcolor{red}{\wt(x)}}$ and $\ev{T_2, \textcolor{red}{\wt(x)}}$. But, a different grain, namely $g_3$, combined with $g_2$ witnesses this race. The choices of $g_1$ and $g_3$ do not witness any race, simply because the pair of events $\ev{T_1, \textcolor{red}{\wt(x)}}$ and $\ev{T_3, \textcolor{red}{\wt(x)}}$ do not form a race.

Inspired by this observation, we introduce the main contribution of 
this paper --- {\em granular prefix races} --- which is a class of races that can be efficiently (in constant space) predicted using this granular view of prefix reasoning, combined with fast commutativity reasoning (see Section \ref{sec:granular}) for the {\em last-mile} reasoning within the two grains. Granular prefix races contain all prefix races, and as the example illustrates, this containment is strict. Note that beyond the default enumeration of possible prefix choices, granular prefix reasoning enumerates the choices of grains, since different choices may witness different races missed by vanilla prefix reasoning, as the example demonstrates.


In \cite{Ang2024CAV}, it was observed that a constant space algorithm for prefix reasoning alone can behave poorly in practice. Intuitively, think of this algorithm as guessing all possible prefixes, which are maintained as a constant-bounded set of summaries. This constant state space has to be carefully maintained every time a new event is processed by the algorithm, and the price of this maintenance, even for modest-sized space, over millions of events does not yield a practically efficient algorithm. Inspired by antichain techniques \cite{anitChain2006} from automata theory, we propose a new version of this algorithm which substantially cuts down on this price (see Section \ref{sec:algo}). This improvement is vital, since our granular prefix reasoning builds on this baseline algorithm. We also adapt the idea of antichain techniques \cite{anitChain2006} to extend the optimization to the new algorithm for predicting granular prefix races (Section \ref{sec:algo}).

One key advantage of granular prefix reasoning, for combining the two styles of reasoning---prefix/suffix and commutativity--- based on grains, is that it yields opportunities in devising principled compromises in expressiveness to regain algorithmic scalability: one can tune the algorithm effort based on the kinds of grains that are enumerated to strike a balance between expressiveness and efficiency (see Section \ref{sec:exp-result}).

To summarize, our key contributions are:
\begin{itemize}
	\item We define three classes of data races based on \emph{commutativity} reasoning of
	increasing granularity --- event, grain and scattered grains.
	We then introduce a principled approach to formulate
	different classes of races based on \emph{prefix} reasoning, present them in a unified setting,
	and also present a new and simpler class of races that coincides with existing notions.
	We then compare the predictive power of all these classes of data races 
	and outline the key message that prefix reasoning is more powerful than commutativity reasoning, when used in isolation \secref{commutativity-prefixes}.

	\item We then study when and how can these two reasoning techniques be modularly combined.
	We show that combining commutativity inside the prefix does not enhance predictive power (\secref{combining}).
	We then show that commutativity can enhance predictive power if used beyond the prefix,
	and propose two new classes of races, {\em maximal suffix} and {\em granular prefix} based on this principle (\secref{stratification}).

	\item We devise efficient (constant-space linear-time) algorithms for the 
	prediction of  {\em granular prefix} races, and present an antichain optimization to aid practical performance (\secref{algo}). 

	\item We implement our algorithms in Java and put them to test with a thorough evaluation of them on benchmark suites derived from prior works on data race prediction, demonstrating the effectiveness
	of our new notion of $\grconfp$-races, the proposed algorithms, optimizations and heuristics (\secref{experiments}).
\end{itemize} 


\section{Preliminaries}
\seclabel{prelim}

In this section, we discuss notations on shared-memory 
multithreaded concurrent programs and formally define data races
and predictive data races.

\subsection{Concurrent Programs and Data race prediction}

\myparagraph{Runs and events}{
	In this work, we consider shared-memory multithreaded concurrent programs 
	that work under sequential consistency.
	An execution or \emph{run} $\tr$ of a concurrent program is a sequence of \emph{events} $e_1 e_2 \dots e_n$ performed by a finite set of threads $\threads$.
	Each event either accesses (reads from or writes to) one of
	the shared memory locations $\mems$ or acquires or releases one of the
	locks $\locks$ for enforcing mutual exclusion;
	for ease of presentation we skip other kinds of synchronizations such as barriers
	which can easily be modeled in our setting.
	Formally, an event is then a tuple $e = \tuple{id, lab}$, 
	where $id$ is a unique identifier for $e$
	and $lab \in \labs$, where $\labs = \labs_\mem \uplus \labs_\lck$,
	and
	\begin{align*}
	\begin{array}{rcl}
	\labs_\mem &&= \setpred{\ev{t, op(x)}}{t \in \threads, op \in \set{\wt, \rd}, x \in \mems}\\ 
	\labs_\lck &&= \setpred{\ev{t, op(\lk)}}{t \in \threads, op \in \set{\acq, \rel}, \lk \in \locks}
	\end{array}
	\end{align*}
	We will refer to the thread identifier, operation and memory location (or lock) accessed
	in an event $e$ labeled with $lab = \ev{t, op(d)}$ by $\ThreadOf{e} = t$, $\OpOf{e} = op$ and 
	$\ObjOf{e} = d$;
	when $op \in \set{\wt, \rd}$, then we sometimes
	use the notation $\MemOf{e}$ instead of $\ObjOf{e}$,
	and when $op \in \set{\acq, \rel}$, then we
	use the notation $\LockOf{e}$ instead of $\ObjOf{e}$.
	Often, the unique identifier $id$ of an event $e$ will be clear from context or entirely irrelevant.
	We will not mention it explicitly, and will instead write $e = \ev{t, op(d)}$, 
	where $\ev{t, op(d)}$ is the label of $e$.

	We use $\events{\tr} = \set{e_1, \ldots, e_n}$ to denote the set of 
	events of the run $\tr = e_1 e_2 \dots e_n$, 
	and $\trord{\tr} = \setpred{(e_i, e_j)}{e_1, e_j \in \events{\tr}\text{ and }i < j}$ to denote the total order on $\events{\tr}$ induced by the sequence $\tr$.
	The program order $\po{\tr}$ of a run $\tr$ is the smallest partial order 
	that includes pairs $(e, f)$ in $\tr$ when $\ThreadOf{e} = \ThreadOf{f}$ and $e \trord{\tr} f$.
	The \emph{reads-from} relation $\rf{\tr}$ of $\tr$ is the set of all memory access pairs
	$(e_w, e_r)$ in $\tr$ such that 
	$\OpOf{e_w} = \wt$, $\OpOf{e_r} = \rd$, $\MemOf{e_w} = \MemOf{e_r}$, $e_w \trord{\tr} e_r$, 
	and for every other write $e'_w \neq e_w$ ($\OpOf{e'_w} = \wt$) with $\MemOf{e'_w} = \MemOf{e_w}$,
	we have either $e'_w \trord{\tr} e_w$ or $e_r \trord{\tr} e'_w$.
	We will often use $\proj{\tr}{t}$, $\proj{\tr}{\lk}$ and $\proj{\tr}{x}$ to denote
	the projection of $\tr$ to the set of events respectively performed by some thread $t \in \threads$,
	accessing a lock $\lk \in \locks$ and accessing a memory location $x \in \mems$.
	A concurrent program run $\tr$ is said to be \emph{well-formed} when,
	\begin{enumerate*}[label=(\alph*)]
		\item each read event has a \emph{corresponding} write event, i.e., 
		for each $x \in \mems$, $\proj{\tr}{x}$ is of the form $(\wt(x) \cdot (\rd(x))^*)^*$, and
		\item critical sections on the same lock do not overlap, i.e.,
		for each $\lk \in \locks$, $t\in\threads$, $\proj{\tr}{\lk, t}$ is of the form 
		$(\acq(\lk) \cdot \rel(\lk))^*(\acq(\lk) + \varepsilon)$.
	\end{enumerate*}
	We will assume runs are well-formed from now on.

}

\myparagraph{Data races}{
	Data races are one of the most common concurrency bugs and are indicative of possibly more 
	serious issues such as memory corruption and security vulnerabilities, and proactive
	detection of data races has been proven effective in isolating bugs
	early on during the development cycle.
	Here we focus on dynamic analysis algorithms that analyze program executions and check 
	if they contain data races.
	While many notions of data races have been proposed in the literature, 
	here we present the most popular one used in prior works on data race detection~\cite{Mathur18,Kini17,Smaragdakis12}.
	At a high level, a data race occurs in an execution if
	two conflicting events occur \emph{simultaneously} in it.
	A pair of events $(e_1, e_2)$ in an execution $\tr$
	is said to be conflicting if they access the same memory location and 
	at least one of them is a write operation (formally, $\MemOf{e_1} = \MemOf{e_2} = x$ and $\set{\wt} \subseteq \set{\OpOf{e_1}, \OpOf{e_2}} \subseteq \set{\rd, \wt}$).
	In the setup we have, simultaneity can be modeled by 
	instead asking if two such events are consecutive.
	We thus have the following.
	In a concurrent program run $\tr$, 
	a pair of conflicting events $(e_1, e_2)$ in $\tr$ 
	is said to be a data race if $\ThreadOf{e_1} \neq \ThreadOf{e_2}$,
	and they appear consecutively in $\tr$.
	A run $\tr$ is said to have a data race if it contains one.
}

\myparagraph{Correct reorderings, enabled events and predictable data races}{
	While the above definition of data races immediately lends itself to a simple
	algorithm for automatically detecting data races from program runs, such an algorithm
	is likely to miss many races due to its reliance on an angelic thread interleaving that
	puts conflicting events next to each other.
	In contrast, \emph{predictive} style of reasoning takes a slightly different 
	approach~\cite{Said11,serbanuta2013maximal}, examining not only the observed run but also inferring alternative feasible executions.
	A well studied notion of the space of alternative executions is that of
	\emph{correct reorderings}~\cite{Smaragdakis12,Mathur18} of the observed run $\tr$.
	Formally, the set $\creorderings{\tr}$ of correct reorderings of a well-formed run $\tr$
	can be defined to be the set of all well-formed runs $\rho$ such that
	\begin{enumerate*}
	\item $\events{\rho} \subseteq \events{\tr}$,
	\item for each thread $t \in \threads$, $\proj{\rho}{t}$ is a prefix of $\proj{\tr}{t}$,
	\item $\rf{\rho} \subseteq \rf{\tr}$.
	\end{enumerate*}

	Armed with this definition, one can define a more general but still 
	robust definition of \emph{predictable} data races as follows.
	A pair of conflicting events $(e_1, e_2)$ in $\tr$ 
	is said to be predictable data race if they are $\tr$-enabled in a correct reordering $\rho$.
	Here, we say that an event $e\in\events{\tr}$ is \emph{$\tr$-enabled} in a correct reordering
	$\rho$ if $e \not\in \events{\rho}$ and for all events $e' \in \events{\tr}$ 
	such that $(e', e) \in \po{\tr}$, we have $e' \in \events{\rho}$.
	Notably, correct reorderings preserve both program order and data/control flow of $\tr$. 
	This preservation ensures that any program $P$ generating $\tr$ must also be capable of generating all its correct reorderings. 
	This property forms the foundation for sound data race prediction: algorithms that analyze $\tr$ and search for race witnesses in $\creorderings{\tr}$ are guaranteed to report only true positives. Since nearly all races discussed in this paper are predictable races, to avoid tedium, we simply refer to a {\em predictable race} as a race.



}




\section{The role of commutativity and prefixes in predictive analysis}
\seclabel{commutativity-prefixes}
In this section, we identify two distinct principles that yield linear time and
constant space algorithms for predictive data race detection: \emph{commutativity} and \emph{prefix reasoning}.
We demonstrate these two principles next,
in the context of data race prediction and compare their expressive power. In the process, we expose the principles behind an array of data race prediction techniques that may otherwise look ad hoc. 


\subsection{Commutativity-based Reasoning}
\seclabel{recap-commutativity}

The key principle behind commutativity reasoning is simple --- infer an equivalent
correct reordering via repeated commutations of atomic elements
of an execution.
%
Mazurkiewicz's trace theory~\cite{Mazurkiewicz87} provides a classical
framework for commutativity reasoning when atomic elements are 
chosen to be individual events in the execution.
We briefly recall this next, and subsequently recall recent
generalizations to the case where the choice of atomic elements 
includes larger subsets of events, called \emph{grains}, 
allowing for the possibility of
improved predictive power~\cite{FarzanMathurPOPL2024}.

\myparagraph{Event-based commutativity}{
	To formally describe trace equivalence, one
	first fixes a symmetric, irreflexive \emph{independence} relation
	$\indrel \subseteq \labs \times \labs$ on the set of event labels $\labs$.
	With this, executions $\tr$ and $\rho$ are said to be trace-equivalent,
	denoted $\tr \mazeq \rho$, if
	$\tr$ can be transformed into
	$\rho$ by repeatedly swapping consecutive events
	labeled $a, b \in \labs$ so that $(a, b) \in \indrel$\footnote{More formally, $\mazeq$ is the smallest equivalence on $\labs^*$ such that
	for any two words $w_1, w_2 \in \labs^*$ and for each $(a, b) \in \indrel$, we have:
	$w_1 \concat a \concat b \concat w_2 \mazeq w_1 \concat b \concat a \concat w_2.$ We omit explicit parametrization on the independence relation $\indrel$ from our notation $\mazeq$ (i.e., avoid cumbersome notations like $\equiv_{\maz, \indrel}$ or $\equiv_{\maz}^\indrel$) since it will often be clear from context.}.
	We use $\mazcl{\tr}$ to denote the set of all executions equivalent to $\tr$ by $\mazeq$.
	Trace equivalence is the simplest form of commutativity reasoning
	and can help establish pair of events to be in race if they can be brought together
	by repeated commutations of neighboring independent events; we call such races $\maz$-races.
	A pair $(e_1, e_2)$ of conflicting events is a Mazurkiewicz-race, or $\maz$-race in $\tr$ if
	there is a $\rho \mazeq \tr$ such that $e_1$ and $e_2$ appear
	consecutively in $\rho$.
}

\myparagraph{Soundness of $\maz$-races}
For the above scheme --- push events either before $e_1$ or after $e_2$
through repeated commutations --- to be sound and effective, 
one must choose the independence relation $\indrel$ carefully.
In particular, an overly permissive $\indrel$ may result into a reordering
that is not a correct reordering (i.e., it may not be sound), while an overly conservative $\indrel$
may forbid most commutations and would not be useful.
For the alphabet $\labs_\mem \uplus \labs_\lck$,
we say that $\indrel$ is said to be \emph{sound} if for every well-formed execution
$\tr$, we have $\mazcl{\tr} \subseteq \creorderings{\tr}$.
Naturally, an $\maz$-race is a (predictable) race if $\indrel$ is sound.
The most permissive sound choice of $\indrel$ for the alphabet 
$\labs_\mem \uplus \labs_\lck$
is given by:
\begin{align*}
\indrel = \setpred{(a_1, a_2)}{
\ThreadOf{a_1} \neq \ThreadOf{a_2} \land 
\big(\ObjOf{a_1} = \ObjOf{a_2} \implies \OpOf{a_1} = \OpOf{a_2} = \rd \big)
}
\end{align*}
%
Unless otherwise stated, we will assume that the independence relation is as above.
As an example, recall the execution in \figref{com}(a), and events $\ev{T_1, \wt(z)}$ and
$\ev{T_3, \rd(z)}$.
Here, since the latter is independent of all events in this execution, except $\ev{T_1, \wt(z)}$, it can be swapped against them to predict the race, which is a $\maz$-race.


\myparagraph{Grain Commutativity}
Reasoning based solely on event-based commutativity, a la trace
equivalence, is known to be very conservative 
and misses out on many data races in practice~\cite{Said11,Smaragdakis12}.
The fundamental limitation of sound event-based commutativity
arises from the fact that it  only allows those commutations that are sound at each step.
As we noted with the example run in \figref{com}(a), the race between the red $\wt(z)$ and the blue $\rd(z)$ can be uncovered
by commuting the critical sections, as a whole \emph{grains}
against each other.
\emph{Grain equivalence}~\cite{FarzanMathurPOPL2024} essentially
formalizes this notion as a natural generalization of trace equivalence.
In essence, an execution $\rho$ can be obtained from $\tr$ using
grain commutativity, denoted $\rho \in \graincl{\tr}$,
if there is a partition of $\tr$ into grains, or contiguous
sequences of events ($\tr = g_1 g_2 \cdots g_k$)
such that $\rho$ can be obtained by repeated commutations of these grains
according to a \emph{grain independence relation} $\indrel_\grains$.
As before, the largest sound grain independence relation is unique for a choice of grains;
we skip the detailed definition here and assume $\indrel_\grains$ is this largest independence relation.
With this, can now define a race that can be inferred using grain commutativity reasoning --- a pair
of conflicting events $(e_1, e_2)$ is a $\grains$-race in $\tr$
if there is a $\rho \in \graincl{\tr}$ such that $e_1$ and $e_2$ are 
consecutive in $\rho$.
Thus, in the run in \figref{com}(a),
the events $\ev{T_1, \wt(z)}$ and $\ev{T_2, \rd(z)}$ constitute a $\grains$-race.

\myparagraph{Scattered-grain commutativity}
Finally, \emph{scattered grains}~\cite{FarzanMathurPOPL2024} allow for 
commuting subsequences of events which may not be contiguous.
The formal definition of a data race that can be inferred using scattered
grain commutativity can be given in terms of a grain graph induced by a 
given choice of scattered grains.
Let $S = \set{g_1, g_2, \ldots, g_k}$ 
be a set of pairwise disjoint subsequences of events, or \emph{scattered grains} 
in $\tr$ such that $\events{\tr} = \biguplus_{i=1}^k \events{g_i}$.
The grain graph $\mathsf{GGraph}_{\tr, S} = (S, E)$ adds an edge from a grain $g_i$
to a later grain $g_j$ if there is a dependence between them.
The grain graph captures \emph{causal concurrency} ---
it is sound to conclude that grain $g$ can be reordered before $g'$ if there is no path from
$g'$ to $g$ in the graph $\mathsf{GGraph}_{\tr, S}$.
Indeed, let $\set{\scc_1, \scc_2, \ldots, \scc_m}$ be the
strongly connected components of $\mathsf{GGraph}_{\tr, S}$.
Then, any topological ordering $\scc_{i_1} \cdot \scc_{i_1} \cdots \scc_{i_m}$ 
of the condensation (obtained after contracting the SCCs into single vertices) 
of this graph can be used to obtain
a sound reordering $\rho$, given by the concatenation
$\rho = \textsf{lin}(\scc_{i_1}) \textsf{lin}(\scc_{i_1}) \cdots \textsf{lin}(\scc_{i_m})$,
where $\textsf{lin}(\scc_{i_j})$ is the sequence obtained by arranging
the events in $\bigcup_{g \in \scc_{i_j}} \events{g}$ according to their order in $\tr$.
We let $\scgraincl{\tr}^S$ to denote all reorderings obtained from $\tr$
in this manner, using $S$ as the choice of scattered grains.
With this, we can now define a data race as follows.
A pair of conflicting events $(e_1, e_2)$ is said to be a scattered-grain race, or
a $\scatteredgrains$-race of $\tr$, if there is a choice of grains $S$
and an execution $\rho \in \scgraincl{\tr}^S$
such that $e_1$ and $e_2$ are consecutive in $\rho$.

The predictive power and soundness of data race detection based on the above notions of
commutativity can be summarized as follows.

\begin{restatable}{proposition}{commutativityComparePredictivePower}[Predictive Power of Commutativity Reasoning]
\proplabel{commutativity-soundness-predictive-power}
For any given program run $\tr$,  
the set of $\maz$-races of $\tr$ is strictly contained in the set 
$\grains$-races of $\tr$, which is itself strictly contained in the set of $\scatteredgrains$-races of $\tr$, each of which is a predictable race. 
\end{restatable}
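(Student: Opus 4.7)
The plan is to prove the proposition in three parts corresponding to the three claims: the two strict containments and the joint soundness claim. The overall strategy is to exploit the fact that each successive notion of commutativity is a literal generalization of the previous one, so the inclusions are immediate from definitions, while the strictness follows from the already discussed examples in \figref{com}. Soundness will be reduced to the soundness of the chosen independence relation (for $\maz$) and to the corresponding results of \cite{FarzanMathurPOPL2024} (for $\grains$ and $\scatteredgrains$).

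For the first inclusion $\maz\text{-races} \subseteq \grains\text{-races}$, I would observe that any Mazurkiewicz reordering $\rho \mazeq \tr$ is realized by the particular grain partition in which every grain is a singleton event. Under this partition the grain independence relation $\indrel_\grains$ coincides with $\indrel$, and hence $\rho \in \graincl{\tr}$. Thus any $\maz$-race is a $\grains$-race. For strictness, I would invoke the example in \figref{com}(a): the pair $\ev{T_1, \textcolor{red}{\wt(z)}}$ and $\ev{T_2, \textcolor{blue}{\rd(z)}}$ is a $\grains$-race (using the two circled critical sections as grains, as explained in the Introduction), but it is not a $\maz$-race, because any attempt to bring the two events adjacent via single-event swaps is blocked by the $\acq(l)/\rel(l)$ and $\wt(x)/\wt(x)$ events between them, which are not pairwise independent.

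For the second inclusion $\grains\text{-races} \subseteq \scatteredgrains\text{-races}$, I would note that a contiguous grain partition $\tr = g_1 g_2 \cdots g_k$ is the special case of a scattered-grain decomposition $S = \set{g_1, \ldots, g_k}$ in which every $g_i$ happens to be a contiguous block. In this case the induced grain graph $\mathsf{GGraph}_{\tr,S}$ admits the very same reorderings that grain commutativity does, so $\graincl{\tr} \subseteq \scgraincl{\tr}^S \subseteq \bigcup_{S'} \scgraincl{\tr}^{S'}$. Any $\grains$-race is therefore a $\scatteredgrains$-race. For strictness, I would use the example in \figref{com}(b) and the scattered (pink) grain choice described in the Introduction: the pair $\ev{T_1, \textcolor{red}{\wt(z)}}$ and $\ev{T_2, \textcolor{red}{\rd(z)}}$ is a $\scatteredgrains$-race, but it is not a $\grains$-race because any contiguous partition that puts the two racy events at a boundary is forced to keep intervening dependent events in between; I would argue this formally by a short case analysis on where a cut could be placed around the two red events in the given run.

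For soundness (each $\scatteredgrains$-race is a predictable race), the main work is already done by the soundness results of \cite{FarzanMathurPOPL2024}, which I would cite: for any sound choice of (scattered) grain independence, every $\rho \in \scgraincl{\tr}^S$ is a correct reordering of $\tr$, so consecutive conflicting events in such a $\rho$ constitute a predictable race by definition. Since $\mazcl{\tr} \subseteq \graincl{\tr} \subseteq \bigcup_{S} \scgraincl{\tr}^S \subseteq \creorderings{\tr}$, each kind of race identified above is predictable. The main obstacle I anticipate is the strictness argument for $\grains \subsetneq \scatteredgrains$: one must rule out \emph{every} contiguous grain partition of the run in \figref{com}(b), not just the natural one, which requires a careful but routine argument that whichever contiguous cut is chosen, either the resulting grain graph contains a cycle forcing the red events to be separated, or the intervening events to $x$ create a dependency forbidding adjacency.
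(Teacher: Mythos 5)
Your proposal is correct and follows essentially the same route as the paper's own (very terse) proof sketch: both reduce the containments to the chain $\mazcl{\tr} \subseteq \graincl{\tr} \subseteq \scgraincl{\tr}$ (singleton grains realize $\maz$-reorderings, contiguous partitions are a special case of scattered ones) and obtain soundness from $\scgraincl{\tr} \subseteq \creorderings{\tr}$ via the results of \cite{FarzanMathurPOPL2024}. The only difference is that you spell out the strictness of the two inclusions explicitly using the examples of \figref{com}(a) and \figref{com}(b) --- including the (routine but necessary) quantification over all contiguous partitions for the second one --- whereas the paper's proof sketch leaves strictness implicit, relying on the discussion of those same examples in the Introduction.
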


In~\cite{FarzanMathurPOPL2024}, it is argued how commutativity reasoning can yield 
efficient algorithms for determining \emph{causal concurrency} between events~\cite{FarzanMathurPOPL2024}.
These algorithms can be modified to also obtain efficient algorithms
for data race prediction, giving us the following holy grail result of monitorability;
here, we assume $|\labs|$ is constant.

\begin{restatable}{theorem}{commutativityRaceConstantSpace}
\thmlabel{commutativity-race-constant-space}
Let $C \in \set{\maz, \grains, \scatteredgrains}$ be one of the commutativity 
granularities discussed above.
The problem of checking if an execution $\tr$ has a $C$-race can be solved
using a streaming algorithm that takes constant space and $O(|\tr|)$ time.
\end{restatable}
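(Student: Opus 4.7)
The plan is to reduce $C$-race prediction, for each $C \in \set{\maz, \grains, \scatteredgrains}$, to a causal concurrency query between pairs of conflicting events, and then invoke the streaming causal-concurrency algorithms established in \cite{FarzanMathurPOPL2024}. The central observation is that a pair of conflicting events $(e_1, e_2)$ is a $C$-race exactly when $e_1$ and $e_2$ are \emph{causally concurrent} under the dependency relation dual to $\indrel$, $\indrel_{\grains}$, or the dependency induced by a scattered-grain choice $S$, respectively: in that case, and only in that case, can the two events be made adjacent by the admissible commutations of granularity $C$.

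For $C = \maz$, I would invoke the classical happens-before vector-clock style algorithm. Since $|\labs|$ is constant, and in particular $|\threads|$ is constant, the vector clocks maintained per thread, lock, and memory location are constant-sized, and each event triggers an $O(1)$ update. An incoming access $e_2$ is flagged as racy with some prior conflicting access $e_1$ exactly when the clock attached to the latest conflicting access from another thread on $\MemOf{e_2}$ is not dominated by the clock reached just before $e_2$; we need only retain, per memory location, the latest conflicting writer and readers per thread, which fits in constant space.

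For $C = \grains$, the grain dependency relation is captured by a constant-sized summary per lock and per memory location recording the most recent relevant grain identifier, following the approach of \cite{FarzanMathurPOPL2024}. I would maintain these summaries incrementally and, upon each incoming conflicting event, check whether the grain containing it is causally concurrent with a tracked prior conflicting grain; this is again an $O(1)$ query because only constantly many conflicting summaries per location are relevant. The case of $C = \scatteredgrains$ is subtler, since scattered grains need not be contiguous and are parameterized by the choice $S$; the plan here is to use the canonical maximal scattered-grain partition provided in \cite{FarzanMathurPOPL2024}, whose induced grain graph witnesses a race whenever \emph{some} $S$ would, and to maintain the SCCs of $\mathsf{GGraph}_{\tr, S}$ incrementally by union-find over a bounded active front.

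The main obstacle I anticipate is arguing \emph{constant} (rather than merely $|\tr|$-bounded) space in the scattered-grain case: a priori, a scattered grain may span arbitrarily far back into the run, and its SCC in $\mathsf{GGraph}_{\tr, S}$ could absorb more and more events as the stream lengthens. The decisive step will be showing that, because $|\threads|$ and $|\mems \cup \locks|$ are constant, at any prefix of the stream only a constant number of \emph{active} grains can carry information that affects future race decisions --- namely, one per (thread, object) combination plus their conflicting-event fingerprints. Once an SCC's last possible contributing event has been witnessed (as signalled by the summaries on the objects it touches), the SCC can be contracted and discarded except for a constant-size residue. Combining this bounded active front with the $O(1)$-per-event updates above yields a streaming algorithm running in constant space and $O(|\tr|)$ time, as claimed.
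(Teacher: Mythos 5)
Your reduction of $C$-race detection to causal concurrency is sound only for $C = \maz$, where the classical fact that two events are unordered in the trace partial order iff some linearization places them adjacently does hold; the paper treats that case exactly as you do, as folklore. For $C \in \set{\grains, \scatteredgrains}$, however, your central claim --- that $(e_1,e_2)$ is a $C$-race ``exactly when'' the two events are causally concurrent under the granular dependency --- is false, and closing this gap is precisely what the paper's proof is about. Grains move as atomic blocks, so two events sitting in the \emph{interiors} of two commuting grains can be made concurrent but can never be made adjacent. The paper's argument therefore characterizes a $\scatteredgrains$-race by additional positional and separation conditions on the strongly connected components $S_1, S_2$ of the grain graph containing the two grains: depending on which of the three path-cases between $S_1$ and $S_2$ holds, $e_1$ must be the last (or first) event of $\bigcup_{g\in S_1}\events{g}$, $e_2$ the first (or last) event of $\bigcup_{g\in S_2}\events{g}$, and no other SCC may be sandwiched between them. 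It is these extra conditions, not a bare concurrency query, that the constant-space grain-graph monitor of the cited work is adapted to check; the paper is explicit elsewhere that concurrency and simultaneity are different problems.

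Second, your plan for $\scatteredgrains$ leans on a ``canonical maximal scattered-grain partition whose induced grain graph witnesses a race whenever some $S$ would.'' No such complete canonical partition is supplied by the cited work or by this paper: the definition quantifies existentially over the choice $S$, and different races in the same run are witnessed by different, mutually incompatible choices of grains (the three races in \figref{exp-commutativity} already require three different grain choices). The paper instead has the automaton \emph{guess} $S$ nondeterministically and verify the conditions above on the guessed grain graph. Relatedly, the constant-space worry you flag for scattered grains --- that an SCC could absorb unboundedly many events --- is exactly the point your proposal acknowledges but does not resolve; the paper discharges it by reducing the race question to a fixed, finite set of checkable conditions on the guessed grain graph and delegating their verification to the existing constant-space monitor, rather than by maintaining SCCs with union-find over a bounded ``active front.''
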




\subsection{Prefix Reasoning}
\seclabel{recap-prefixes}

Reasoning based on prefixes can generally be used for any specification
that asks if a set of events are simultaneously enabled in some correct reordering. Generic specifications like this are useful for predicting races, but also other things like deadlock detection~\cite{Tunc2023deadlock}.

Prefix-based reasoning looks for an appropriate
subset $S \subseteq \events{\tr}$ of events of the execution $\tr$ 
such that $S$ is downward closed w.r.t. $\po{\tr}$ (hence `prefix'),
the two given conflicting events $e_1$ and $e_2$
are enabled in $S$, and further, there is a linearization of $S$ that is a correct reordering of $\tr$.
A careful reader may observe that, as such, this broad description
of prefix reasoning in fact includes the entire class of predictive races,
and thus, in its full generality, looking for such a set $S$ and its linearization
is intractable~\cite{Mathur2020b}.
In response, recent works have identified specific classes of 
\emph{linearizations} for the set $S$,
that help retain tractability~\cite{Mathur21,Tunc2023deadlock,Ang2024CAV}.
Here, we present the otherwise disparate notions in a uniform, systematic manner
as instances of prefix reasoning, and also
introduce a new class of races (\defref{seqp-race}) based as another instance of this uniform presentation.

\myparagraph{Synchronization-preserving prefixes and data races}
	Synchronization-preserving (or $\syncp$ for short) data races, 
	recently identified in~\cite{Mathur21} are those that are enabled
	at the end of a $\syncp$-prefix.
	Formally, a $\syncp$-prefix $\rho$ of an execution $\tr$
	is a correct reordering $\rho$ of $\tr$ such that
	for any two acquire events $a_1, a_2 \in \events{\tr}$ 
	on the same lock
	(i.e., $\OpOf{a_1} = \OpOf{a_2} = \acq$, 
	$\LockOf{a_1} = \LockOf{a_2} = \lk$), 
	whenever $a_1, a_2 \in \events{\rho}$, 
	then, $a_1 \trord{\rho} a_2$ iff $a_1 \trord{\tr} a_2$. 
	In other words, a $\syncp$-prefix preserved the order
	of same-lock acquire events that are retained in the reordering,
	but may flip the relative order between other events (including conflicting pairs of events).
	A pair of conflicting events $(e_1, e_2)$ in $\tr$ 
	is a $\syncp$-race of $\tr$ 
	if there is a $\syncp$-prefix
	$\rho$ of $\tr$ in which $e_1$ and $e_2$ are both $\tr$-enabled.
	Recall the example execution illustrated in \figref{syncp}. The green curve marks a $\syncp$-prefix, after which the
	two events $\ev{T_1, \wt(x)}$ and $\ev{T_2, \wt(x)}$ are enabled. 
	$\syncp$-races can be detected using a linear time and linear space algorithm~\cite{Mathur21}.

\myparagraph{Conflict-preserving data races}
	An execution $\rho$ is said to be a conflict-preserving prefix, 
	or $\confp$-prefix,
	of execution $\tr$ if $\rho$ is a correct reordering of $\tr$
	and further $\rho \mazeq \proj{\tr}{\events{\rho}}$, where
	$\proj{\tr}{E}$ is the projection of $\tr$ to the set $E$.
	That is, the relative order between events of $\rho$ and the events
	of $\tr' = \proj{\tr}{\events{\rho}}$ is the same if these events are dependent;
	otherwise their relative order may change.
	A $\confp$-race of $\tr$ is then a pair of conflicting events 
	$(e_1, e_2)$ in $\tr$ 
	such that there is a $\confp$-prefix
	$\rho$ of $\tr$ in which $e_1$ and $e_2$ are both $\tr$-enabled~\cite{Ang2024CAV}.
	Observe that every $\confp$-prefix is also a $\syncp$-prefix and thus
	every $\confp$-race is also a $\syncp$-race by definition.
	More importantly though, for the case of data race prediction, the smaller
	class of $\confp$-prefixes is sufficient to, in fact, detect all $\syncp$-races.
	That is, every $\syncp$-race is also a $\confp$-race~\cite{Ang2024CAV}.
	Indeed, in the run of \figref{syncp},
	the single event prefix marked with the (green) curve is also a $\confp$-prefix
	and thus the events $\ev{T_1, \wt(x)}$ and $\ev{T_2, \wt(x)}$ also constitute a $\confp$-race.
	Finally, $\confp$-races can be detected in constant space and linear time~\cite{Ang2024CAV}.
	While theoretically more efficient than the linear space algorithm of $\syncp$-races,
	the proposed constant space automata-theoretic algorithm for detecting $\confp$-races
	relies on an on-the-fly membership check in an NFA with large state space,
	and can be slow in practice when the size of the alphabet $\labs$ 
	is moderately large~\cite{Ang2024CAV}.


\myparagraph{Race prediction using simpler prefixes}
	In principle, for a run $\tr$, the set of its $\syncp$-prefixes of
	$\tr$ is strictly larger than the set of its $\confp$-prefixes,
	and yet each race that can be detected using a $\syncp$-prefix can also be
	detected using a $\confp$-prefix. 
	In this work we show that such races can in fact be detected by an even smaller class of prefixes.
	In essence, this class of prefixes simply preserves the order of
	events as in the original execution and disallow all reorderings between
	events. We use \emph{sequential-order-preserving prefix} or $\seqp$-prefix
	to denote each such prefix (formally defined next).
	We denote the class of races witnessed using these prefixes 
	simply as \emph{prefix}-races\footnote{We choose the simpler nomenclature of \emph{prefix}-races instead of
	something like $\seqp$-races. As we show later in \propref{seqp-prefix-race-is-syncp-race}, all
	prior known classes of races ($\syncp$-races and $\confp$-races) 
	based on  prefix reasoning are subsumed by this class.
	In light of this, we decided to reduce the burden of additional cumbersome qualifiers 
	to the name of this class of races and opt for a simpler name that
	accurately represents the true expressive power of this class.
	}.
	\begin{definition}[Sequential-Order-Preserving prefix and prefix-races]
	\deflabel{seqp-race}
	An execution $\rho$ is a sequential-order-preserving prefix, $\seqp$-prefix, of execution $\tr$ if
	$\rho$ is a correct reordering of $\tr$ and for every $e, e' \in \events{\rho}$,
	we have $e \trord{\rho} e'$ iff $e \trord{\tr} e'$.
	A pair of conflicting events $(e_1, e_2)$ is said to be a prefix-race 
	of $\tr$ if there is a $\seqp$-prefix $\rho$ of $\tr$ such that both $e_1$
	and $e_2$ are $\tr$-enabled in $\rho$.
	\end{definition}
	Since a $\seqp$-prefix is also a $\confp$-prefix (which in turn is also a $\syncp$-prefix), 
	every prefix-race is also a $\confp$-race (and thus also a $\syncp$-race). 
	We show that even the converse is true:
	
	\begin{restatable}{proposition}{seqpRaceIsSyncpRace}
	\proplabel{seqp-prefix-race-is-syncp-race}
	Let $\tr$ be an execution and let $e_1$ and $e_2$ be conflicting events of $\tr$.
	$(e_1, e_2)$ is a prefix-race iff it is a $\confp$-race iff it is a $\syncp$-race.
	\end{restatable}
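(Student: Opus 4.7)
Two of the three implications are already available from the excerpt. Every $\seqp$-prefix is by definition also a $\confp$-prefix (and a $\syncp$-prefix), so every prefix-race is a $\confp$-race; and the excerpt records that every $\syncp$-race is a $\confp$-race, while the converse is immediate from the inclusion of $\confp$-prefixes in $\syncp$-prefixes. Hence the only new content to establish is that every $\confp$-race is a prefix-race.

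Given a $\confp$-race $(e_1,e_2)$ witnessed by a $\confp$-prefix $\rho$, I would set $\rho' := \proj{\tr}{\events{\rho}}$, the linearization of $\events{\rho}$ that follows the order of $\tr$, and show that $\rho'$ is a $\seqp$-prefix in which $e_1$ and $e_2$ are still $\tr$-enabled. The definition of $\confp$-prefix gives exactly $\rho \mazeq \rho'$, so the approach is to use $\mazeq$ to transport the required properties of $\rho$ over to $\rho'$.

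Three things then need to be checked. First, that $\rho'$ is a correct reordering of $\tr$: since $\mazeq$ swaps only adjacent events $(a,b)\in \indrel$ from distinct threads, thread projections are unchanged and $\proj{\rho'}{t} = \proj{\rho}{t}$ remains a prefix of $\proj{\tr}{t}$; since $\indrel$ also forbids swapping two events on the same location when at least one is a write, the reads-from map is preserved and $\rf{\rho'} = \rf{\rho} \subseteq \rf{\tr}$; well-formedness of the memory-access pattern per location and the lock pattern per thread survives for the same two reasons (no swap of same-lock events, no swap of a write with a conflicting access on the same memory). Second, that $\rho'$ is a $\seqp$-prefix: this holds by construction, since $\rho'$ orders its events exactly as in $\tr$. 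Third, that $e_1, e_2$ remain $\tr$-enabled: $\events{\rho'} = \events{\rho}$, and $\tr$-enablement is a function only of this event set.

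The main obstacle, and the only step that requires care, is the one-step preservation lemma for $\mazeq$: that a single adjacent swap of $a,b$ with $(a,b) \in \indrel$ preserves both the correct-reordering status and well-formedness. This is a short case analysis that uses precisely the definition of $\indrel$ from the paper (different threads, and either different objects or both reads), and extends to $\rho \mazeq \rho'$ by induction on the number of swaps. Once this lemma is in hand, the proposition follows.
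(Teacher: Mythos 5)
Your proposal is correct and follows essentially the same route as the paper: the paper also reduces everything to showing that a $\confp$-race is a prefix-race, takes the witness $\rho' = \proj{\tr}{\events{\rho}}$, and observes that $\rho'$ is a correct reordering preserving the order of $\tr$ with the same events (hence the same enabled events). The only difference is that you spell out the swap-by-swap induction justifying that $\mazeq$ preserves correct-reordering status, which the paper leaves as an observation.
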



	As a result, a prefix-race can also be detected using a streaming
	constant space linear time algorithm, since $\confp$-races were shown
	to admit such an algorithm as well~\cite{Ang2024CAV}.

	\begin{restatable}{theorem}{seqpRaceConstantSpace}
	The problem of checking if an execution $\tr$ has a prefix-race can be solved
	using a streaming algorithm that takes constant space and $O(|\tr|)$ time.
	\end{restatable}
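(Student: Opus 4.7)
The plan is to derive this theorem as an immediate corollary of \propref{seqp-prefix-race-is-syncp-race} together with the constant-space, linear-time streaming algorithm for $\confp$-races established in \cite{Ang2024CAV}. First, I would appeal to \propref{seqp-prefix-race-is-syncp-race} to reduce the decision problem ``does $\tr$ admit a prefix-race?'' to the decision problem ``does $\tr$ admit a $\confp$-race?''. Since the proposition establishes pointwise equivalence on every pair of conflicting events $(e_1, e_2)$ in $\tr$, the two decision problems are literally the same: $\tr$ has a prefix-race iff $\tr$ has a $\confp$-race.

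Next, I would invoke the streaming algorithm of \cite{Ang2024CAV} as a black box: run it on the input execution $\tr$, consume events one at a time, and output ``yes'' the moment it reports a $\confp$-race (and ``no'' at the end of the stream otherwise). Correctness is inherited from the above equivalence, and the complexity bounds -- constant space and $O(|\tr|)$ total time -- transfer unchanged since we do no additional bookkeeping beyond what the $\confp$-race algorithm already maintains.

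There is no genuine technical obstacle in this theorem itself; the nontrivial content was already discharged in \propref{seqp-prefix-race-is-syncp-race}, whose proof reconciles the three prefix-style notions of races by showing that any $\syncp$-prefix witness can be replayed as a $\seqp$-prefix witness on an appropriate subset of events. If one wanted to avoid citing \cite{Ang2024CAV} as a black box, the obstacle would shift to redoing the automata-theoretic construction for $\confp$-races; but within the scope of this paper that construction is taken as given, so no further argument is required.
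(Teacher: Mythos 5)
Your proposal is correct and matches the paper's proof exactly: the paper likewise derives this theorem as an immediate consequence of \propref{seqp-prefix-race-is-syncp-race} combined with the constant-space, linear-time algorithm for $\confp$-races from \cite{Ang2024CAV}. No further comment is needed.
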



	Though both constant space algorithms, the algorithm for prefix-races is simpler than
	that for $\confp$-races since it does not have to guess a $\mazeq$-equivalent
	reordering of a selected set of events.
	At a high level, this algorithm essentially `guesses' a pair $(e_1, e_2)$
	of conflicting events, and also a prefix $\rho$
	by guesses the events of $\rho$,
	and checks if the guess is consistent --- the write event corresponding to
	each read event is in $\rho$ and for each lock $\lk$, only the last acquire event on $\lk$
	is allowed to be unmatched in $\rho$ --- and if the two events $e_1$ and $e_2$ 
	are enabled at the end of $\rho$.
	Since the guesses can be made in constant space, the result follows.


\myparagraph{Comparison with commutativity reasoning}
	$\syncp$-based (and thus also $\confp$ and $\seqp$-based)
	reasoning is known to be more permissive than reasoning based on
	event-based commutativity. 
	That is, every $\maz$-race is also a $\syncp$-race (alternatively, $\confp$-race or 
	prefix-race), but the converse is not true~\cite{Mathur21}.
	Does this change when we enhance the commutativity granularity
	from event-based to the more permissive notions of grain-based
	commutativity?
	Here, we show that prefix based reasoning strictly
	subsumes all the commutativity-based reasonings we discussed
	in \secref{recap-commutativity}:

	\begin{restatable}{theorem}{PrefixSubsumesCommutativity}
	\thmlabel{prefix-subsumes-commutativity}
	Let $C \in \set{\maz, \grains, \scatteredgrains}$ be a commutativity granularity.
	For every execution $\tr$, the set of $C$-races is strictly contained in the set of prefix-races.
	\end{restatable}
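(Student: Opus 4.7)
The plan is to establish containment and strictness separately. By \propref{commutativity-soundness-predictive-power}, the three commutativity classes satisfy $\maz$-races $\subseteq$ $\grains$-races $\subseteq$ $\scatteredgrains$-races, so it suffices to (i) show that every $\scatteredgrains$-race is a prefix-race, and (ii) exhibit a single execution that has a prefix-race that is not a $\scatteredgrains$-race; the same example then automatically witnesses strictness for all three choices of $C$.

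For (i), suppose $(e_1, e_2)$ is a $\scatteredgrains$-race of $\tr$, witnessed by scattered grains $S$ and a reordering $\rho \in \scgraincl{\tr}^S$ with $\rho = \rho_1 \cdot e_1 \cdot e_2 \cdot \rho_2$ (the symmetric case is identical). My plan is to show that $\rho_1$ itself is a $\syncp$-prefix of $\tr$ in which both $e_1$ and $e_2$ are $\tr$-enabled; then by \propref{seqp-prefix-race-is-syncp-race}, $(e_1, e_2)$ is also a prefix-race, which is exactly what we need. Enabledness is direct: any $\po{\tr}$-predecessor of $e_i$ must lie in $\proj{\rho}{\ThreadOf{e_i}}$ (a prefix of $\proj{\tr}{\ThreadOf{e_i}}$) before $e_i$, hence in $\events{\rho_1}$. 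That $\rho_1$ itself is a correct reordering of $\tr$ is a standard prefix-closure property inherited from $\rho$: per-thread projections, reads-from preservation, and well-formedness of each per-lock projection all survive truncation at a prefix.

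The key step---and the one I expect to be the main obstacle---is verifying the $\syncp$ property for $\rho_1$: for any two acquires $a, a'$ on the same lock $\lk$ with both in $\events{\rho_1}$, we need $a \trord{\rho_1} a'$ iff $a \trord{\tr} a'$. I will argue this directly from the structure of $\scatteredgrains$-reorderings. Any two same-lock acquires are dependent, so in $\mathsf{GGraph}_{\tr, S}$ there is either an edge (in the $\tr$-order direction) between the distinct grains containing them, or they fall inside a single scattered grain. In the former case, the SCC of the earlier acquire must precede the SCC of the later one in every topological sort of the condensation; in the latter case, the linearization $\textsf{lin}(\cdot)$ orders them by $\tr$-order inside the shared SCC. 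Either way, same-lock acquires appear in $\tr$-order in $\rho$, and hence also in its prefix $\rho_1$, establishing the $\syncp$ condition.

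For (ii), I will reuse the execution in \figref{syncp} from the introduction. The single-event $\seqp$-prefix marked by the green cut-off leaves both red $\wt(x)$ events $\tr$-enabled, so they constitute a prefix-race. As argued informally in \secref{intro}, no $\scatteredgrains$ witness exists for this race: the dependence structure forces any partition into scattered grains to keep at least one event wedged between the two red accesses in every topological linearization of the condensation of $\mathsf{GGraph}_{\tr, S}$. Formalizing this reduces to a short case analysis over how the two critical sections on $\lk$ can be assigned to scattered grains, which I expect to be routine once the dependence edges between $T_1$'s and $T_2$'s lock events are made explicit.
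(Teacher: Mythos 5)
Your containment argument breaks at exactly the step you flagged as the main obstacle. For $C = \maz$ the reasoning is fine: at the event level any two same-lock acquires are dependent under $\indrel$, so every $\rho \mazeq \tr$ keeps them in $\tr$-order and your prefix $\rho_1$ is indeed a $\syncp$-prefix. But for $C \in \set{\grains, \scatteredgrains}$ the premise that two same-lock acquires force an edge (or a shared grain) in $\mathsf{GGraph}_{\tr, S}$ is false: the grain independence relation is the \emph{largest sound} one, and its entire purpose is to let two grains that each contain a complete critical section on the same lock commute --- this is precisely how the $\grains$-race in \figref{com}(a) is witnessed. So the two acquires can appear in $\rho$ in the opposite order from $\tr$, and then $\rho_1$ violates the $\syncp$ condition. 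Re-linearizing $\events{\rho_1}$ in $\tr$-order does not rescue the argument either: if $\rho$ swaps two critical sections on $\lk$ and the cut falls inside the originally earlier one, the $\tr$-order linearization contains two overlapping critical sections and is not well-formed. The paper's proof is organized around exactly this difficulty: it never uses the order of $\rho$, but instead defines a set $E$ consisting of all grains in SCCs having a path to $\scc(e_1)$ or $\scc(e_2)$, together with the part of $\scc(e_1)$ preceding $e_1$, linearizes $E$ in $\tr$-order, and then separately proves --- using the grain-graph edge from a dangling release to any later same-lock acquire --- that at most one acquire per lock is left open and that it is the $\trord{\tr}$-last one. That argument is the real content of the containment direction and is absent from your proposal.

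For strictness, your choice of \figref{syncp} is plausible (the introduction asserts that no commutativity technique catches that race), but you defer the verification to a ``routine case analysis.'' The paper instead uses the execution $\trsyncpnotcomm$, where the separation is one line: every $\rho \in \scgraincl{\tr}^{S}$ contains \emph{all} events of $\tr$ and preserves reads-from, whereas any correct reordering witnessing that race must drop $\ev{T_1, \rd(y)}$ (otherwise the read would observe the later $\wt(y)$). ``The race requires deleting an event'' is a far more robust separating principle than arguing about which events remain wedged between the racy pair, and you should adopt it.
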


\begin{figure}[t]
	\centering
	\begin{subfigure}{0.28\textwidth}
		\centering
		\includegraphics[width=0.68\textwidth]{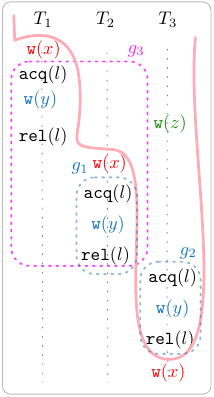}
		\caption{Commutativity reasoning}
		\figlabel{exp-commutativity}
	\end{subfigure}
	\begin{subfigure}{0.7\textwidth}
		\centering
		\includegraphics[width=0.8\textwidth]{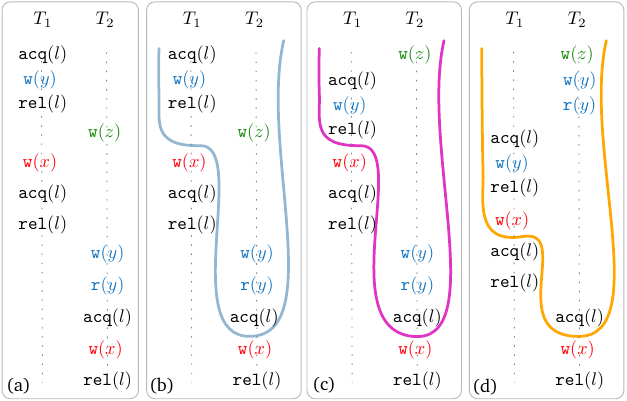}
		\caption{Prefix reasoning}
		\figlabel{exp-prefixes}
	\end{subfigure}
	\caption{Examples of data races detected by commutativity and prefix reasoning}
	\vspace{-15pt}
\end{figure}

\begin{example}
	Here we provide two example executions to illustrate and compare commutativity and prefix reasoning.
	First consider the three red $\wt(x)$ events in \figref{exp-commutativity}.
	The pair $(\ev{T_1, \textcolor{red}{\wt(x)}}, \ev{T_2, \textcolor{red}{\wt(x)}})$ is a $\maz$-race as the first critical section of $T_1$ is completely independent with $\ev{T_2, \textcolor{red}{\wt(x)}}$.
	The pair $(\ev{T_2, \textcolor{red}{\wt(x)}}, \ev{T_3, \textcolor{red}{\wt(x)}})$ is a $\grains$-race deduced by the commutativity of two blue grains $g_1$ and $g_2$.
	However, the pair $(\ev{T_1, \textcolor{red}{\wt(x)}}, \ev{T_3, \textcolor{red}{\wt(x)}})$ can only be detected by a scattered grain $g_3$ and a contiguous grain $g_2$.
	Notably, all the three races are prefix-races by the prefix marked as pink.
	Then, in \figref{exp-prefixes}, we show that the pair $(\ev{T_1, \textcolor{red}{\wt(x)}}, \ev{T_2, \textcolor{red}{\wt(x)}})$ is a prefix-race (thus also a $\seqp$, $\confp$, and $\syncp$ race) by a $\seqp$-prefix in (b), a $\confp$-prefix in (c), and a $\syncp$ race in (d) respectively.
	We also note that this is not $\maz$(or $\grains$, $\scatteredgrains$)-race due to the dependency between the second critical section in $T_1$ and $\ev{T_2, \acq(l)}$.
\end{example}


\section{Combining Commutativity and Prefix Reasoning}
\seclabel{combining}

While both commutativity and prefix reasoning offer the promise
of monitorability, i.e., streaming constant space algorithms,
the predictive power they offer tends to be limited.
In this work, we investigate how to enhance the power of these two
reasoning schemes. 
In particular, can we combine the two and arrive at a more powerful
predictive data race detection algorithm?
The focus of this section is to discuss a number of ways to combine commutativity and 
prefix reasoning that seem intuitive but  simply do not work in 
the sense that no additional expressive power in race detection can be gained from the combination. 



\myparagraph{A vanilla combination}{
A simple approach to this combination can be to design an algorithm
that simply checks for both types of races simultaneously.
That is, we design an algorithm $\mathcal{A}$ that,
on input $\tr$ returns true iff either $\tr$ has a $C$-race 
(for some $C \in \set{\maz, \grains, \scatteredgrains}$)
or if $\tr$ has a prefix-race.
As~\thmref{prefix-subsumes-commutativity}, suggests, however, this will yield no new expressive power since
prefix reasoning alone can discover the entire set of races.
}

\myparagraph{Generalizing prefixes using commutativity}
Recall the race illustrated in \figref{syncp-miss}(a), that is missed by prefix reasoning. 
Indeed, to witness the race on the two $\wt(x)$ events,
the only viable prefix is the one that contains exactly
the set of events $\set{\ev{T_1, \acq(l)}, \ev{T_2, \acq(l)}, \ev{T_2, \rel(l)}}$.
Unfortunately though, the only $\seqp$-prefix comprising of
exactly these three events cannot be well-formed since the earlier critical
section on lock $l$ must be unmatched and thus
overlap with the later critical section in this prefix.
Nevertheless, this example does 
suggest a different  approach to a combination of prefix and commutativity
reasoning for data race prediction --- generalization
of the space of prefixes by augmenting them 
through commutativity reasoning. 
In our example run \figref{syncp-miss}(a),
reordering the lock block of thread $T_2$ 
to execute before the that of thread $T_1$ would 
possibly be an instance of such a generalization.

As a first step to formalize this idea, 
we define the class of prefix races that can be obtained by 
generalizing prefixes with the different commutativity granularities 
we discussed in \secref{recap-commutativity}.
\begin{definition}[Commutativity-augmented-prefix races.]
Let $C \in \set{\maz, \grains, \scatteredgrains}$ be a choice of
commutativity granularity. 
For a run $\tr$,
we say that a run $\rho$ is a $C$-augmented prefix of $\tr$ if 
there is a $\seqp$-prefix $\rho'$ of
$\tr$ such that $\rho \in \eqcl{\rho'}{C}$.
Further, a pair $(e_1, e_2)$ of conflicting events
of $\tr$ is a $C$-augmented prefix-race if there is a $C$-augmented
prefix $\rho$ of $\tr$ in which both $e_1$ and $e_2$ are $\tr$-enabled.
\end{definition}

We are now sufficiently equipped to ask --- (1) 
\emph{how large the class of commutativity-augmented prefix races are}, and (2) 
\emph{how efficiently such data races can be detected?}. 
In light of answering question (2),
our focus is intentionally limited to
three types of commutativity reasonings (outlined in \secref{recap-commutativity})
for which known efficient algorithms exist.
Unfortunately, unlike the intuition from failed 
instances like the example in \figref{syncp-miss}(a), the answer to 
(1) is immediately discouraging under these constraints.
 That is, augmenting any of the prefix classes with 
any of the three types of commutativity reasoning discussed 
in~\secref{recap-commutativity} does not add any extra predictive power for data race prediction.

Indeed, this follows straightforwardly from the
definition when $C = \maz$ that of $\seqp$-prefixes, i.e., 
any $\maz$-augmented prefix is just
a $\confp$-prefix and thus $\maz$-augmented prefix-races are simply $\confp$-races,
which are also prefix-races.
Here, we show that this observation extends to all other commutativity granularities.
This is because commutations fundamentally do not change 
enabled-ness --- a reordering $\rho$ obtained by commuting a $\seqp$-prefix 
$\rho'$ has the same set of events enabled as $\rho'$.
That is, we have:

\begin{theorem}
\thmlabel{augmentating-prefixes-does-not-help}
Let $C \in \set{\maz, \grains, \scatteredgrains}$.
Let $\tr$ be an execution and $e_1, e_2$ be conflicting events in $\tr$.
The pair $(e_1, e_2)$ is an $C$-augmented prefix race
of $\tr$ iff $(e_1, e_2)$ is a prefix-race of $\tr$.
\end{theorem}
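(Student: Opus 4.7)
The plan is to prove the biconditional by handling each direction separately, with the forward direction being the substantive one. The intuition stated in the surrounding text pins down the key principle: commutativity-style reorderings preserve the underlying set of events, so they cannot change which events are $\tr$-enabled at the boundary of the prefix.

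For the easy direction ($\Leftarrow$), if $(e_1,e_2)$ is a prefix-race witnessed by some $\seqp$-prefix $\rho'$ of $\tr$, then $\rho' \in \eqcl{\rho'}{C}$ by reflexivity of $\eqcl{\cdot}{C}$ for each $C \in \set{\maz,\grains,\scatteredgrains}$, so $\rho'$ itself is a $C$-augmented prefix, and the same witnesses $(e_1,e_2)$ as a $C$-augmented prefix-race.

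For the forward direction ($\Rightarrow$), I would take the given $C$-augmented prefix $\rho$ of $\tr$ witnessing the race, unpack the definition to obtain a $\seqp$-prefix $\rho'$ of $\tr$ with $\rho \in \eqcl{\rho'}{C}$, and then argue that $\rho'$ itself already witnesses $(e_1,e_2)$ as a prefix-race. The proof reduces to establishing the event-set preservation lemma: for each $C \in \set{\maz,\grains,\scatteredgrains}$ and each $\rho \in \eqcl{\rho'}{C}$ we have $\events{\rho} = \events{\rho'}$. For $C = \maz$ this is immediate as commuting adjacent independent letters is a permutation of the word. For $C = \grains$ the same holds since grains partition $\events{\rho'}$ and a reordering of grains still uses each event exactly once. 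For $C = \scatteredgrains$ the scattered grains also partition $\events{\rho'}$ and the reordering arises from a topological ordering of the condensation of the grain graph, which again permutes the events. Once $\events{\rho} = \events{\rho'}$ is in hand, the $\tr$-enabledness conditions for $e_1,e_2$ (namely $e_1,e_2 \not\in \events{\rho}$ and all $\po{\tr}$-predecessors lie in $\events{\rho}$) transfer verbatim from $\rho$ to $\rho'$.

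The one subtlety to address is that one must also know that $\rho'$ is itself a valid witness, i.e., it is a well-formed correct reordering of $\tr$; this is automatic because $\rho'$ was chosen to be a $\seqp$-prefix of $\tr$, so by \defref{seqp-race} it is already a correct reordering. Thus no work beyond the event-set equality is required. The main, and really only, obstacle is the careful verification of the event-set preservation for $\scatteredgrains$, since the definition of $\scgraincl{\tr}^S$ is the most elaborate of the three; but the containment $\events{\tr} = \biguplus_i \events{g_i}$ from the definition of a scattered-grain partition, together with the fact that topological orderings of the condensation use every SCC exactly once, closes this case. With event-set preservation settled uniformly for all three $C$, the theorem follows.
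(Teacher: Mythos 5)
Your proof is correct and follows essentially the same route as the paper, whose entire argument is the observation that reorderings of any commutativity granularity merely permute (and hence preserve) the event set of the $\seqp$-prefix, while $\tr$-enabledness of $e_1$ and $e_2$ depends only on that event set. The paper dispatches the case $C=\maz$ slightly differently by identifying $\maz$-augmented prefixes with $\confp$-prefixes and invoking the earlier equivalence with prefix-races, but your uniform event-set-preservation argument covers that case equally well.
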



\begin{remark}
A careful reader may observe that, in principle, 
one can further extend the definition of commutativity augmented prefix races
by generalizing the class of prefixes beyond $\seqp$ prefixes
to include $\confp$ or $\syncp$ prefixes, 
i.e., by defining a $C$-augmented $P$-race
($C \in \set{\maz, \grains, \scatteredgrains}$, $P \in \set{\seqp, \confp, \syncp}$),
which is a pair of events $(e_1, e_2)$
in execution $\tr$ for which there is a $P$-prefix $\rho'$ of $\tr$
and a $\rho \in [\rho']_{C}$ such that both $e_1$ ad $e_2$ are $\tr$-enabled in $\rho$.
Unfortunately, the observation in \thmref{augmentating-prefixes-does-not-help}
extends to this class of races as well, for the same reasons.
That is, every $C$-augmented $P$-race is a prefix race, for
each $C \in \set{\maz, \grains, \scatteredgrains}$ and for each $P \in \set{\seqp, \confp, \syncp}$.
\end{remark}


In \secref{stratification}, we show that a more comprehensive combination
that enhances the class of predictive reasoning by
using commutativity reasoning beyond the prefix identified by one of the previously discussed classes.

\section{Stratifying Prefix and Commutativity Reasonings}
\seclabel{stratification}

Recall the example run in Figure \ref{fig:syncp-miss}(b), and consider the prefix marked with the red curve. We argued that the $\wt(x)$ of $T_3$  is not enabled after this prefix, and therefore the prefix cannot witness the race. If we consider the $\seqp$-prefix marked by the green curve, however, all the remaining events including the blue $\rd(y)$ and the red $\wt(x)$ events of thread $T_3$ are enabled as a sequence after this prefix. Then, in the remaining enabled sequence, the two red $\wt(x)$s are simply an example of $\maz$-race. This motivates the key concept for considering the races after a prefix to be the sequence of events that are {\em executable} following a given $\seqp$-prefix, in which we can {\em predict} a race.


\begin{definition}[Enabled Sequence of Events]
    Given a correct reordering $\rho$ of a run $\sigma$. 
    A subsequence $\tau$ of $\sigma$ is enabled after $\rho$ if 
    \begin{enumerate*}
        \item $\events{\rho} \cap \events{\tau} = \emptyset$,
        \item $\events{\rho}\cup\events{\tau}$ is $\po{\sigma}$-closed,
        \item $\rho\cdot\tau$ is well-formed, and
        \item for every read event $e\in\events{\tau}$ such that $\rf{\rho\cdot\tau}(e) \neq \rf{\sigma}$, it is the last event of its thread in $\tau$, i.e., for all $e'\in\events{\rho\cdot\tau}$ such that $\ThreadOf{e'} = \ThreadOf{e}$, we have $(e', e)\in\po{\sigma}$.
    \end{enumerate*}
\end{definition}

\begin{definition}[$\seqp$-suffix of $\rho$]
Given a $\seqp$-prefix $\rho$ of a program run $\sigma$, 
$\tau$ is a $\seqp$-suffix of $\rho$ if $\tau$ is enabled  after $\rho$ and
 $\events{}(\tau)$ appear precisely in the same order in $\tau$ as they do in $\sigma$.
We call $\rho$ an {\em enabling} prefix of $\tau$. We call $\tau$ a {\em maximal $\seqp$-suffix} of $\rho$, if it is a $\seqp$-suffix of $\rho$ and not a subsequence of any other $\seqp$-suffix of $\rho$.
\end{definition}

\begin{wrapfigure}[7]{r}{0.18\textwidth}
\vspace{-10pt}
\includegraphics[scale=0.9]{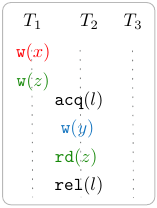}
\vspace{-15pt}
\end{wrapfigure}
Observe that after a prefix $\rho$, one can keep including the remaining events from $\sigma$ as long as they are executable up to the set of already included events, and as such the concept of a maximal $\seqp$-suffix of $\rho$ is well-defined, but maximal suffixes are not unique. For the $\seqp$-prefix marked by the red curve in Figure \ref{fig:syncp-miss}(b), the maximal $\seqp$-suffix is illustrated on the right. The $\wt(x)$ event of $T_3$ cannot appear, but the rest of events can be executed in order. Naturally, any prefix of the run illustrated on the right is also a $\seqp$-suffix (although no longer maximal).

The key property of $\seqp$-suffixes is that one can treat them as standalone runs, predict races in them, and have the guarantee that any predicted races are also sound for the original run. For example, there is a race between the two green events above, since $\wt(z)$ commutes against the next two events. The reader can verify that the same race exists in the original run in Figure \ref{fig:syncp-miss}(b).


\begin{theorem}\label{thm:ssound}
Let $\tau$ be a $\seqp$-suffix of a program run $\sigma$. If events $e_1, e_2 \in \events{\tau}$ form a (predictable) race in $\tau$, then they form a (predictable) race in $\sigma$. 
\end{theorem}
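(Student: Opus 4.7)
The plan is to lift any race witness in $\tau$ to a race witness in $\sigma$ by prepending the enabling prefix $\rho$. By hypothesis there is a correct reordering $\tau'$ of $\tau$ in which $e_1, e_2$ are $\tau$-enabled. I would propose $\hat\rho := \rho \cdot \tau''$ as the race witness in $\sigma$, where $\tau''$ is obtained from $\tau'$ by discarding a small number of problematic trailing reads, as explained below.

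The structural conditions on a correct reordering are largely inherited from the $\seqp$-prefix/suffix setup. Since $\events{\rho}$ and $\events{\tau}$ are disjoint with $\po{\sigma}$-closed union (by the definition of enabled sequence), the concatenation $\rho|_t \cdot \tau|_t$ is a prefix of $\sigma|_t$ for every thread $t$; since $\tau''|_t$ will be a prefix of $\tau|_t$, $\hat\rho|_t$ will be a prefix of $\sigma|_t$. Enabledness of each $e_i$ in $\hat\rho$ follows from its $\tau$-enabledness in $\tau''$ together with $\po{\sigma}$-closure of $\events{\rho\cdot\tau}$. Well-formedness of $\hat\rho$ is inherited from that of $\rho \cdot \tau$ (guaranteed by the $\seqp$-suffix definition), once one notes that $\tau''$ is obtained from $\tau$ by per-thread truncation, internal reordering inside $\tau'$, and deletion of read events only, none of which can orphan a read or disturb the lock discipline.

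The main obstacle is the reads-from condition $\rf{\hat\rho} \subseteq \rf{\sigma}$. For $r \in \events{\rho}$ it is immediate from $\rho$ being a $\seqp$-prefix. For $r \in \events{\tau'}$ the correct-reordering condition on $\tau'$ would yield $\rf{\rho\cdot\tau'}(r) = \rf{\rho\cdot\tau}(r)$, and the enabled-sequence condition yields $\rf{\rho\cdot\tau}(r) = \rf{\sigma}(r)$ unless $r$ is the last event of its thread in $\rho\cdot\tau$. For such a ``$\tau$-last'' read $r \in \tau'$: if $\ThreadOf{r}$ equals $\ThreadOf{e_i}$ for some $i$, then since $\tau'|_{\ThreadOf{e_i}}$ is a prefix of $\tau|_{\ThreadOf{e_i}}$ containing $r$ but excluding $e_i$, the event $e_i$ must lie strictly after $r$ in thread order in $\tau$, contradicting $r$ being $\tau$-last. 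Otherwise $\ThreadOf{r}$ is a ``third'' thread, and I would simply drop $r$ from $\tau'$; removing a single read preserves every correct-reordering condition (per-thread projection stays a prefix, $\rf{}$ only shrinks, no read becomes unmatched, no lock pattern is altered) and does not affect the $\tau$-enabledness of $e_1, e_2$, which only depends on events in $\ThreadOf{e_1}$ and $\ThreadOf{e_2}$. Iterating this cleanup produces $\tau''$ for which $\hat\rho = \rho \cdot \tau''$ is the desired correct reordering of $\sigma$ witnessing $(e_1, e_2)$ as a predictable race in $\sigma$.

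The hard part will be this third step: trailing reads can silently switch writers relative to $\sigma$, and the technical core of the proof is the case analysis showing they either cannot appear (threads of $e_1, e_2$) or can be safely excised without disturbing the enabledness of $e_1$ and $e_2$.
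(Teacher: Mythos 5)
Your construction is the paper's construction: prepend the enabling prefix $\rho$ to the race witness found inside $\tau$. The paper packages this as Lemma~\ref{lem:monotonicity} (a race in $\beta$ is a race in $\alpha\beta$), applied with $\alpha=\rho$ and $\beta=\tau$, together with the observation that $\rho\cdot\tau$ is a feasible run of the program; the proof of that lemma verifies the correct-reordering conditions for the prepended witness component by component, exactly as in your second paragraph. Where you genuinely depart from --- and improve on --- the paper is the reads-from step. The paper's chain stops at ``the witness is a correct reordering of $\rho\cdot\tau$, hence a valid race for the program''; it does not confront the fact that clause (4) of the enabled-sequence definition permits $\rf{\rho\cdot\tau}$ to disagree with $\rf{\sigma}$ on the last read of a thread, so the prepended witness need not satisfy $\rf{}\subseteq\rf{\sigma}$ and is not literally a correct reordering of $\sigma$. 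Your case split --- such a read cannot lie in $\ThreadOf{e_1}$ or $\ThreadOf{e_2}$, since $e_i$ would then follow it in thread order in $\tau$, and in any third thread it is the last event of that thread in $\tau'$ and can be excised without affecting well-formedness, the per-thread prefix property, or the enabledness of $e_1,e_2$ --- is exactly the repair needed to obtain a genuine correct reordering of $\sigma$, and it is sound (deleting a read never changes the writer observed by any surviving read, so the iteration behaves as claimed). In short: same decomposition and same witness, but your version makes explicit and closes a step that the paper leaves at the informal level of ``valid race for the program.''
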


Any $\seqp$-suffix $\tau$ of $\sigma$ is induced by a $\seqp$-prefix $\rho$. 
By definition, there exist an execution $\sigma'$ of the program in which $\rho$ appears (in the same order as the original program run) followed by $\tau$, also with events appearing in the same order; that is $\sigma' = \rho \tau$ is a feasible execution of the same program. If we know that a predictable race in $\tau$ is a predictable in $\rho \tau$, then we know this race is a valid race for the program. This is an implication of the following generic lemma about predictable races:

\begin{restatable}{lemma}{Monotonicity}
    \label{lem:monotonicity}
Let $\sigma = \alpha\beta$ be a program run. If events $e_1, e_2 \in \events{\beta}$ form a (predictable) race in $\beta$, then they form a (predictable) race in $\sigma$. 
\end{restatable}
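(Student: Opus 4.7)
The plan is to construct the witness for the race in $\sigma$ directly from the witness for the race in $\beta$. Fix a correct reordering $\rho$ of $\beta$ in which $e_1$ and $e_2$ are $\beta$-enabled, and set $\rho' := \alpha\cdot\rho$. I would then show that $\rho'$ is a correct reordering of $\sigma$ in which both events are $\sigma$-enabled; the conflicting and thread-distinct requirements on $(e_1,e_2)$ depend only on event labels and transfer verbatim.

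The first step is to verify the three structural conditions of $\rho' \in \creorderings{\sigma}$. Event inclusion $\events{\rho'}\subseteq\events{\sigma}$ is immediate from $\events{\rho}\subseteq\events{\beta}$ and $\sigma=\alpha\beta$. For the per-thread prefix property, $\proj{\rho'}{t}=\proj{\alpha}{t}\cdot\proj{\rho}{t}$, and since $\proj{\rho}{t}$ is a prefix of $\proj{\beta}{t}$, the concatenation is a prefix of $\proj{\sigma}{t}=\proj{\alpha}{t}\cdot\proj{\beta}{t}$. For reads-from inclusion, I would split on where the read $e_r$ lies: if $e_r\in\events{\alpha}$, then $\alpha$ is a common prefix of $\sigma$ and $\rho'$, so the set of same-location writes preceding $e_r$ coincides in both runs and $\rf{\rho'}(e_r)=\rf{\sigma}(e_r)$; if $e_r\in\events{\rho}$, then $(e_w,e_r)\in\rf{\rho}\subseteq\rf{\beta}$ for some $e_w\in\events{\rho}$, and because every event of $\alpha$ strictly precedes every event of $\beta$ in $\sigma$, no same-location write from $\alpha$ can be interposed between $e_w$ and $e_r$, giving $(e_w,e_r)\in\rf{\sigma}$ as well. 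Well-formedness of $\rho'$ reduces to checking each lock and each memory location; the only delicate point is the seam between $\alpha$ and $\rho$, at which any lock held at the end of $\alpha$ by a thread $t$ must be released by $t$ as the first operation on that lock in $\beta$, and the per-thread prefix property forces $\rho$ to honor this opening move before any other thread touches the lock.

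Finally, for enabledness in $\rho'$, neither event lies in $\events{\rho'}$: each is in $\events{\beta}\setminus\events{\rho}$, and $\events{\alpha}\cap\events{\beta}=\emptyset$. For any $\po{\sigma}$-predecessor $f$ of $e_i$, either $f\in\events{\alpha}\subseteq\events{\rho'}$, or $f\in\events{\beta}$; in the latter case the relative order of $\events{\beta}$-events within $\sigma$ coincides with their order within $\beta$, so $f$ is a $\po{\beta}$-predecessor of $e_i$ and hence lies in $\events{\rho}\subseteq\events{\rho'}$ by $\beta$-enabledness. The main obstacle is the well-formedness of $\rho'$ at the lock boundary: without care, $\rho$ could conceivably begin with a lock acquisition by a thread different from the holder left behind by $\alpha$, which would destroy mutual exclusion. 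Handling this requires pairing the per-thread prefix condition on $\rho$ with the well-formedness of $\sigma$ to conclude that any handover of a lock occurs in the same order in $\rho$ as it does in $\beta$ (and hence in $\sigma$), and an entirely analogous argument handles the ``first-write-before-first-read'' condition on each memory location at the seam.
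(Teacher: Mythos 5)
Your proposal is correct and takes essentially the same route as the paper: both construct the witness $\alpha\cdot\rho$ from the witness reordering $\rho$ of $\beta$ and verify the three conditions of $\creorderings{\sigma}$ (event inclusion, per-thread prefix, reads-from inclusion) in the same way. You additionally spell out well-formedness at the $\alpha$/$\rho$ seam and the $\sigma$-enabledness of $e_1,e_2$, details the paper's proof leaves implicit.
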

In a sense, $\seqp$-suffixes bring a power of localizing the search for a race to a subsequence (not necessarily contiguous) of the original run. Inspired by this, we define two classes of races for which efficient algorithmic solutions exists. We compare the expressiveness of these classes of races against each other and the baseline $\seqp$-races, and present an algorithm for the most expressive class in the next section.

\subsection{Maximal $\seqp$-Suffix Reasoning}\seclabel{suffix}
The first class of races focuses on the maximal $\seqp$-suffixes and the races that can be predicted in them using commutativity.

\begin{definition}[Maximal Suffix $C$-Race]
A pair of events $e_1$ and $e_2$ from a program run $\sigma$ form a maximal-suffix race iff there exists a $\seqp$-prefix $\rho$ and a maximal $\seqp$-suffix $\tau$ of $\rho$ such that $(e_1,e_2)$ form a $C$-race in $\tau$ for $C \in \set{\maz, \grains, \scatteredgrains}$.
\end{definition}

One can obviously predict a race in $\tau$ using a more sophisticated/expensive scheme, 
but the performance of any such scheme could be unreasonably poor. 
First, let us remark on a simple connection between these races and standard commutativity races.
\begin{proposition}
For any program run and any $C \in \set{\maz, \grains, \scatteredgrains}$, the set of $C$-races is strictly contained in the set of  maximal suffix $C$-races.
\end{proposition}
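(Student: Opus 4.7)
My plan is to split the proposition into two parts: containment of $C$-races in maximal suffix $C$-races (the easy direction) and strictness via an explicit example.

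For the containment direction, I would take the empty run $\rho = \varepsilon$ as the enabling $\seqp$-prefix. All the defining conditions hold vacuously: $\varepsilon$ is a correct reordering of $\sigma$ (the conditions on event sets, per-thread prefixes, and reads-from are all trivially satisfied), and it is a $\seqp$-prefix since the order-preservation condition quantifies over the empty set of pairs. The run $\sigma$ itself is then a $\seqp$-suffix of $\varepsilon$: conditions (1)--(3) of the enabled-sequence definition hold since $\rho \cdot \tau = \sigma$ is well-formed and $\po{\sigma}$-closed by assumption, and condition (4) is vacuous because $\rf{\rho\cdot\tau} = \rf{\sigma}$ leaves no read whose reads-from differs. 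Maximality is immediate since $\tau = \sigma$ already exhausts $\events{\sigma}$. Hence any $C$-race $(e_1, e_2)$ in $\sigma$ is witnessed by the pair $(\varepsilon, \sigma)$ and qualifies as a maximal suffix $C$-race.

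For strictness, I would reuse the example in \figref{syncp-miss}(b). The introduction has already argued that the pair of red $\wt(x)$ events there forms a predictable race which no prefix of $\sigma$ can witness directly. By \thmref{prefix-subsumes-commutativity}, every $C$-race (for $C \in \set{\maz, \grains, \scatteredgrains}$) is a prefix-race, and the contrapositive yields that this race is not a $C$-race for any of the three granularities. However, taking the $\seqp$-prefix marked by the inner green curve in the same figure as $\rho$, the remaining events form a maximal $\seqp$-suffix $\tau$ in which the two racy events are separated only by a short sequence of pairwise-independent events, so they form a $\maz$-race inside $\tau$. By \propref{commutativity-soundness-predictive-power}, a $\maz$-race is also a $\grains$-race and a $\scatteredgrains$-race inside $\tau$, so a single example simultaneously witnesses strictness for all three granularities.

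The main technical hurdle, I expect, is verifying condition (4) of the enabled-sequence definition for the chosen suffix --- namely showing that any read in $\tau$ whose reads-from value is altered relative to $\sigma$ is the last event of its thread within $\tau$. This follows from how the green prefix is chosen in \figref{syncp-miss}(b), but it requires a small case analysis on the events of the example. The rest of the argument is essentially bookkeeping, relying on \thmref{prefix-subsumes-commutativity} and \propref{commutativity-soundness-predictive-power} which have already been established.
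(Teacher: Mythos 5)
Your proposal is correct and matches the paper's own argument: the containment direction is established exactly as the paper does, by taking the empty $\seqp$-prefix so that $\sigma$ itself is the (unique) maximal $\seqp$-suffix, and your strictness witness is the same \figref{syncp-miss}(b) example the paper invokes in the surrounding discussion (not a $C$-race because it is not even a prefix-race by \thmref{prefix-subsumes-commutativity}, yet a maximal suffix $\maz$-race via the green-curve prefix). If anything, you are slightly more explicit than the paper's one-line justification, which only spells out the containment half.
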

This is the consequence of the simple fact that for a given run $\sigma$, the maximal $\seqp$-suffix of an {\em empty} prefix is  $\sigma$ itself. Hence, the set of $C$-races is already subsumed by the set of maximal suffix $C$-races with leaving the choice of the $\seqp$-prefix to be empty. 
Unfortunately, and rather surprisingly, we cannot make a similar claim about $\seqp$-prefix races.

\begin{proposition}
The set of maximal $\seqp$-suffix $C$-races of a program run is not generally comparable with the set of its $\seqp$-prefix races.
\end{proposition}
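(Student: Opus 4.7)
The claim asserts incomparability, so the plan is to exhibit two small runs, one for each direction.

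\textbf{Direction 1 (maximal-suffix race, not a prefix-race).} I would reuse the execution shown in \figref{syncp-miss}(b); the text preceding this proposition already establishes both facts I need: the pair of red $\wt(x)$ events is a predictable race that cannot be witnessed by any $\seqp$-prefix, and the $\seqp$-prefix drawn with the green curve has a maximal $\seqp$-suffix in which the two red events form a $\maz$-race. By \propref{commutativity-soundness-predictive-power}, a $\maz$-race is also a $\grains$- and $\scatteredgrains$-race, so this yields a maximal suffix $C$-race for every $C \in \set{\maz,\grains,\scatteredgrains}$.

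\textbf{Direction 2 (prefix-race, not a maximal-suffix $C$-race).} I would construct a minimal three-event run: let $\threads = \set{T_1,T_2}$, $\mems = \set{x}$, and take
\[
\sigma \;=\; e_1 \cdot e_3 \cdot e_2, \qquad e_1 = \ev{T_1,\wt(x)},\; e_3 = \ev{T_1,\wt(x)},\; e_2 = \ev{T_2,\wt(x)}.
\]
The pair $(e_1,e_2)$ is a prefix-race via the empty prefix, as both events are $\po{\sigma}$-minimal in their threads. To see it is not a maximal suffix $C$-race for any $\seqp$-prefix $\rho$, note that $e_1,e_2 \notin \events{\rho}$ is required (both must lie in the suffix), and program-order closure of $\rho$ then forces $e_3 \notin \events{\rho}$ too, since $e_1$ is the only $\po{\sigma}$-predecessor of $e_3$. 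Hence the only candidate is $\rho = \varepsilon$ and the maximal suffix is $\tau = \sigma$. It then remains to verify $(e_1,e_2)$ is not a $\scatteredgrains$-race of $\sigma$, which by \propref{commutativity-soundness-predictive-power} also rules out $\grains$- and $\maz$-races.

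The main obstacle is this last scattered-grain step. Since all three events pairwise conflict (all are writes to $x$) and $e_1$ precedes $e_3$ in program order, a short case analysis over the five possible grain partitions of $\set{e_1,e_3,e_2}$ should show that the condensation of the grain graph always admits only the linearization $e_1 e_3 e_2$. The delicate case is the scattered partition $\set{\set{e_1,e_2},\set{e_3}}$: here the two grains form a single SCC (via $e_1 \to e_3$ from program order and $e_3 \to e_2$ from conflict), whose internal linearization by trace order is still $e_1 e_3 e_2$, so even this more creative partition fails to make $e_1$ and $e_2$ adjacent. This completes the second direction and the proof.
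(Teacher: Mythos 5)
Your first direction is exactly the paper's: it reuses \figref{syncp-miss}(b) and the observation that the green prefix leaves a maximal $\seqp$-suffix in which the two $\wt(x)$ events are a $\maz$-race, so that part is fine.

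The second direction has a genuine gap, and it sits precisely where you flagged the "delicate" step. In your run $\sigma = e_1 e_3 e_2$ there are no read events at all, so none of the three writes to $x$ is observed. Under grain commutativity the dependence between two grains is not determined by conflict alone but by whether swapping them perturbs the reads-from relation; the paper states this explicitly when discussing \figref{grains}: ``as singleton grains (without any future reads), the two $\wt(y)$ events commute.'' Applying that here with the singleton partition $\set{e_1}\set{e_3}\set{e_2}$, the grains $\set{e_3}$ and $\set{e_2}$ are unobserved same-location writes and hence independent, yielding the reordering $e_1 e_2 e_3$ in which $e_1$ and $e_2$ are adjacent. So $(e_1,e_2)$ \emph{is} a $\grains$-race (and a $\scatteredgrains$-race) of $\sigma$ itself, and therefore a maximal suffix $C$-race via the empty prefix --- your example only separates the classes for $C = \maz$. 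Your case analysis fails because it posits an edge ``$e_3 \to e_2$ from conflict''; granting every write--write conflict a dependence edge would collapse grain commutativity back to event commutativity and contradict the strictness in \propref{commutativity-soundness-predictive-power}. The paper's counterexample avoids this by augmenting the run of \figref{syncp} with $\rd(x)$ events that observe the writes: preservation of $\rf{}$ then pins the order of the conflicting writes against every commutativity granularity in the (unique) maximal $\seqp$-suffix, while leaving the prefix-race argument untouched. Any repair of your example needs the same ingredient --- reads that witness the writes whose order you want to freeze.
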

\begin{wrapfigure}[10]{r}{0.18\textwidth}
\vspace{-15pt}
\includegraphics[scale=0.9]{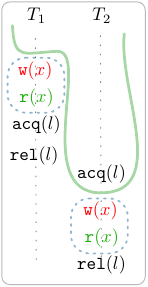}
\vspace{-15pt}
\end{wrapfigure}
The races between the pair of red $\wt(x)$ events in Figures \ref{fig:syncp-miss}(a) and \ref{fig:syncp-miss}(b) are both maximal $\seqp$-suffix $\maz$-races but not $\seqp$-prefix races. The same is true for the race between the pair of blue $\wt(y)$ events in Figure \ref{fig:grains}. 

For an example of a $\seqp$-prefix race that is not a maximal $\seqp$-suffix $C$-race (for any choice of $C$), consider the run illustrated on the right. Modulo the addition of the green $\rd(x)$ events, this run is identical to the one in Figure \ref{fig:syncp}, for which we argued in Section \ref{sec:intro} that prefix reasoning detects the race between the two red $\wt(x)$ events. The addition of the $\rd(x)$ event has no impact on that reasoning, and the race is still predictable using the marked prefix. There is only one maximal $\seqp$-suffix for the marked $\seqp$-prefix:
\begin{align*}
\tau_1 =&\ev{T_1,\wt(x)} \ev{T_1,\rd(x)} \ev{T_2,\wt(x)}\ev{T_2,\rd(x)} \ev{T_2,\rel(l)}
\end{align*}
The existence of $\ev{T_1,\rd(x)}$ event prevents us to argue using event commutativity that the two $\wt(x)$ events form a race in $\tau_2$. Grain commutativity can reorder the marked grains, but that does make the pair of $\wt(x)$ events adjacent. Scattered grains do not contribute any new correct reorderings. Therefore, this $\seqp$-prefix race is not a maximal $\seqp$-suffix $C$-race, for any notion of commutativity $C$.  
This, rather counterintuitive, fact that prefix reasoning gains some power, by simply isolating a pair of events enabled at the boundary, which is lost in maximal-suffix races motivates the class of races in the next section. 

\subsection{Granular Prefix Reasoning}\seclabel{granular}
A {\em grain} $g$ is a contiguous subword. In \cite{FarzanMathurPOPL2024}, grains are used as a way of gaining more commutativity when individual events do not commute, a sequence of events may commute against another. The key idea here is that grains can be helpful in prefix reasoning {\em as well}. Rather than focusing on single events in the boundary of a prefix forming a race, one can infer two grains to be racy in the same sense, and then discover a concrete race between a pair of events inside the two grains using commutativity reasoning as described in the previous section. 
The two grains $g_1$ and $g_2$ marked in Figure \ref{fig:grains} are enabled after the marked $\seqp$-prefix. Note that we cannot take the entire set of events in thread $T_1$ as a grain, because this would violate the contiguity requirement for grains. 


\begin{definition}[Granular Prefix $C$-Race]
\deflabel{grseqp}
A pair of events $e_1$ and $e_2$ from a program run $\sigma$ form a granular prefix race iff there exists a $\seqp$-prefix $\rho$, two (not necessarily distinct) maximal $\seqp$-suffixes $\tau_1$ and $\tau_2$ of $\rho$, and two grains $g_1$ of $\tau_1$ and $g_2$ of $\tau_2$, such that
\begin{itemize}[nosep]
\item $g_1$ include $e_1$ and $g_2$ includes $e_2$, and
\item $g_1g_2$ is enabled after $\rho$, and
\item $(e_1,e_2)$ is a $C$-race in $g_1g_2$, for $C \in \set{\maz, \grains, \scatteredgrains}$.
\end{itemize}
\end{definition}

We call a granular prefix $\maz$-race, for short, a $\grconfp$ race.

\begin{theorem}
\thmlabel{grconfp-subsumes-syncp}
For any program run and for all $C \in \set{\maz, \grains, \scatteredgrains}$, the set of granular prefix $C$-races strictly contains the set of $\seqp$-prefix races.
\end{theorem}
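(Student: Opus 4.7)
The plan is to prove the two inclusions---containment and strictness---separately. For containment, I would start with an arbitrary $\seqp$-prefix race $(e_1, e_2)$ witnessed by some $\seqp$-prefix $\rho$, and propose as witness for the granular prefix race the singleton grains: $g_1$ is the length-one grain $\langle e_1 \rangle$ and $g_2$ is $\langle e_2 \rangle$. Two checks are required: (a) each $g_i$ is a contiguous subword of some maximal $\seqp$-suffix $\tau_i$ of $\rho$; and (b) the concatenation $g_1 g_2 = \langle e_1, e_2 \rangle$ is an enabled sequence after $\rho$. For (a), each $\langle e_i \rangle$ is itself a $\seqp$-suffix of $\rho$ (since $e_i$ is $\tr$-enabled, program-order closure and order preservation hold trivially), and any $\seqp$-suffix can be greedily extended to a maximal one $\tau_i \supseteq \langle e_i \rangle$. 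For (b), the four clauses of the enabled-sequence definition follow from the fact that $e_1, e_2 \notin \events{\rho}$, all program-order predecessors of both are in $\rho$, and $e_1, e_2$ belong to distinct threads (so each is the last event of its thread in the two-element suffix, discharging the reads-from clause). In $g_1 g_2$, the events $e_1$ and $e_2$ are trivially adjacent, yielding a $\maz$-race; by \propref{commutativity-soundness-predictive-power}, this is also a $\grains$-race and a $\scatteredgrains$-race. Hence every prefix-race is a granular prefix $C$-race for every $C \in \set{\maz, \grains, \scatteredgrains}$.

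For strictness, I would exhibit the example in \figref{syncp-miss}(a), which the excerpt already argues witnesses a predictable race between the two red $\wt(x)$ events that is \emph{not} a prefix-race. To show it is a granular prefix $\maz$-race, take $\rho$ to be the inner (green) prefix, which is an executable $\seqp$-prefix. After $\rho$, both $\ev{T_1, \acq(l)}$ and $\ev{T_2, \wt(x)}$ are enabled. Choose grains $g_1 = \ev{T_1, \acq(l)} \cdot \ev{T_1, \wt(x)}$ and $g_2 = \ev{T_2, \wt(x)}$, which are contiguous subwords of a common maximal $\seqp$-suffix of $\rho$. Their concatenation $g_1 g_2$ is enabled after $\rho$ (the acquire of $l$ is permissible since no conflicting critical section remains open, and the writes have no unmet reads-from dependencies). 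In $g_1 g_2$, the two red $\wt(x)$ events occur in adjacent positions, yielding a $\maz$-race directly. Containment of $\maz$-races in $\grains$- and $\scatteredgrains$-races (\propref{commutativity-soundness-predictive-power}) extends the witness to the other two granularities.

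The main obstacle I foresee is a subtle well-formedness issue in clause (b) of containment: if $e_1$ is a read, then $\rho \cdot e_1 \cdot e_2$ requires a preceding write to $e_1$'s location somewhere in $\rho$. While this is implicit whenever the race is physically meaningful, the bare definition of $\tr$-enabled only demands program-order closure. My plan to handle this is to enlarge $\rho$ if needed by adding the reads-from source of $e_1$ (together with its program-order predecessors in its own thread), which preserves the $\seqp$-prefix property and the $\tr$-enabledness of $e_1$ and $e_2$. With this dispatched, the singleton-grain argument makes the containment direction particularly clean, and the strict separation follows immediately from the pre-existing figure in the excerpt.
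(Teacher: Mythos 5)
Your overall architecture coincides with the paper's: containment via singleton grains, strictness via one of the already-established examples of a predictable race that is not a prefix-race. Your choice of \figref{syncp-miss}(a) instead of \figref{grains} for the separation is fine, and your witness there (the green prefix as $\rho$, $g_1 = \ev{T_1,\acq(l)}\ev{T_1,\wt(x)}$, $g_2 = \ev{T_2,\wt(x)}$) checks out against \defref{grseqp}. Note, though, that already in this example $g_1$ cannot be a singleton, since $e_1$'s program-order predecessor $\ev{T_1,\acq(l)}$ lies outside the prefix and must ride inside the grain; this is a warning sign for the containment direction.

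The gap is in your handling of the read case for containment: the patch you propose does not survive scrutiny. First, the cheap fix you overlook: since the race pair is unordered and at least one of $e_1, e_2$ is a write, you may put the write-bearing singleton grain first, so that $\rho \cdot g_1 \cdot g_2$ is well-formed without touching $\rho$ at all (the read in $g_2$ then reads from the write in $g_1$, and the last clause of enabledness holds because the read is the last event of its thread in $g_1 g_2$). But this still leaves the obligation, imposed by \defref{grseqp}, that the read belong to \emph{some maximal $\seqp$-suffix} of the chosen prefix, and that can fail. Concretely, take $\tr = \ev{T_3,\acq(\lk)}\,\ev{T_3,\wt(x)}\,\ev{T_2,\rd(x)}\,\ev{T_3,\rel(\lk)}\,\ev{T_1,\acq(\lk)}\,\ev{T_1,\wt(x)}\,\ev{T_1,\rel(\lk)}$ with $e_1 = \ev{T_1,\wt(x)}$ and $e_2 = \ev{T_2,\rd(x)}$. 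The prefix-race is witnessed by $\rho = \ev{T_1,\acq(\lk)}$, but the unique maximal $\seqp$-suffix of this $\rho$ is $\ev{T_1,\wt(x)}\ev{T_1,\rel(\lk)}$: the read cannot be appended (no write on $x$ is available in $\rho$ or earlier in the suffix), and neither can $T_3$'s critical section (lock $\lk$ is held open by $\rho$). Your patch --- add the reads-from source of the read together with its program-order predecessors to $\rho$ --- produces $\set{\ev{T_3,\acq(\lk)},\ev{T_3,\wt(x)},\ev{T_1,\acq(\lk)}}$, which has two overlapping critical sections on $\lk$ and is not a well-formed prefix. The correct witness here is $\rho' = \ev{T_3,\acq(\lk)}\ev{T_3,\wt(x)}\ev{T_3,\rel(\lk)}$ together with the \emph{non-singleton} grain $g_1 = \ev{T_1,\acq(\lk)}\ev{T_1,\wt(x)}$ and $g_2 = \ev{T_2,\rd(x)}$: one must simultaneously close the source's critical section, evict $e_1$'s own acquire from the prefix, and absorb it into $g_1$. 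So the containment direction genuinely requires this prefix/grain surgery, or at least an argument that it can always be carried out; it is not discharged by singleton grains plus a local enlargement of $\rho$. To be fair, the paper's own containment argument is a one-sentence appeal to ``grains can be singletons'' and does not address this either, so you have located the right pressure point --- but the repair you propose is not the right one.
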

A grain can comprise a single event, and as such, it is straightforward why every $\seqp$-prefix race is also a granular race. For the strict inclusion, recall the example run in Figure \ref{fig:grains}, which incidentally also serves as an example of a granular prefix race that is not maximal-suffix races.

\begin{proposition}
\thmlabel{max-suffix-race-subsumes-syncp}
In any program run and for all $C \in \set{\maz, \grains, \scatteredgrains}$, the set of granular prefix $C$-races strictly contains the set of maximal suffix $C$-races.
\end{proposition}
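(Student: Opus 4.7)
The plan is to split the proposition into the two natural claims: (i) every maximal suffix $C$-race is a granular prefix $C$-race (containment), and (ii) the containment is strict. The main effort goes into (i), which is a direct construction; (ii) follows almost immediately from \thmref{grconfp-subsumes-syncp} together with an example already presented.

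For (i), I would take a maximal suffix $C$-race $(e_1, e_2)$ of a run $\sigma$ witnessed by a $\seqp$-prefix $\rho$ and a maximal $\seqp$-suffix $\tau$ of $\rho$. Since $\tau$ preserves $\sigma$'s order, one of the two racy events, call it $e$, strictly precedes the other, $e'$, in $\tau$. I would then split $\tau$ at this boundary into $\tau = g \cdot g'$, where $g$ is the prefix of $\tau$ ending with $e$ and $g'$ is the suffix starting with $e'$. Both are contiguous subwords, hence grains, of $\tau$, and their concatenation $g \cdot g'$ equals $\tau$ verbatim, so it is enabled after $\rho$ by hypothesis. Setting $\tau_1 = \tau_2 = \tau$ and choosing $(g_1, g_2) = (g, g')$ when $e = e_1$ (otherwise swap the assignment, using the symmetry of the unordered conflicting pair), every clause of \defref{grseqp} is satisfied: $g_1$ contains $e_1$, $g_2$ contains $e_2$, and $g_1 g_2 = \tau$ is enabled after $\rho$. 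The $C$-race witness for $(e_1, e_2)$ in $\tau$ serves directly as a $C$-race witness in $g_1 g_2$, since $g_1 g_2$ is the exact same sequence of events as $\tau$.

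For (ii), by \thmref{grconfp-subsumes-syncp} every $\seqp$-prefix race is a granular prefix $C$-race. The discussion in \secref{suffix} preceding this proposition already exhibits a concrete $\seqp$-prefix race---a modification of the run in \figref{syncp} with two additional $\rd(x)$ events---that is provably not a maximal $\seqp$-suffix $C$-race for any $C \in \set{\maz, \grains, \scatteredgrains}$. This pair is therefore a granular prefix $C$-race outside the maximal suffix $C$-race class, establishing strictness.

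The only conceptual care in this argument, and the one place something could plausibly go wrong, is verifying that the $C$-race witness lifts unchanged from $\tau$ to the re-parsed concatenation $g_1 g_2$. This is immediate because $C$-equivalence depends only on the underlying sequence of events and, for $\grains$ and $\scatteredgrains$, on a choice of grains internal to that sequence---both of which are invariant under the re-parsing as $g_1 g_2$. I do not anticipate a substantive obstacle beyond keeping the relabeling of $(e_1, e_2)$ vs.\ $(e, e')$ straight.
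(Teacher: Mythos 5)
Your proposal is correct and matches the paper's own argument: the paper likewise splits the maximal suffix $\tau$ into two grains right after the first of the two racy events, takes $g_1 g_2 = \tau$, and obtains strictness from the $\seqp$-prefix race (already shown in \secref{suffix} not to be a maximal suffix $C$-race) combined with \thmref{grconfp-subsumes-syncp}. Your extra remark that the $C$-race witness lifts verbatim because $g_1 g_2$ is the same event sequence as $\tau$ is a useful explicit check the paper leaves implicit.
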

If $e_1$ and $e_2$ form a maximal suffix $C$-race, witnessed by a maximal suffix $\tau$, then one can beak the maximal suffix $\tau$ into two grains $g_1$ and $g_2$ such that $\tau = g_1g_2$, by splitting $\tau$ right after the first of $e_1$ and $e_2$ appears. Using these grains $g_1$ and $g_2$, one can predict the same race by definition.

Grain and scattered-grain commutativity are strictly more expressive than event-based commutativity and hence the natural choice would be to use them in granular reasoning to discover even more races. Theoretically, there is no obstacle to this, since the use of all three notions would yield a constant-space algorithm. Practically, however, things are a bit different. 
It is well-understood that there is a tension between expressiveness and efficiency when it comes to the class of predictive race detection algorithms. In \cite{Mathur21}, it was observed that a constant space algorithm for prefix reasoning alone can behave poorly in practice. Intuitively, think of an algorithm as guessing all possible prefixes, which are maintained as a constant-bounded set of summaries. This constant yet large enough state space has to be carefully maintained every time a new event is processed by the algorithm, and the price of this maintenance over millions of events does not yield a practically efficient algorithm. It is therefore very important for any additional reasoning to be lightweight and very fast. This is the case for event commutativity, which yields a deterministic algorithm, but not grain commutativity which involves further guessing and therefore a blow up of the state space.


\section{Algorithm}
\seclabel{algo}

Here, we show that there is a \emph{constant space} streaming algorithm 
for detecting $\grconfp$-races.
In \secref{alg-overview}, we first highlight the key observations behind our algorithm: 
an automata-theoretic algorithm 
that simulates a nondeterministic finite-state automaton (NFA) $\aut_{\grconfp}$ 
and an effective \emph{antichain optimization} 
that improve the space usage when naively determinizing $\aut_{\grconfp}$.
Then, in \secref{alg-detail}, we present the NFA construction by specifying its state structure and transition function. 
We also discuss a concrete antichain optimization based on the proposed NFA.

\subsection{Overview of the Automata-Theoretic Algorithm}
\seclabel{alg-overview}

We first present a high-level overview of the automata-theoretic algorithm for detecting $\grconfp$-races, i.e., a nondeterministic finite automaton $\aut_{\grconfp}$ accepting 
$L_{\grconfp} = \setpred{\tr \in \labs^*}{\tr \text{ has a } \grconfp\text{-race}}$.
Recall that the definition of $\grconfp$-race follows an existential quantification over subsequences of events in $\tr$.
The automaton $\aut_{\grconfp}$ is designed to employ a "guess-summarize-check" strategy that precisely simulates the definition in a streaming fashion.
The automaton $\aut_{\grconfp}$ first guesses a prefix $\rho$ of $\tr$, two grains $g_1$ and $g_2$, and two conflicting events $e_1$ and $e_2$.
It then summarizes the information about the guessed subsequences and checks if they satisfy the conditions.

Guessing is straightforward thanks to nondeterminism.
When scanning the execution $\tr$, the automaton $\aut_{\grconfp}$ can nondeterministically guess the role of the processed event.
The challenging part is to design the state structure to summarize the guessed subsequences such that the summarized information 
\begin{enumerate*}
    \item is sufficient to check the conditions in \defref{grseqp}, and 
    \item takes constant space.
\end{enumerate*}
Recall that the conditions in \defref{grseqp} consists of four parts:
\begin{enumerate*}[label=(\roman*)]
    \item checking if $e_1$ is included in $g_1$, $e_2$ is included in $g_2$ (\underline{\chk-grains}),
    \item checking if $\rho$ is a \seqp-prefix (\underline{\chk-seqp}),
    \item checking if $g_1$ and $g_2$ are enabled after $\rho$ (\underline{\chk-enabled}) and
    \item checking if, together, they witness $(e_1, e_2)$ to be in $\maz$-race (\underline{\chk-race}).
\end{enumerate*}
The first condition can be ensured by only guessing the events that are included in the grains.
For \underline{\chk-seqp} and \underline{\chk-enabled}, a crucial observation is that both checking can be done incrementally.
When $\aut_{\grconfp}$ guesses an event $e$ as the next event in $\rho$ (or $g_1$ or $g_2$), it can perform the enabledness checking against the summarized information of the previous guessed $\rho$ (or $g_1$ or $g_2$).
Once $\rho\circ e$ (or $g_1\circ e$ or $g_2\circ e$) is checked to be a \seqp-prefix (or enabled), the automaton can update the summarized information accordingly.
We will give the detailed construction of the summarized information in \secref{alg-detail},
but the key idea that provides a constant-space result is to track a set of threads, memory location, and locks such that accesses to them are not enabled.
The last condition \underline{\chk-race} that checks if two events are a $\maz$-race can be solved by a scalable automata-theoretic algorithm following classical results in trace theory~\cite{diekert1995book}, whose details are presented in \secref{alg-detail}.
Finally, the state is a tuple that contains the summarized information to perform each checking.
The state is an accepting state (the execution has a $\grconfp$-race) if all the conditions are satisfied.

To obtain a streaming, constant-space algorithm from the automaton $\aut_{\grconfp}$, a naive way is to determinize the automaton on-the-fly.
However, the naive determinization algorithm may suffer from an exponential dependence on parameters like $|\threads|, |\mems|$, and $|\locks|$ and is expected to have poor performance when any of these parameters is moderately large.
Indeed, for this reason, the constant space algorithm for \seqp-races was observed to scale very poorly as compared to the linear space algorithm for \syncp-races~\cite{Ang2024CAV}.
In this paper we take inspiration from the \emph{antichain optimization}~\cite{anitChain2006}, 
previously proposed in the context of the \emph{universality problem} for NFAs~\cite{anitChain2006}, for our setting of solving the membership problem against our proposed NFA $\aut_{\grconfp}$.
Indeed, our optimization also applies to,
and is effective for the constant space algorithm for \seqp-races (see \secref{exp-result}).

Intuitively, we identify a \emph{subsumption} partial order $\subsumes$
on the states of our NFA such that
for every two states $p, p'$, whenever $p \subsumes p'$, 
then for each word $w \in \labs^*$,
if there is an accepting run of $w$ starting from $p'$, then there is one from $p$ as well.
In turn, this means that, when running checking for the membership of $\tr$ on the NFA
using a typical on-the-fly subset-construction algorithm, 
it suffices to instead track only the states that are \emph{minimal} 
according to the partial order $\subsumes$.
Once a concrete definition of $\subsumes$ is defined, an algorithm
for membership can essentially simulate the NFA, while removing
new non-minimal states at each step in the algorithm by comparing
all pair of states according to $\subsumes$.
We present the concrete subsumption partial order and the antichain optimization in \secref{alg-detail}.
Here, we provide a taste of the optimization by an example.
Consider two guessing of \seqp-prefix $\rho_1$ and $\rho_2$ (where there is no guessed $g_1$ and $g_2$) in the middle of a run.
$\rho_1$ and $\rho_2$ are the same except that $\rho_1$ include one more write event.
In this case, $\rho_2$ is subsumed by $\rho_1$ ($\rho_1 \subsumes \rho_2$) since the set of the enabled events after $\rho_1$ is larger than the enabled set after $\rho_2$.
Specifically, there might be a read event on the same variable that is enabled after $\rho_1$ but not after $\rho_2$.
Therefore, the antichain optimization can remove $\rho_2$ from the state space.

\subsection{Construction of $\aut_{\grconfp}$ and Antichain Optimization}
\seclabel{alg-detail}

Following the overview section, we first present how to perform the three essential checkings (\underline{\chk-\seqp}, \underline{\chk-enabled}, and \underline{\chk-race}) required in the definition of $\grconfp$-races using constant-space summarized information.
We then introduce an antichain optimization for $\aut_{\grconfp}$, which extends our newly proposed antichain optimization for
$\seqp$-race detection algorithms.
This extension is particularly relevant as an automata-theoretic $\seqp$-race detection algorithm can be conceptualized as a simplified version of $\aut_{\grconfp}$.

At the top level, each state of $\aut_{\grconfp}$ is represented as a tuple $p = (e_1, p_\rho, p_{g_1}, p_{g_2}, \aftset_{e_1}, \acr{phase})$, where the first four components provide a summary of the guessed subsequences $e_1$, $\rho$, $g_1$, and $g_2$, the component $\aftset_{e_1}$ summarizes the events that are ``dependent'' with $e_1$, and $\acr{phase}$ tracks the ``arrangement'' of the current event
and can be one of (\acr{Before}) ``the current event occurs before $g_1$'', 
(\acr{Inside$_1$}) ``the current event occurs inside $g_1$'', 
(\acr{Between}) ``the current event occurs after $g_1$ but before $g_2$'',
or 
(\acr{Inside$_2$}) ``the current event occurs inside the grain $g_2$''.
When processing a new event $e$, $\aut_{\grconfp}$ nondeterministically decides to stay in the same phase or move to the next phase, and makes corresponding role guessing (including $e$ in $\rho$ or not, etc).

\myparagraph{Check if $\rho\circ e$ is a \seqp-prefix}
Assume that $p_\rho$ is a summary of the guessed \seqp-prefix $\rho$.
We first provide the definition of $p_\rho$ and then show how to check and update $p_\rho$ when guessing the next event $e$.
The summary $p_\rho$ is a tuple 
\[p_\rho = (\acr{ExclThreads}, \acr{OpenLcks}, \acr{ExclLastWrs}),\]
where 
\begin{enumerate*}
    \item $\acr{ExclThreads}$ is a set of threads that have events guessed to be excluded from $\rho$,
    \item $\acr{OpenLcks}$ is a set of locks whose acquire is included in $\rho$, but the release is excluded, and 
    \item $\acr{ExclLastWrs}$ is a set of memory locations whose latest write event is guessed to be excluded from $\rho$.
\end{enumerate*}
$p_\rho$ is of constant size due to the assumption that $|\threads|$, $|\mems|$, and $\locks$ are constant.
In phase $\acr{Before}$ or $\acr{Between}$, if the processed event $e$ is guessed to be next in $\rho$, we check the following conditions:
\begin{enumerate*}[label=(\roman*)]
    \item if $\ThreadOf{e}\not\in p_\rho.\acr{ExclThreads}$,
    \item if $l\not\in p_\rho.\acr{OpenLcks}$ when $e = \acq(l)$, and
    \item if $x\not\in p_\rho.\acr{ExclLastWrs}$ when $e = \rd(x)$.
\end{enumerate*}
These checkings ensure that $e$ is enabled after $\rho$ and that $\rho\circ e$ is still a \seqp-prefix.

\myparagraph{Check if $e$ is enabled after $\rho\circ g_1$}
We also assume that $p_{g_1}$ is a summary of the guessed $g_1$ and $\rho$ is a summary of the guessed \seqp-prefix $\rho$, where $g_1$ is enabled after $\rho$.
Similarly, $p_{g_1}$ is a tuple
\[p_{g_1} = (\acr{InclThreads}, \acr{InclAcqs}, \acr{OrphanRels}, \acr{InclWrs}, \acr{OrphanRds}),\]
where 
\begin{enumerate*}
    \item $\acr{InclThreads}$ is a set of threads that have events guessed to be included in $g_1$,
    \item $\acr{InclAcqs}$ is a set of locks whose acquire is included in $g_1$,
    \item $\acr{OrphanRels}$ is a set of locks whose release is included in $g_1$ but the corresponding acquire event is not,
    \item $\acr{InclWrs}$ is a set of memory locations whose write event is included in $g_1$, and
    \item $\acr{OrphanRds}$ is a set of memory locations whose read event is included in $g_1$ but the corresponding write event is not.
\end{enumerate*}
In phase $\acr{Inside_{1}}$, if the processed event $e$ is guessed to be next in $g_1$, we check the following conditions:
\begin{enumerate*}[label=(\roman*)]
    \item if $\ThreadOf{e}\in p_{g_1}.\acr{InclThreads}$ or $\ThreadOf{e}\not\in p_\rho.\acr{ExclThreads}$,
    \item if $l\in p_{g_1}.\acr{InclAcqs}$ or $l\not\in p_\rho.\acr{OpenLcks}$ when $e = \acq(l)$,
    \item if $x\in p_{g_1}.\acr{InclWrs}$ or $x\not\in p_\rho.\acr{ExclLastWrs}$ when $e = \rd(x)$.
\end{enumerate*}

However, in phase $\acr{Between}$, an extension of $\rho$ with $f$ might disable the existing $g_1$, additional checkings are necessary:
\begin{enumerate*}[label=(\roman*)]
    \item if $l\not\in p_{g_1}.\acr{OrphanRels}$ when $e = \acq(l)$, and
    \item if $x\not\in p_{g_1}.\acr{OrphanRds}$ when $e = \wt(x)$.
\end{enumerate*}
These combined checkings guarantee an invariant that $g_1$ is always enabled after $\rho$.
A similar checking for $g_2$ follows the same spirit, and we omit the details here.

\myparagraph{Check if $(e_1, e_2)$ is a $\maz$-race in $g_1\circ g_2$} 
From the classical results in trace theory, $(e_1, e_2)$ is a $\maz$-race if $e_2$ does not belong to the set containing events that are transitively dependent on $e_1$, denoted by $\acr{AftEvent}_{e_1}$.
Formally speaking, $\acr{AftEvent}_{e_1}$ is the smallest set that includes $e_1$ and satisfies 
\[\forall\ e\in g_1\circ g_2,\ \exists f \in \acr{AftEvent}_{e_1} \text{ such that } f \trord{g_1\circ g_2} e \land (f, e)\not\in \indrel \Longrightarrow e \in \acr{AftEvent}_{e_1}.\]
It is easy to see that $\acr{AftEvent}_{e_1}$ can be maintained incrementally, however, the size of $\acr{AftEvent}_{e_1}$ can be as large as the length of the execution, which is not constant.
To address this issue, a crucial observation is that the independent relation $\indrel$ only relies on the label of events (thread identifier, operation, and memory location (or lock)).
Therefore, we can maintain a set of labels $\acr{AftSet}_{e_1}$ that contains the labels of events in $\acr{AftEvent}_{e_1}$, whose size is upper bounded by $2\times|\threads| \times (|\mems| + |\locks|)$.
In fact, it suffices to decouple the labels and only store a set of dependent threads, read memory locations, write memory locations, and accessed locks, which reduces to size to $|\threads| + 2\times|\mems| + |\locks|$.

\myparagraph{Subsumption relation for \seqp-races}
We can easily transform the above construction of the $\aut_{\grconfp}$ to an NFA $\aut_{\seqp}$ for detecting \seqp-races
by limiting the size of $g_1$ and $g_2$ to one.
Now we first define a subsumption relation $\subsumes_\seqp$ on the states of $\aut_{\seqp}$,
then extend it to $\aut_{\grconfp}$.
A state of $\aut_{\seqp}$ is a tuple $p = (e_1, p_\rho)$.
We define the subsumption relation $\subsumes_\seqp$ as follows:
\[p \subsumes_\seqp p'
    \text{ iff }
    p.e_1 = p'.e_1 \land p_\rho \subsumes p'_\rho,
\]
where the subsumption relation between the summaries $p_\rho$ is a component-wise subset relation, i.e., $p_\rho \subsumes p'_\rho$ iff $p_\rho.\acr{ExclThreads} \subseteq p'_\rho.\acr{ExclThreads}$, etc. 
Intuitively, $p \subsumes_\seqp p'$ when they track the same $e_1$ and $p.p_\rho$ enables more events than $p'.p_\rho$.
The following justifies why the subsumption relation is crafted as above:
on these states is crafted such that
if state $p'$ can transition to state $q'$,
then a state $p$, satisfying $p \subsumes_\seqp p'$,
can also transition to some $q$ satisfying $q \subsumes_\seqp q'$.

\begin{lemma}
The set of accepting states of the NFA $\aut_\seqp$ is downward closed w.r.t. $\subsumes_\seqp$.
Further, let $\tr$ be a run and let $p$ and $p'$ be states in $\aut_\seqp$ reached after reading $\tr$
that satisfy $p \subsumes_\seqp p'$.
Let $\sigma \in \labs$ and let $q'$ be a state such that $p'$ can transition to $q'$ on reading $\sigma$. 
Then there is a state $q$ such that $p$ can transition to $q$ and $q \subsumes_\seqp q'$.
\end{lemma}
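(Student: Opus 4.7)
My plan is to split the proof into the two claims and, for the simulation claim, do a case analysis on the type of nondeterministic transition that $p'$ makes on reading $\sigma$. The key semantic observation driving everything is that $p_\rho$ only records \emph{exclusions} (excluded threads, open locks, excluded last writes), and an event $e$ is enabled after the guessed prefix iff (i) $\ThreadOf{e} \notin \acr{ExclThreads}$, (ii) if $e = \acq(\lk)$ then $\lk \notin \acr{OpenLcks}$, and (iii) if $e = \rd(x)$ then $x \notin \acr{ExclLastWrs}$. So the subsumption $p \subsumes_\seqp p'$, which says componentwise $\subseteq$ on all three exclusion sets, literally means ``every event enabled after $p'$'s prefix is also enabled after $p$'s prefix''.

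For downward closure of the accepting states: a state is accepting exactly when $e_1$ has been committed and the automaton sees a conflicting, $\tr$-enabled $e_2$ to pair with it, i.e., the acceptance condition is a conjunction of enabledness checks against $p_\rho$ (together with the conflict check $\MemOf{e_1} = \MemOf{e_2}$ with at least one write). Since these checks only get \emph{easier} as the exclusion sets shrink, whatever $p'$ accepts $p$ accepts, giving the first claim.

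For the simulation claim, I would enumerate the transitions the NFA makes on $\sigma$: (a) guess $\sigma \in \rho$; (b) guess $\sigma \notin \rho$; (c) commit $\sigma$ as $e_1$; and (d) commit $\sigma$ as $e_2$, moving to accept. In case (a), $p'$'s ability to fire the transition requires that $\sigma$ is enabled at $p'.p_\rho$, which the enabledness observation above immediately lifts to $p$; the update rules then either remove an element from one of the exclusion sets (e.g., $\MemOf{\sigma}$ from $\acr{ExclLastWrs}$ on a write, $\LockOf{\sigma}$ from $\acr{OpenLcks}$ on a release) or add the same element to the same set for both $p$ and $p'$ (e.g., $\LockOf{\sigma}$ to $\acr{OpenLcks}$ on an acquire), so componentwise $\subseteq$ is preserved. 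In case (b), both $p$ and $p'$ add $\ThreadOf{\sigma}$ to $\acr{ExclThreads}$ and, if $\sigma$ is a write, $\MemOf{\sigma}$ to $\acr{ExclLastWrs}$; adding the same elements to subsets preserves the subset relation. Cases (c) and (d) do not touch $p_\rho$ beyond enabledness checks and the $e_1$ component, and $p.e_1 = p'.e_1$ by hypothesis, so they transfer directly. Picking $q$ as the state produced by the same guess in $p$ gives $q \subsumes_\seqp q'$ in every case.

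\textbf{Main obstacle.} The real subtlety is case (a) when $\sigma$ is a release or an acquire, because $\acr{OpenLcks}$ both grows and shrinks across transitions, so one has to argue that the ``admissibility'' precondition for $p'$ actually transfers to $p$ (in particular, that the matching acquire of $\sigma = \rel(\lk)$ is in $p.\rho$ whenever it is in $p'.\rho$, so $\lk \in p.\acr{OpenLcks}$). This follows from the structural invariant that a thread's events in $\rho$ form a prefix of the thread's events in $\tr$: if $\ThreadOf{\sigma} \notin p'.\acr{ExclThreads}$ then it is also not in $p.\acr{ExclThreads}$, so both $p$ and $p'$ have included the matching acquire, placing $\lk$ in both $\acr{OpenLcks}$ sets. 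Once this invariant is stated and used, the componentwise set-algebra in the remaining cases is routine and gives $q \subsumes_\seqp q'$.
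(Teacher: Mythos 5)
Your proof is correct and follows exactly the monotonicity argument the paper itself relies on (the paper states this lemma without a formal proof, offering only the one-line intuition that each automaton step is monotone w.r.t.\ $\subsumes_\seqp$ because smaller exclusion sets enable more events); your case analysis on the transition types, together with the thread-prefix invariant used to handle the $\acr{OpenLcks}$ bookkeeping for releases, correctly fills in the details the paper omits. No gaps.
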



\myparagraph{Subsumption relation for $\aut_{\grconfp}$-races}
A very thorough subsumption relation can, in principle, be defined
and implemented even for the more elaborate NFA $\aut_{\grconfp}$, such as an additional subset relation on $\aftset_{e_1}$.
However, an overly complex definition hinders efficiency since isolating the
set of minimal states may become expensive.
In view of this, we opt for a simple subsumption relation $\subsumes_{\grconfp}$
that is an extension of the relation $\subsumes_\seqp$ we defined above:
\[
    p \subsumes_{\grconfp} p'
    \text{ iff }
    p_\rho \subsumes p'_\rho \land \pi_i (p) = \pi_i (p') \text{ for }i \in \{1, 3, 4, 5, 6\},
\]
where $\pi_i$ is the projection function that maps each state to its $i$-th component.
As before, soundness follows because each step in the automaton
is monotonic w.r.t this relation:
\begin{lemma}
The set of accepting states of the NFA $\aut_{\grconfp}$ is downward closed w.r.t. $\subsumes_{\grconfp}$.
Further, let $\tr$ be a run and let $p$ and $p'$ be states in $\aut_{\grconfp}$ reached after reading $\tr$
that satisfy $p \subsumes_{\grconfp} p'$.
Let $\sigma \in \labs$ and let $q'$ be a state such that $p'$ can transition to $q'$ on reading $\sigma$. 
Then there is a state $q$ such that $p$ can transition to $q$ and $q \subsumes_{\grconfp} q'$.
\end{lemma}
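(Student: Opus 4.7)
The plan is to prove the two clauses of the lemma separately, exploiting how rigidly $\subsumes_{\grconfp}$ is defined: the components $e_1, p_{g_1}, p_{g_2}, \aftset_{e_1}$, and $\acr{phase}$ must coincide in $p$ and $p'$, and only $p_\rho$ may shrink componentwise. This turns the argument into a monotonicity statement: a componentwise-smaller $p_\rho$ can only enable more transitions through the prefix-guessing guards, while every other part of a transition proceeds identically in $p$ and $p'$.

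For the downward closure of the accepting set, I would inspect when $\aut_{\grconfp}$ declares acceptance. By the construction in \secref{alg-detail}, acceptance amounts to checking that both $e_1$ and $e_2$ have been committed, that $\acr{phase}$ has reached a final value (such as $\acr{Inside}_2$ with $e_2$ committed inside $g_2$), and that the label of $e_2$ is absent from $\aftset_{e_1}$, i.e.\ the trace-theoretic $\maz$-race check from the \underline{\chk-race} component holds inside $g_1 g_2$. None of these predicates reads the sets that make up $p_\rho$; they depend only on components $1, 3, 4, 5, 6$, on which $p$ and $p'$ agree by definition of $\subsumes_{\grconfp}$. Hence acceptance of $p'$ immediately yields acceptance of $p$.

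For the transition-lifting part, I would fix the nondeterministic role that $p'$ assigns to $\sigma$---exclude $\sigma$ from $\rho$, include it in $\rho$, mark it as part of $g_1$ or $g_2$, or advance $\acr{phase}$---and have $p$ make the same choice. Two verifications are needed. First, every guard that $p'$ passes, $p$ also passes: the relevant guards are disjunctions whose clauses either inspect the shared summaries $p_{g_1}, p_{g_2}$, or demand non-membership in one of $p_\rho.\acr{ExclThreads}$, $p_\rho.\acr{OpenLcks}$, or $p_\rho.\acr{ExclLastWrs}$. Since $p_\rho$ is componentwise contained in $p'_\rho$, any non-membership witnessed by $p'$ is automatically witnessed by $p$. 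Second, the resulting successors satisfy $q \subsumes_{\grconfp} q'$: the updates to the equal components are deterministic functions of $\sigma$ together with those components themselves (adding $\sigma$'s labels to $\aftset_{e_1}$ when $\indrel$-dependent, advancing $\acr{phase}$, extending the grain summaries $p_{g_1}/p_{g_2}$), so they coincide for $p$ and $p'$; while the $p_\rho$ updates either add the same element to both sides (e.g., $\ThreadOf{\sigma}$ to $\acr{ExclThreads}$ when excluding, or $\LockOf{\sigma}$ to $\acr{OpenLcks}$ when including an acquire) or remove the same element (e.g., $\LockOf{\sigma}$ upon including a release), both of which preserve the componentwise subset relation.

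The main delicacy I anticipate is the $\acr{Between}$ phase, where committing $\sigma$ to $\rho$ must not retroactively invalidate the already-committed grain $g_1$: the construction adds the extra guards $l \notin p_{g_1}.\acr{OrphanRels}$ for acquires and $x \notin p_{g_1}.\acr{OrphanRds}$ for writes. Because $p_{g_1}$ is identical in $p$ and $p'$, these extra guards trigger identically, so the case analysis extends without complication. The symmetric case for $g_2$ in $\acr{Inside}_2$, and the update of $\aftset_{e_1}$ (which depends only on $\sigma$'s label and the shared set itself), admit exactly the same reasoning, and together these observations close both halves of the lemma.
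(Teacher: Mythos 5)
Your proof is correct and follows exactly the route the paper intends: the paper itself gives no detailed argument for this lemma, offering only the one-line remark that ``each step in the automaton is monotonic w.r.t this relation,'' and your write-up is precisely the elaboration of that monotonicity claim --- acceptance and all non-$p_\rho$ updates depend only on the components forced to be equal, while the $p_\rho$ guards are non-membership tests that can only become easier under componentwise containment, and the $p_\rho$ updates add or remove identical elements on both sides. Your attention to the $\acr{Between}$-phase guards on $p_{g_1}.\acr{OrphanRels}$ and $p_{g_1}.\acr{OrphanRds}$ is the right place to check for trouble, and the observation that these read only the shared component closes the argument.
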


   


\section{Experimental Results}
\seclabel{experiments}


In this section, we evaluate an optimized version of \seqp algorithm~\cite{Ang2024CAV} ($\optseqp$) and our main algorithm $\optgrconfp$ for prediction of $\grconfp$-races.
We first discuss some key implementation details for our algorithms (\secref{algo}), including how we determinize our theoretical 
automata-based algorithm effectively (\secref{exp-impl}), and our key optimizations. Then, in \secref{exp-result}, we present the result of our evaluation on a standard benchmark suite to measure the expressiveness and performance of our
algorithm and proposed optimizations.

\subsection{Implementation and Empirical Setup}
\seclabel{exp-impl}

\myparagraph{Implementations}
We implement two algorithms based on the discussion in \secref{alg-detail}:
\begin{enumerate*}
    \item an antichain-optimized algorithm \optseqp for prediction of \seqp-races~\cite{Ang2024CAV}, and
    \item the main algorithm for prediction of $\grconfp$-races.
\end{enumerate*}
Our implementation converts the conceptual NFA into a practical streaming algorithm by simulating the nondeterministic procedure --- for each incoming event, we track and update all possible states across all nondeterministic choices.
This naive algorithm can accumulate $O(2^{\poly(|\threads| + |\vals| + |\locks|)})$
states and is expected to not scale.
For $\optseqp$, we only apply the antichain-based optimization.
Now we discuss the additional optimizations and heuristics we apply to $\grconfp$ to achieve scalability.

\myparagraph{Complete Optimizations}
We use two completeness-preserving optimizations.
These are (1) the antichain-based subsumption discussed 
in \secref{alg-detail}, and (2) state partitioning that enables parallelism. 
Recall that each state tracks potential races on a fixed memory location $x$, 
and states for different memory locations are updated independently. 
Therefore, we can partition the state space according to
memory locations and update these partitions in parallel.
We use a $16$-core machine to parallelize the state space update.

\myparagraph{Heuristics}
We also use $3$ heuristic optimization which can be aggregated together:
\begin{enumerate}[nosep,topsep=0pt]
    \item \textbf{Limiting grain sizes.}
    This optimization, denoted by $\hSz^m$, imposes an upper bound $m$ on the length of grains. 
    This restriction is inspired by the observation that 
    larger grains are more likely to include elements that hinder the followup commutativity reasoning.

    \item \textbf{Limiting grain shapes.}
    This optimization limits the choices of $g_1$ to those that contain
    a single complete critical section and the choices of $g_2$ to be a singleton grain. 
    This retains one of the key benefits that makes
    $\grconfp$ more predictive than \seqp --- it
    can reason about reorderings that invert the order of critical sections,
    when the later critical section is in the prefix $\rho$,
    while the earlier one is in the grain $g_1$.
    We use the symbol $\hSz$ for this heuristic.

    \item \textbf{Evicting LRU states.} 
    The final optimization exploits the temporal locality of 
    data races --- the likelihood of two conflicting events to race,
    typically decreases as their temporal distance increases. 
    To put locality in action, we augment each state $q$ with a 
    timestamp recording when the component $e_q$ was added. 
    We then bound the state space by eliminating states 
    corresponding to the \emph{least recently used} $e_q$ components. It is denoted by $\hLRU^n$, for a bound $n$ on the number of states.
\end{enumerate}

We use $\optgrconfp$ to denote our
optimized algorithm and the 
notation $[-/\hSz^m, -/\hSh, -/\hLRU^n]$ for our
heuristic configuration 
where $-$ indicates the absence of the corresponding heuristic.

\myparagraph{Benchmarks}
We evaluate our algorithms against benchmarks from~\cite{Mathur21},
consisting of concurrent programs taken from standard benchmark suites and recent literature:
\begin{enumerate*}[label=(\roman*)]
    \item the IBM Contest suite~\cite{farchi2003concurrent},
    \item the DaCapo suite~\cite{blackburn2006dacapo},
    \item the Java Grande suite~\cite{sen2005detecting},
    \item the Software Infrastructure Repository suite~\cite{do2005supporting}, and
    \item others~\cite{Mathur21,rvpredict}.
\end{enumerate*}
To ensure a fair comparison, we generated a single trace for each benchmark using
{\sc RVPredict}~\cite{rvpredict} and performed all evaluations on the same trace. 

\myparagraph{Compared Methods}
We focus on our algorithms, \optseqp and $\optgrconfp$ under  different heuristics with state-of-the-art data race predictors, \shb~\cite{Mathur18}, \wcp~\cite{Kini17}, \syncp~\cite{Mathur21}, and \osr~\cite{OSR2024} 
on the same input trace $\tr$.
In terms of expressiveness, $\grconfp$ is strictly more powerful than $\syncp$, which strictly subsumes $\shb$, while $\osr$ and $\wcp$ are incomparable with the other three algorithms. 
For a fixed trace $\tr$, we report races as all events $e_2$ in $\tr$ that are detected as racy with a previous event $e_1$ where $e_1 \trord{\tr} e_2$. 
In addition to racy events, we also reported the number of buggy locations (lines) in the source code, since a single location can be counted several times as distinct racy events.
On all evaluations, we set a 3-hour timeout.
Our experiments were conducted on a 64-bit Linux machine with Java 19 and 128 GB heap space.

\begin{figure}[t]
    \centering
        \begin{subfigure}{0.45\textwidth}
            \centerline{\includegraphics[width=0.95\textwidth]{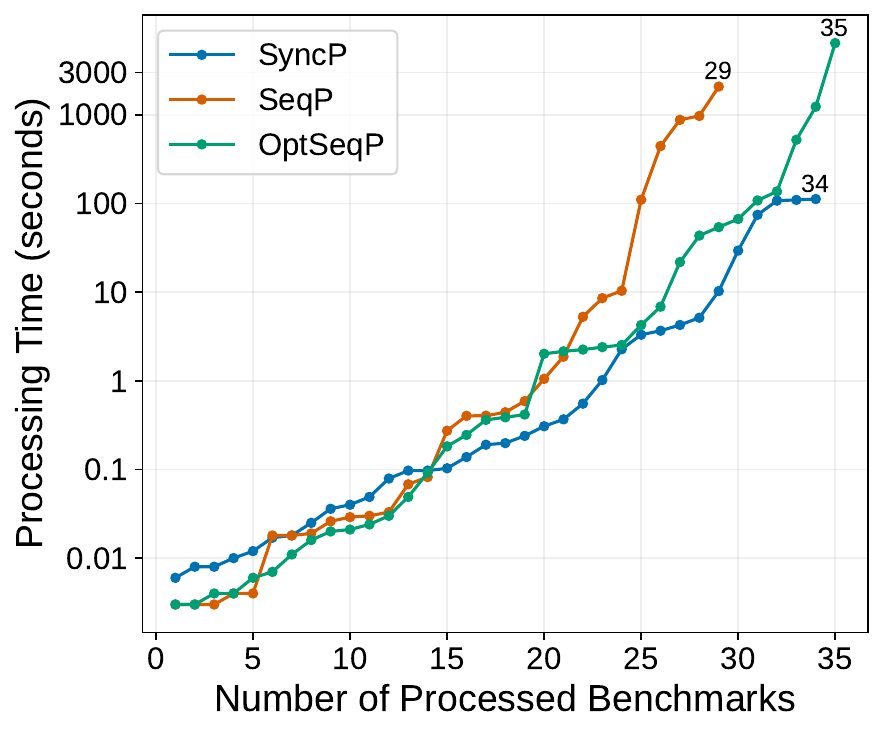}}
            \caption{Performance of \syncp, \seqp, and \optseqp}
            \figlabel{subsumption}
        \end{subfigure}
        \hfill
        \begin{subfigure}{0.45\textwidth}
            \centerline{\includegraphics[width=0.95\textwidth]{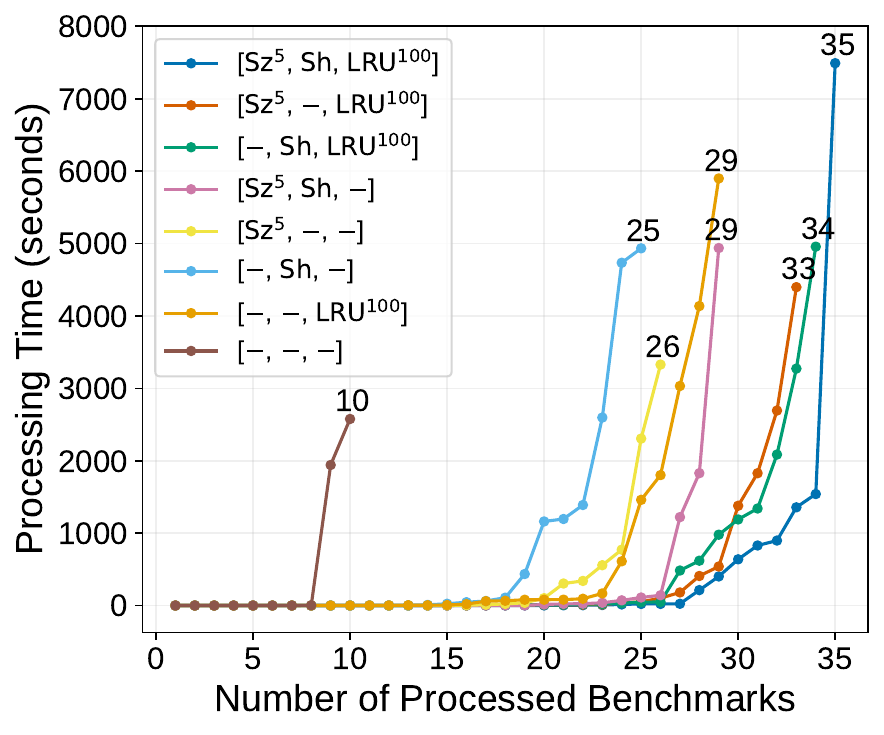}}
            \caption{Performance of $\grconfp$ and $\optgrconfp$}
            \figlabel{heuristic}
        \end{subfigure}
        \vspace{-10pt}
        \caption{Performance Comparison}
        \figlabel{linear}
        \vspace{-10pt}
\end{figure}

\subsection{Empirical Evaluation}
\seclabel{exp-result}

We designed our experiments to answer the following research questions:
\begin{description}[itemsep=0pt,topsep=0pt,leftmargin=0cm]
    \item[RQ1 (Expressiveness).] 
    How does $\grconfp$'s race predictive power compare to state-of-the-art algorithms~\cite{Kini17,Mathur18,OSR2024,Mathur21}? Particularly, can it predict both previously known and new races?
    
    \item[RQ2 (Performance).] 
    How well does $\grconfp$'s constant-space linear-time algorithm scale?
    
    \item[RQ3 (Effectiveness of optimizations and heuristics).] 
    How well does antichain optimization enhance the scalability? How well do our heuristics balance scalability and expressiveness?
\end{description}

\myparagraph{Effectiveness of antichain optimization}
We first address the first part of RQ3, examining the effectiveness of antichain optimization by comparing the performance of \seqp and $\optseqp$.
\figref{subsumption} illustrates the results using a logarithmic scale.  
For ``simpler'' benchmarks, \seqp and \optseqp show similar performance, as the original state-space is small. 
However, when processing ``harder'' benchmarks, \optseqp significantly outperforms \seqp, reducing processing time with a factor of 10 for an equivalent number of benchmarks. 
Consequently, \optseqp successfully processes 6 additional benchmarks within the three-hour timeout period. 
In addition, \optseqp also processes one more benchmark than \syncp, where \syncp, a linear-space algorithm, encounters an Out-of-Memory (OOM) error.
This performance gain is the result of our antichain optimization. 
Recall that all three algorithms are identical in terms of expressiveness, predicting the same set of races.


\begin{table}[t]
    \caption{Results of data race prediction by \shb, \wcp, \osr, \syncp and $\grconfp[\hSz^5, \hSh, \hLRU^{100}]$. Column 1--2 respectively list the name and number of events ($\mathcal{N}$) for each benchmark. Column 3--12 presents the detected races and processing time by each prediction algorithm. $n(l)$ denotes that $n$ events are predicted as racy with an earlier event and correspond to $l$ bug locations. $n{\color{red} +m}(l{\color{red} +k})$ denotes ${\color{red} m}$ additional races and ${\color{red}k}$ additional bug locations detected against \syncp. OOM denotes an Out-of-Memory (128 GB) error.}
    \tablabel{win}
    \centering
    \scalebox{0.73}{
    \begin{tabular}{|c|c|||c|c||c|c||c|c||c|c||c|c|}
        \hline
        1 & 2 & 3 & 4 & 5 & 6 & 7 & 8 & 9 & 10 & 11 & 12 \\\hline
        \multicolumn{2}{|c|||}{Benchmark} & \multicolumn{2}{c||}{\shb} & \multicolumn{2}{c||}{\wcp} & \multicolumn{2}{c||}{\osr} & \multicolumn{2}{c||}{\syncp} & \multicolumn{2}{c|}{$\optgrconfp$} \\ \hline
        Name & $\mathcal{N}$ & Race & Time & Race & Time & Race & Time & Race & Time & Race & Time \\ \hline
        {\tt  bbuffer} & 9 & 3(1) & 0.2s & 1(1) & 0.16s & 3(1) & 1.20s & 3(1) & 0.01s & 3(1) & 0.01s \\ \hline
        {\tt   array} & 11 & 0(0) & 0.18s & 0(0) & 0.24s & 0(0) & 0.55s & 0(0) & 0.01s & 0(0) & 0.01s \\ \hline
        {\tt   critical} & 11 & 3(3) & 0.36s & 1(1) & 0.08s & 3(3) & 1.23s & 3(3) & 0.01s & 3(3) & 0.01s \\ \hline
        {\tt  account} & 15 & 3(1) & 0.17s & 3(1) & 0.23s & 3(1) & 0.62s & 3(1) & 0.01s & 3(1) & 0.01s \\ \hline
        {\tt  airltkts} & 18 & 8(3) & 0.21s & 5(2) & 0.07s & 8(3) & 0.67s & 8(3) & 0.02s & 8(3) & 0.01s \\ \hline
        {\tt  pingpong} & 24 & 8(3) & 0.19s & 8(3) & 0.09s & 8(3) & 0.66s & 8(3) & 0.01s & 8(3) & 0.01s \\ \hline
        {\tt  twostage} & 83 & 4(1) & 0.15s & 4(1) & 0.16s & 8(2) & 0.68s & 4(1) & 0.05s & 4{\color{red} +4}(1{\color{red} +1}) & 0.74s \\ \hline
        {\tt  wronglock} & 122 & 12(2) & 0.14s & 3(2) & 0.14s & 25(2) & 0.69s & 25(2) & 0.1s & 25(2) & 0.15s \\ \hline
        {\tt  batik} & 131 & 10(2) & 0.2s & 10(2) & 0.14s & 10(2) & 1.01s & 10(2) & 0.02s & 10(2) & 0.07s \\ \hline
        {\tt  mergesort} & 167 & 1(1) & 0.4s & 1(1) & 0.28s & 5(2) & 0.57s & 3(1) & 0.03s & 3{\color{red} +2}(1{\color{red} +1}) & 0.1s \\ \hline
        {\tt  prodcons} & 246 & 1(1) & 0.22s & 1(1) & 0.54s & 1(1) & 0.63s & 1(1) & 0.04s & 1(1) & 0.13s \\ \hline
        {\tt  raytracer} & 526 & 8(4) & 0.14s & 8(4) & 0.27s & 8(4) & 0.62s & 8(4) & 0.03s & 8(4) & 0.12s \\ \hline
        {\tt  biojava} & 852 & 2(2) & 0.2s & 2(2) & 0.16s & 7(3) & 0.66s & 7(3) & 0.05s & 7(3) & 0.36s \\ \hline
        {\tt  clean} & 867 & 59(4) & 0.31s & 33(4) & 0.4s & 110(4) & 0.74s & 60(4) & 0.08s & 60{\color{red} +43}(4) & 0.34s \\ \hline
        {\tt  bubblesort} & 1.65K & 269(5) & 0.2s & 100(5) & 0.33s & 374(5) & 1.66s & 269(5) & 0.27s & 269{\color{red} +98}(5) & 30.97s \\ \hline
        {\tt  lang} & 1.81K & 400(1) & 0.27s & 400(1) & 0.24s & 400(1) & 0.63s & 400(1) & 0.08s & 400(1) & 1.16s \\ \hline
        {\tt  jigsaw} & 3.18K & 4(4) & 0.61s & 4(4) & 0.24s & 6(6) & 0.71s & 6(6) & 0.51s & 6(6) & 16.81s \\ \hline
        {\tt  sunflow} & 3.32K & 84(6) & 0.26s & 58(6) & 0.45s & 130(7) & 0.82s & 119(7) & 1.29s & 119(7) & 1.25s \\ \hline
        {\tt  montecarlo} & 7.60K & 5066(3) & 0.24s & 3267(1) & 0.3s & 5066(3) & 0.75s & 5066(3) & 0.1s & 5066(3) & 0.44s \\ \hline
        {\tt  readwrite} & 9.88K & 92(4) & 0.42s & 92(4) & 0.54s & 228(4) & 0.98s & 199(4) & 0.3s & 199{\color{red} +29}(4) & 38.52s \\ \hline
        {\tt  bufwriter} & 10.26K & 8(4) & 0.66s & 8(4) & 0.49s & 8(4) & 0.87s & 8(4) & 0.27s & 8(4) & 1.31s \\ \hline
        {\tt  luindex} & 15.95K & 1(1) & 0.38s & 2(2) & 0.77s & 15(15) & 0.87s & 15(15) & 0.18s & 15(15) & 1.36s \\ \hline
        {\tt  ftpserver} & 17.10K & 69(21) & 0.51s & 69(21) & 0.85s & 85(21) & 1.05s & 85(21) & 3.92s & 85(21) & 5m9s \\ \hline
        {\tt  moldyn} & 21.07K & 103(3) & 0.39s & 103(3) & 0.7s & 103(3) & 0.90s & 103(3) & 0.19s & 103(3) & 0.71s \\ \hline
        {\tt  derby} & 75.08K & 29(10) & 0.59s & 28(10) & 1.57s & 30(11) & 3.15s & 29(10) & 10.51s & 29{\color{red} +1}(10{\color{red} +1}) & 25m24s \\ \hline
        {\tt  graphchi} & 147.24K & 13(4) & 0.99s & 11(4) & 1.2s & 75(5) & 4.83s & 71(4) & 3.52s & 71{\color{red} +4}(4{\color{red} +1}) & 22.99s \\ \hline
        {\tt  avrora} & 204.11K & 0(0) & 1.24s & 0(0) & 1.64s & 0(0) & 1.22s & 0(0) & 0.37s & 0(0) & 49m37s \\ \hline
        {\tt  hsqldb} & 647.53K & 190(190) & 2.62s & 161(161) & 24.63s & 193(193) & 1m 20s & OOM & - & 191(191) & 22m15s \\ \hline
        {\tt  xalan} & 671.79K & 31(10) & 2.02s & 21(7) & 13.89s & 37(12) & 1m 16s & 37(12) & 1m 47s & 37(12) & 14m1s \\ \hline
        {\tt  lusearch} & 751.32K & 232(44) & 1.73s & 119(27) & 5.36s & 232(44) & 3.09s & 232(44) & 4.26s & 232(44) & 30m29s \\ \hline
        {\tt  lufact} & 891.51K & 21951(3) & 2.1s & 21951(3) & 2.82s & 21951(3) & 2.78s & 21951(3) & 29.23s & 21951(3) & 24.18s \\ \hline
        {\tt  linkedlist} & 910.60K & 5973(4) & 2.47s & 5949(3) & 3.94s & 7095(4) & 4.83s & 7095(4) & 1m 51s & 7095(4) & 2h5m \\ \hline
        {\tt  cryptorsa} & 130.92K & 11(5) & 2.6s & 11(5) & 5.78s & 35(7) & 1m21s & 35(7) & 1m 52s & 35(7) & 25m52s \\ \hline
        {\tt  sor} & 1.90M & 0(0) & 2.92s & 0(0) & 9.07s & 0(0) & 26.10s & 0(0) & 5.27s & 0(0) & 15.48s \\ \hline
        {\tt  tsp} & 15.22M & 143(6) & 24.59s & 140(6) & 47.53s & 143(6) & 1m19s & 143(6) & 1m 15s & 143(6) & 22m54s \\ \hline
        
    \end{tabular}
    
}
\end{table}

\myparagraph{Effectiveness of heuristics}
We now focus on our main algorithm $\grconfp$ and its optimized variant $\optgrconfp$,
evaluating the effectiveness of our heuristics in balancing scalability and expressiveness (RQ3).
The quantile plot in \figref{heuristic} compares the performance 
of $\optgrconfp$ together with variants obtained by applying one or more several heuristics.
For this evaluation, we set $m=5$ for the grain-size heuristic and $n = 100$ for the LRU heuristic.
The comprehensive table that includes the number of detected races and corresponding processing time is provided in 
\iftoggle{appendix}{\appref{full-result}}{the Appendix}.
The results demonstrate that when all heuristics are enabled, $\optgrconfp$ outperforms other configurations in terms of scalability, successfully processing all benchmarks while significantly reducing processing time on ``harder'' benchmarks.

Enabling more heuristics generally improves scalability, however, this scalability comes at a prediction power cost. 
This trade-off is exemplified in two specific cases:
In benchmark \texttt{clean}, 7 extra races are predicted when disabling the ``grain-shape-limitation'' heuristic, suggesting that grains with either partial critical sections or multiple critical sections
play a significant role in this benchmark.
In benchmark \texttt{sunflow}, more than 5 additional races are detected when disabling the ``grain-size'' heuristic, indicating the necessity of larger grain choices.
Notably, despite missing some race events, more aggressive heuristic configurations still identify all unique bug locations. 
This suggests that our heuristics are able to strike a good balance of predictive power
and performance, despite being theoretically incomplete.


\myparagraph{Expressiveness of $\optgrconfp$}
\tabref{win} presents a comprehensive comparison of the expressiveness of granular prefix reasoning and state-of-the-art data race prediction algorithms.
While the complete version of $\grconfp$ does not scale well, 
we evaluate its optimized variant, $\optgrconfp$, 
configured with heuristic parameters $[\hSz^5, \hSh, \hLRU^{100}]$. 
This version maintains scalability across our benchmark 
suite while maintaining a good prediction power (expressiveness).
Our results demonstrate that $\optgrconfp$ outperforms \syncp, 
detecting $181$ more data race events and identifying $4$ more unique bug locations. 
This empirical evidence supports that the increase in theoretical
expressiveness translates to more races found in practice by $\grconfp$ against \syncp; 
in fact, by an optimized version of it 
which is not as expressive as the ideal theoretical algorithm.
Furthermore, although $\grconfp$ is theoretically incomparable with $\osr$ and $\wcp$ in their predictive power,
$\optgrconfp$ identifies all and 81 additional bug locations detected by $\wcp$, falling short by only 2 bug locations compared with $\osr$ in one benchmark ${\tt hsqldb}$.


\begin{wrapfigure}[9]{r}{0.27\textwidth}
    \vspace{-15pt}
    \includegraphics[scale=0.27]{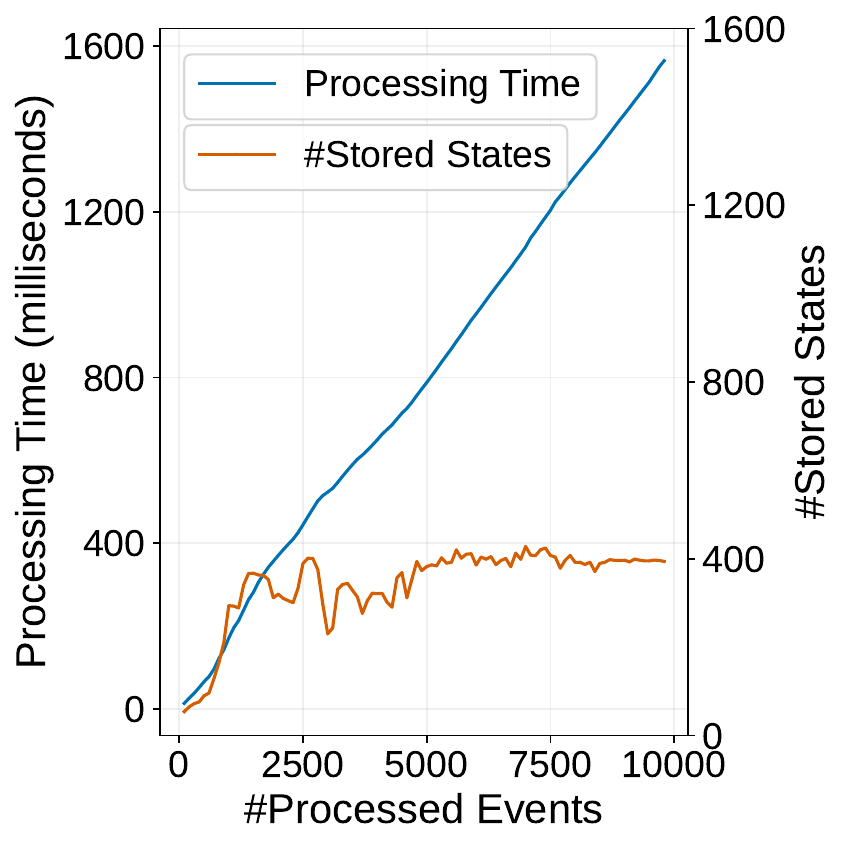}
    \vspace{20pt}
\end{wrapfigure}

\myparagraph{Performance of $\optgrconfp$}
Regarding performance, \osr and $\syncp$ require linear space, while $\grconfp$ is constant-space. 
Nevertheless, \osr and $\syncp$ performs much better in practice,
particularly when processing longer or ``harder'' benchmarks. 
While automata-theoretic algorithms, such as those for \seqp and $\grconfp$, compose well together, 
they suffer from the drawback that the entire state space has to be examined and updated for each incoming event. 
Even for a reasonably small number of states (e.g. 400 states as illustrated on the right), 
repeatedly processing the state space over a very long execution impacts efficiency, potentially limiting scalability.
Future research directions could focus on developing additional heuristics and optimizations for $\grconfp$ and other automata-theoretic algorithms, alongside engineering improvements such as GPU-based parallelization, to enhance their computational efficiency and practical applicability.



\section{Related Work}
\seclabel{related}

Concurrency bug detection techniques have been studied extensively
for multiple decades, with a focus on a large class of bugs, but primarily catered towards data races.
Static analyzers~\cite{Naik06,racerd} typically focus on proving the absence of bugs
in the entire program but tend to raise false alarms, 
limiting their popularity amongst developers.
Dynamic analysis techniques, on the other hand, analyze individual executions
typically suffer from no or very few false positives, and can augment
stress testing~\cite{Musuvathi08} techniques such as fuzz testing~\cite{RFF2024}, controlled concurrency testing~\cite{Nekara2022,yuan2018partial,burckhardt2010randomized}
or model checking~\cite{QR05,Kokologiannakis2022,Kokologiannakis2022}.
The earliest dynamic analysis techniques include those based
on Eraser-style lockset analysis~\cite{Savage97}, which check if
each memory location is protected by a common lock.
Programs that violate this discipline may still be race-free, this algorithm
is again prone to report false positives.

The focus of our work is sound and efficient dynamic analysis.
The simplest sound dynamic analyzers are those that detect a bug \emph{as it occurs}
in runtime by monitoring executions~\cite{kcsan}.
While their runtime overhead is low making them suitable for
performance-critical applications such as the Linux kernel.,
they often miss out on simple data races.
Often the efficacy of such techniques is enhanced using delay injection~\cite{tsvd2019,dataCollider2010}.
In sharp contrast to these are predictive techniques, which attempt to infer the presence of
data races in alternative executions characterized by
a causal model~\cite{Koushik05,serbanuta2013maximal}.
In the simplest form, the happens-before partial order~\cite{Lamport78} 
has inspired some of the most popular and fast dynamic analyses~\cite{Pozniansky03} 
that can be implemented using timestamping~\cite{Mattern89,Fidge91,treeClocks2022}
or lockset-style algorithms~\cite{Elmas07}.
Modern industrial-strength race detectors~\cite{threadsanitizer} are primarily based 
on the epoch optimization~\cite{Flanagan09} for happens-before based race detection.

In its most general form, data race prediction is an 
intractable problem~\cite{Mathur2020b}.
Early algorithms for data race prediction relied upon 
either explicit~\cite{Koushik05} or symbolic~\cite{Said11,Wang09,rvpredict}
enumeration of an exponentially large space of interleavings, and struggled
to scale beyond short executions, unlike happens-before style race detectors.
Smaragdakis et al~\cite{Smaragdakis12} proposed the \emph{causally-precedes}
partial order, and an polynomial time algorithm
based on it for data race prediction.
Kini et al~\cite{Kini17} proposed the first linear time sound dynamic
data race prediction algorithm  based on the \emph{weak causally precedes} relation.
These techniques take inspiration from those based on the happens-before partial order,
and infer orderings between events that potentially reflect causality,
and declare conflicting events to be racy when they are not ordered.
Subsequently, many more partial order based algorithms
have been proposed which either ensure soundness by design~\cite{genc2019}
or using an additional post-hoc analysis~\cite{Roemer20,Roemer18}.
Algorithms such as M2~\cite{Pavlogiannis2020}, $\syncp$~\cite{Mathur21}
and \osr~\cite{OSR2024} are based on characterizations of prefixes that can witness races.

In our work, we demystify and formalize the connection 
between the reasoning beyond these prior polynomial time algorithms and 
reasoning based on commutativity and prefixes, with the goal of arriving at efficient
and more predictive algorithms.
The first starting point is the trace-theoretic interpretation of happens-before, which
lends itself to automata-theoretic techniques, 
naturally giving rise to streaming, constant space and linear time algorithms.
This observation was extended in~\cite{FarzanMathurPOPL2024} for arriving at
more predictive algorithms for causal concurrency.
The automata-theoretic connections to simple prefix reasoning~\cite{Mathur18,Mathur21} 
were formalized in~\cite{Ang2024CAV}.
Here, we show that, as such, vanilla prefix reasoning can subsume reasoning purely based on commutativity
in the context of data race prediction.
We expect the same results to hold for deadlock prediction~\cite{Tunc2023deadlock} 
which asks to solve a similar problem. 
We then outline how careful but intuitive combinations can enhance predictive power
without sacrificing asymptotic complexity of constant space.
We remark that, as such though, partial order based sound techniques of~\cite{Smaragdakis12,Kini17} 
and their derivatives~\cite{genc2019}, as well as
those that attempt to construct a one-off linearization~\cite{OSR2024,Pavlogiannis2020}
are all incomparable to the class of
$\grconfp$-races we propose here.
See \iftoggle{appendix}{\appref{theoretical-comparison}}{the Appendix} for a theoretical comparison.


\section{Conclusions}
\seclabel{conclusions}


Commutativity reasoning has been instrumental in tractable defining equivalences, 
and its applications like model checking~\cite{Flanagan2005}.
Prefix reasoning, on the other hand, as emerged largely in the setting of dynamic bug prediction
against safety properties, and carefully leverage the seemingly
orthogonal axis of carefully choosing which subset of events
must participate in the witness to the bug~\cite{Mathur21,Ang2024CAV},
and, as we show, prefix reasoning subsumes commutativity reasoning for the case of 
data race prediction. In this paper, we combine the two paradigms in a modular manner, a choice of prefix followed by a choice of a pair of grains within which commutativity reasoning is used to predict a race that is beyond the reach of either method individually. The fact that both types of reasoning lend themselves to automata-theoretic constant-space linear-time algorithms enables a modular algorithmic solution for this modular definition.
It is unclear whether similar modular solutions can be devised for combining other algorithm that use (super-)linear space.
It also would be interesting to investigate whether prefix reasoning alone, or combinations of it with commutativity reasoning can play a role in other contexts (such as model checking or proof simplification \cite{lics2023}) where commutativity reasoning alone has had a long history of huge impact. 

\section*{Data Availability Statement}

We have released an anonymized version of our tool $\grconfp$~\cite{grconfp}.
We will submit the tool for artifact evaluation. 

\bibliographystyle{ACM-Reference-Format}
\bibliography{references}

\newpage
\iftoggle{appendix}{
  \appendix


\section{Proofs from \secref{commutativity-prefixes}}

\commutativityComparePredictivePower*

\begin{proof}[Proof Sketch]
It follows simply from the soundness of the independence relations
(in particular that $\scgraincl{\tr} \subseteq \rfcl{\tr} \subseteq \creorderings{\tr}$), and from 
the observations that for every execution $\tr$,
we have $\mazcl{\tr} \subseteq \graincl{\tr} \subseteq \scgraincl{\tr}$
as proved in~\cite{FarzanMathurPOPL2024}.
Here, $\rfcl{\tr}$ is the set of executions that are reads-from equivalent to $\tr$,
and precisely coincide with the set of those correct reorderings $\rho$ of $\tr$
for which $\events{\tr} = \events{\rho}$.
\end{proof}

\commutativityRaceConstantSpace*

\begin{proof}[Proof Sketch]
The proof of the case of $C = \maz$ follows from folklore results,
also reviewed in~\cite{FarzanMathurPOPL2024} since causal concurrency and $C$-race
coincide in this case.
We will present the proof of $C = \scatteredgrains$ and the proof of $C = \grains$ is
similar and skipped.

W.l.o.g assume that $e_1 \trord{\tr} e_2$.
When $e_1$ and $e_2$ are an $\scatteredgrains$-race,
then they can be detected by a choice of grains $S = \set{g^{(1)}, \ldots, g^{(k)}}$
such that $e_1 \in g_1$ and $e_2 \in g_2$.
Let $S_1$ and $S_2$ be the strongly connected components containing
$g_1$ and $g_2$ respectively, in the grain graph of $S$.
If $S_1 = S_2$, then the two grains must not have any other grain
that is sandwiched between them in the middle (according to the edge relation), and further,
$e_1$ must be the last event of $g_1$ and $e_2$ must be the first event of $g_2$,
since the only reorderings we consider are those that linearize $S_1 = S_2$.
If $S_1 \neq S_2$, then we can have three cases.
Case-1: $S_1$ has a path to $S_2$. In this case again, we want that $e_1$ is the last event in $\bigcup_{g \in S_1} \events{g}$, and that $e_2$ is the first event in $\bigcup_{g \in S_2} \events{g}$. Finally, we also want that no other SCC lies on the path from $S_1$ to $S_2$.
Case-2: $S_2$ has a path to $S_1$. In this case, we want that $e_1$ is the first event in $\bigcup_{g \in S_1} \events{g}$, and that $e_2$ is the last event in $\bigcup_{g \in S_2} \events{g}$. Finally, we also want that no other SCC lies on the path from $S_2$ to $S_1$.
Case-3: No path between $S_1$ and $S_2$. In this case, we either demand that $e_1$ is the first event of $S_1$ and $e_2$ is the last event of $S_2$, or vice versa.
All these conditions can be checked in constant space using the grain graph concurrency monitor of~\cite{FarzanMathurPOPL2024}.
\end{proof}

\seqpRaceIsSyncpRace*

\begin{proof}
The proof of why each \syncp-race is a \confp-race was presented in~\cite{Ang2024CAV}.
Here we show that each \confp-race is a \seqp-race (the other directions are more straightforward).
Suppose $(e_1, e_2)$ is a \confp-race of execution $\tr$.
Then there is a correct reordering $\rho$ of $\tr$ such that $\rho \mazeq \rho'$ and $e_1$ and $e_2$ are $\tr$-enabled in $\tr$,
where $\rho' = \proj{\tr}{\events{\rho}}$.
Consider $\rho'$.
Observe that $\rho'$ is a correct reordering of $\rho$ and thus also a correct
reordering of $\tr$.
Further, each event enabled in $\rho$ is also enabled in $\rho'$.
Finally, $\rho'$ preserves the order of events as in $\tr$.
Thus, $\rho'$ is a \seqp-prefix and thus $(e_1, e_2)$ is a \seqp-race
witnessed by the \seqp-prefix $\rho'$.
\end{proof}

\seqpRaceConstantSpace*

\begin{proof}
Follows from \propref{seqp-prefix-race-is-syncp-race} and the constant space
algorithm of \confp-races~\cite{Ang2024CAV}.
\end{proof}

\subsection{Proof of \thmref{prefix-subsumes-commutativity}}

Let us now prove \thmref{prefix-subsumes-commutativity}.
Towards this, let us first consider a simple result.



\PrefixSubsumesCommutativity*

\begin{proof}
Let us first prove that every $C$-race is also a $P$-race.
For this, it suffices to show that each $\scatteredgrains$-race is also a \seqp-race.
Let $(e_1, e_2)$ be a $\scatteredgrains$-race of $\tr$.
This means that there is an execution $\rho \in \scgraincl{\tr}$ such that
$e_1$ and $e_2$ are consecutive in $\rho$.
This means that there is a choice of scattered grains $S = \set{g^{(1)}, g^{(2)}, \ldots g^{(k)}}$
such that $\rho$ is obtained as a sequence $\rho = \textsf{lin}(\scc_1) \textsf{lin}(\scc_2) \cdots \textsf{lin}(\scc_m)$,
where, $\set{\scc_i}_i$ is the class of SCCs in
the graph $\mathsf{GGraph}_{\tr, S}$.
We will argue that there is another linearization of these SCCs whose prefix is \seqp and it nevertheless still

W.l.o.g assume $e_1 \trord{\tr} e_2$.
Let $\scc_1$ and $\scc_2$ be the SCCs of $e_1$ and $e_2$ respectively.
It is of course possible that $\scc_1 = \scc_2$, in which case no event that is between $e_1$
and $e_2$ in $\trord{\tr}$ occurs in this $\scc$.
Now, consider $S' = \setpred{\scc}{\scc \text{ is an SCC of } \mathsf{GGraph}_{\tr, S}, \scc \rightsquigarrow
\scc_1 \lor \scc \rightsquigarrow \scc_2 } \setminus \set{\scc_1, \scc_2}$, 
where $\scc \rightsquigarrow \scc'$ denotes that there is a
path in the graph from some grain in $\scc$ to some grain in $\scc'$.
Now, consider the following set of events
\begin{align*}
E = 
\underbrace{\cup_{g \in S'} \events{g}}_{E_{\sf down}}
\cup 
\underbrace{\setpred{e \neq e_1}{e \in \scc_1, e \trord{\tr} e_1 }}_{E_1}
\end{align*}
We will now show that 
(1) $E$ is downward closed with respect to $\po{}$ and $\rf{}$, 
(2) $E$ does not have multiple open acquire events for the same lock,
(3) if any acquire is unmatched in $E$, then it is the last acquire (according to $\trord{\tr}$) on that lock, and finally
(4) the $\po{}$-predecessors of both $e_1$ and $e_2$.
(1), (2) and (3) ensure that when $E$ is linearized according to $\trord{\tr}$,
we do get a well-formed execution, which means it is also a \seqp-prefix of $\tr$, whereas
(4) ensures that $e_1$ and $e_2$ are $\tr$-enabled after this prefix.

(1) is easy to argue for events in $E_{\sf down}$. Now consider an event $e \in E_1$.
And let $f$ be such that $(f, e) \in \po{\tr} \cup \rf{\tr}$.
It is easy to see that if $f$ is not in $\scc_1$, then it must be in an SCC from $S'$,
in which case it will be in $E_{\sf down}$.
Thus, (1) is true even for $E_1$.

Let us now prove (2) and (3) together.
Suppose on the contrary that there is a lock $\lk$ and there are two acquires $a_1 \stricttrord{\tr} a_2$
of $\lk$ in $E$, both whose matching releases (resp. $r_1$ and $r_2$) are not in $E$.
In this case the grains of $a_1$ and $a_2$ will not be the same, i.e.,
$g(a_1) \neq g(r_1)$.
But then, in this case, there will be an edge from $g(r_1)$ to $g(a_2)$.
Thus, $g(r_1)$ must also be in $S'$ and thus $r_1 \in E$.
For the same reason, only the last acquire on some lock can be open.

Let us now prove (4). If the $\po{}$-predecessor of $e_1$ is in $\scc_1$,
then it is in the set $E_1$, and otherwise it is in $E_{\sf down}$.
Now, let us argue this about $e_2$.
Since $e_1$ and $e_2$ are next to each other.
If they are both in $\scc_1$, then the argument is similar to $e_1$ (of course we know that $(e_1, e_2) \not\in \po{\tr}$).
Otherwise, we know that $e_2$ must be the first event in ${\sf lin}(\scc_2)$ and $e_1$ must be the last event of ${\sf lin}(\scc_1)$.
In this case, all $\po{}$-predecessors of $e_2$ must be in other SCCs.
If they are SCCs from $S'$, then we are done.
Otherwise, it is $\scc_1$, each of its elements (apart from $e_1$) are also in $E$, and we are done again!

For the other direction, consider the execution in \figref{tr-syncp-not-comm}.
\begin{figure}[h]
\subfloat[Execution $\trsyncpnotcomm$ \figlabel{tr-syncp-not-comm}]{%
	\execution{2}{
		\figev{2}{$\wt(y)$}
		\figev{1}{\racy{$\wt(x)$}} 
		\figev{1}{$\acq(\lk)$}
		\figev{1}{$\rd(y)$}
		\figev{1}{$\rel(\lk)$}
		\figev{2}{$\acq(\lk)$}
		\figev{2}{$\wt(y)$}
		\figev{2}{$\rel(\lk)$}
		\figev{2}{\racy{$\wt(x)$}}
	}
}
\subfloat[Execution $\trsyncpnotcommreordering$ \figlabel{tr-syncp-not-comm}]{%
    \execution{2}{
		\figev{2}{$\wt(y)$}
		\figev{2}{$\acq(\lk)$}
		\figev{2}{$\wt(y)$}
		\figev{2}{$\rel(\lk)$}
		\figev{1}{\racy{$\wt(x)$}} 
		\figev{2}{\racy{$\wt(x)$}}
	}
}
\figlabel{syncp-race-not-commutativity-race}
\end{figure}
Here, the two write events on $x$ are a \seqp-race, as witnessed
by the \seqp-prefix shown in \figlabel{tr-syncp-not-comm}.
However, this race cannot be witnessed by any reasoning that preserves the reads-from equivalence
on the set of all events, because in any correct reordering that
witnesses this race, the event $\ev{T_1, \rd(y)}$ cannot be present.
This rules out the fact that there is an execution $\rho \in \scgraincl{\tr}$
in which $e_1$ and $e_2$ are consecutive.
\end{proof}

\section{Proofs from \secref{combining}}

\section{Comparing $\grconfp$ with other classes of data races}
\applabel{theoretical-comparison}

Here, we compare the class of $\grconfp$ races with that characterized by
prior work.

\subsection{Comparison with \shb and \syncp}

As we discuss in \thmref{grconfp-subsumes-syncp}, every \syncp-race~\cite{Mathur21} 
is a $\grconfp$-race.
Further, since every race detected using the \emph{schedulable-happens-before} (\shb)
partial order~\cite{Mathur18}
is also a \syncp-race, they are also $\grconfp$-races.
Next, from \thmref{prefix-subsumes-commutativity}, it also follows that every $C$-augmented $P$-race is also
a $\grconfp$-race, for every $C \in \set{\maz, \grains, \scatteredgrains}$ and for every
$P \in \set{\seqp, \confp, \syncp}$.
Finally, the execution in \figref{syncp-miss} demonstrates a race
that is a $\grconfp$-race but not a \syncp-race.

\subsection{Comparison with \cp, \wcp, \sdp, \mtwo and \osr}

\begin{figure}[h]
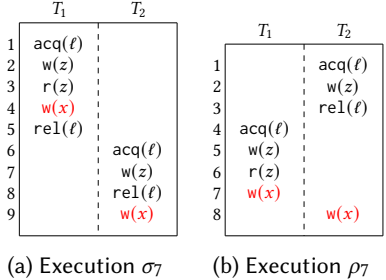

\subfloat[Execution $\trgrconfpwins$ \figlabel{tr-grconfp-wins}]{%
	\execution{2}{
		\figev{1}{$\acq(\lk)$}
		\figev{1}{$\wt(z)$} 
		\figev{1}{$\rd(z)$} 
		\figev{1}{\racy{$\wt(x)$}} 
		\figev{1}{$\rel(\lk)$}
		\figev{2}{$\acq(\lk)$}
		\figev{2}{$\wt(z)$} 
		\figev{2}{$\rel(\lk)$}
		\figev{2}{\racy{$\wt(x)$}}
	}
}
\subfloat[Execution $\trgrconfpwinsreordering$ \figlabel{tr-grconfp-wins-reordering}]{%
    \execution{2}{
    	\figev{2}{$\acq(\lk)$}
		\figev{2}{$\wt(z)$} 
		\figev{2}{$\rel(\lk)$}
		\figev{1}{$\acq(\lk)$}
		\figev{1}{$\wt(z)$} 
		\figev{1}{$\rd(z)$} 
		\figev{1}{\racy{$\wt(x)$}} 
		\figev{2}{\racy{$\wt(x)$}}
	}
}
\caption{Race detected by $\grconfp$ but not by \cp, \wcp, \sdp, \mtwo or \osr. 
The reordering on the right shows the witness reordering obtained by a $\grconfp$-prefix.}
\figlabel{grconfp-race-missed-by-cp-wcp-sdp-m2}
\end{figure}

Smaragdakis et al~\cite{Smaragdakis12} introduced the \emph{causally precedes} (\cp)
partial order that weakens the happens-before order and can be implemented in super-linear time.
Subsequently, Kini et al~\cite{Kini17} generalize it to
the \emph{weak causally precedes} (\wcp) relation which could 
be implemented in linear time and space.
Next, Gen{\c c} et al~\cite{genc2019} further weakened \wcp to the 
\emph{strong-dependently-precedes} (\sdp) order.
The races characterized by these three orders (called \cp-races, \wcp-races and \sdp-races)
are exactly those pairs of conflicting events that are unordered by them.
Since these races form a larger class than \hb-races, these orders essentially identify
conditions under which two critical sections on the same lock do not need to be ordered. 
At the same time, in order to ensure soundness and tractability, they can be conservative
in their judgement.
\figref{tr-grconfp-wins} shows an example of a data race (between the two $\wt(x)$ events in $T_1$ and $T_2$) 
that each of these three partial order based techniques fail to detect.
This is because these relations will order the two critical sections on $\lk$
since the first one contains a $\rd(z)$ event, while the later critical section
contains a conflicting $\wt(z)$ event.
Composition with $\po{}$ then tells us that the two racy events actually get ordered and thus cannot
be declared to be a race.

The \mtwo algorithm~\cite{Pavlogiannis2020} attempts
to prove that an event pair $(e_1, e_2)$ is a race by coming up with a 
prefix that closes every open acquire in the prefix --- if in doing so,
either $e_1$ or $e_2$ are forced into the prefix, this pair is not a race, else it is.
The race in~\figref{tr-grconfp-wins} is not a \mtwo race since
closing the acquire events on $\lk$ will enforce the earlier $\wt(x)$ event to be in the prefix.
\osr-races~\cite{OSR2024} are characterized by prefixes where
locks are only optimistically closed,
but require that the order of conflicting write events is not reversed.
The race in \figref{tr-grconfp-wins} is also not an \osr-race~\cite{OSR2024}
since the only reordering that witnesses it~\figref{tr-grconfp-wins-reordering}
reverses the order of the write events on $z$.

\begin{figure}[h]
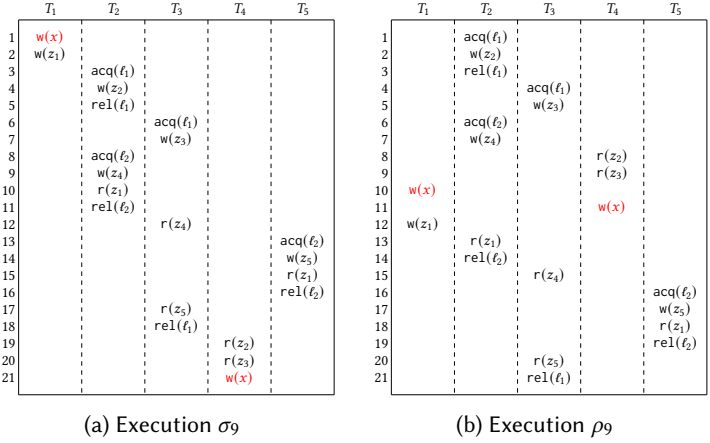

\subfloat[Execution $\trosrmisses$ \figlabel{tr-osr-misses}]{%
\scalebox{0.8}{
	\execution{5}{
	    \figev{1}{\racy{$\wt(x)$}}
	    \figev{1}{$\wt(z_1)$}
	    \figev{2}{$\acq(\lk_1)$}
	    \figev{2}{$\wt(z_2)$}
	    \figev{2}{$\rel(\lk_1)$}
	    \figev{3}{$\acq(\lk_1)$}
	    \figev{3}{$\wt(z_3)$}
	    \figev{2}{$\acq(\lk_2)$}
	    \figev{2}{$\wt(z_4)$}
	    \figev{2}{$\rd(z_1)$}
	    \figev{2}{$\rel(\lk_2)$}
	    \figev{3}{$\rd(z_4)$}
	    \figev{5}{$\acq(\lk_2)$}
	    \figev{5}{$\wt(z_5)$}
	    \figev{5}{$\rd(z_1)$}
	    \figev{5}{$\rel(\lk_2)$}
	    \figev{3}{$\rd(z_5)$}
	    \figev{3}{$\rel(\lk_1)$}
	    \figev{4}{$\rd(z_2)$}
	    \figev{4}{$\rd(z_3)$}
	    \figev{4}{\racy{$\wt(x)$}}
    }
   }
}
\subfloat[Execution $\trosrmissesreordering$ \figlabel{tr-osr-misses-reordering}]{%
  \scalebox{0.8}{
  \execution{5}{
	    \figev{2}{$\acq(\lk_1)$}
	    \figev{2}{$\wt(z_2)$}
	    \figev{2}{$\rel(\lk_1)$}
	    \figev{3}{$\acq(\lk_1)$}
	    \figev{3}{$\wt(z_3)$}
	    \figev{2}{$\acq(\lk_2)$}
	    \figev{2}{$\wt(z_4)$}
	    \figev{4}{$\rd(z_2)$}
	    \figev{4}{$\rd(z_3)$}
	    \figev{1}{\racy{$\wt(x)$}}
	    \figev{4}{\racy{$\wt(x)$}}
	    \figev{1}{$\wt(z_1)$}
	    \figev{2}{$\rd(z_1)$}
	    \figev{2}{$\rel(\lk_2)$}
	    \figev{3}{$\rd(z_4)$}
	    \figev{5}{$\acq(\lk_2)$}
	    \figev{5}{$\wt(z_5)$}
	    \figev{5}{$\rd(z_1)$}
	    \figev{5}{$\rel(\lk_2)$}
	    \figev{3}{$\rd(z_5)$}
	    \figev{3}{$\rel(\lk_1)$}
    }
    }
} 
\caption{Race missed by \osr and \mtwo but not by $\maz$. 
The reordering on the right shows the witness reordering obtained by a \seqp-prefix.}
\figlabel{osr-misses-hb}
\end{figure}
It is noteworthy that \osr can miss even $\maz$-races~\cite{Mathur18};
see \figref{osr-misses-hb}

On the other hand, each of these methods can detect races that require
linear space~\cite{Kini17} or quadratic time~\cite{OSR2024} 
and thus cannot be $\grconfp$-races since the latter
can be detected in constant space and linear time.

\section{Proofs from \secref{stratification}}

\Monotonicity*

\begin{proof}
    Since $e_1, e_2$ is a predictable race in $\beta$, there exists a correct reordering $\rho$ of $\beta$ where $e_1, e_2$ are enabled by $\rho$.
    To show $e_1, e_2$ also form a predictable race in $\sigma$, it suffices to prove that $\alpha\rho$ is a correct reordering of $\sigma$.
    We have $\events{\alpha\rho} \subseteq\events{\alpha\beta}$ as $\events{\rho}\subseteq \events{\beta}$.
    For each $e\in\threads$, $\proj{\alpha\rho}{t} = \proj{\alpha}{t}\proj{\rho}{t}$ is a prefix of $\proj{\alpha}{t}\proj{\beta}{t} = \proj{\sigma}{t}$.
    For reads-from relation, $\rf{\alpha\rho} = \rf{\alpha} \cup \rf{\rho} \cup \setpred{e\in \rho, e'\in\alpha}{(e, e')\in \rf{\alpha\beta}} \subseteq \rf{\alpha} \cup \rf{beta} \cup \setpred{e\in \beta, e'\in\alpha}{(e, e')\in \rf{\alpha\beta}} = \rf{\alpha}\rf{\beta}$.
\end{proof}

\section{Complete Table for \secref{experiments}}
\applabel{full-result}
\begin{table}[!ht]
    \caption{Results of data race prediction by $\optgrconfp$ under different combinations of heuristics$[-/\hSz^5, -/\hSh, -/\hLRU^{100}]$. Column 1--2 respectively list the name and number of events ($\mathcal{N}$) for each benchmark. Column 3--18 presents the detected races and processing time by each heuristic configurations. $n(l)$ denotes that $n$ events are predicted as racy with an earlier event and correspond to $l$ bug locations. TO denotes a time-out on 3 hours.}
    \tablabel{heuristics}
    \centering
    \scalebox{0.50}{
    \begin{tabular}{|c|c|||c|c||c|c||c|c||c|c||c|c||c|c||c|c||c|c|}
        \hline
        1 & 2 & 3 & 4 & 5 & 6 & 7 & 8 & 9 & 10 & 11 & 12 & 13 & 14 & 15 & 16 & 17 & 18 \\\hline
        \multicolumn{2}{|c|||}{Benchmark} & \multicolumn{2}{c||}{\makecell{[$\hSh$, $\hSz^5$, $\hLRU^{100}$]}}  & \multicolumn{2}{c||}{\makecell{[$-$, $\hSz^5$, $\hLRU^{100}$]}} & \multicolumn{2}{c||}{\makecell{[$\hSh$, $-$, $\hLRU^{100}$]}} & \multicolumn{2}{c||}{\makecell{[$\hSh$, $\hSz^5$, $-$]}} & \multicolumn{2}{c||}{\makecell{[$-$, $-$, $\hLRU^{100}$]}} & \multicolumn{2}{c||}{\makecell{[$-$, $\hSz^5$, $-$]}} & \multicolumn{2}{c||}{\makecell{[$\hSh$, $-$, $-$]}} & \multicolumn{2}{c|}{\makecell{[$-$, $-$, $-$]}}\\\hline
        Name & $\mathcal{N}$ & Race & Time & Race & Time & Race & Time & Race & Time & Race & Time & Race & Time & Race & Time & Race & Time \\ \hline
        
        {\tt bbuffer} & 9 & 3(1) & 0.01s & 3(1) & 0.01s & 3(1) & 0.01s & 3(1) & 0.01s & 3(1) & 0.01s & 3(1) & 0.01s & 3(1) & 0.01s & 3(1) & 0.01s \\ \hline
        {\tt  array} & 11 & 0(0) & 0.01s & 0(0) & 0.02s & 0(0) & 0.01s & 0(0) & 0.01s & 0(0) & 0.02s & 0(0) & 0.02s & 0(0) & 0.01s & 0(0) & 0.03s \\ \hline
        {\tt  critical} & 11 & 3(3) & 0.01s & 3(3) & 0.01s & 3(3) & 0.01s & 3(3) & 0.01s & 3(3) & 0.01s & 3(3) & 0.01s & 3(3) & 0.01s & 3(3) & 0.01s \\ \hline
        {\tt account} & 15 & 3(1) & 0.01s & 3(1) & 0.02s & 3(1) & 0.01s & 3(1) & 0.01s & 3(1) & 0.02s & 3(1) & 0.01s & 3(1) & 0.01s & 3(1) & 0.03s \\ \hline
        {\tt airltkts} & 18 & 8(3) & 0.01s & 8(3) & 0.03s & 8(3) & 0.01s & 8(3) & 0.01s & 8(3) & 0.02s & 8(3) & 0.02s & 8(3) & 0.01s & 8(3) & 0.03s \\ \hline
        {\tt pingpong} & 24 & 8(3) & 0.01s & 8(3) & 0.03s & 8(3) & 0.01s & 8(3) & 0.01s & 8(3) & 0.03s & 8(3) & 0.03s & 8(3) & 0.03s & 8(3) & 0.04s \\ \hline
        {\tt twostage} & 83 & 8(2) & 0.74s & 7(2) & 0.25s & 8(2) & 0.15s & 8(2) & 1m51s & 4(1) & 0.52s & 8(2) & 55m28s & 8(2) & 1m49s & $-$ & TO \\ \hline
        {\tt wronglock} & 122 & 25(2) & 0.15s & 20(2) & 0.25s & 25(2) & 0.15s & 25(2) & 0.15s & 6(2) & 0.62s & 25(2) & 0.58s & 25(2) & 0.15s & $-$ & TO \\ \hline
        {\tt batik} & 131 & 10(2) & 0.07s & 10(2) & 0.09s & 10(2) & 0.11s & 10(2) & 0.07s & 10(2) & 0.4s & 10(2) & 0.11s & 10(2) & 0.09s & 10(2) & 0.41s \\ \hline
        {\tt mergesort} & 167 & 5(2) & 0.1s & 5(2) & 0.25s & 5(2) & 0.23s & 5(2) & 0.1s & 1(1) & 0.81s & 5(2) & 0.25s & 5(2) & 1.54s & 5(2) & 32m21s \\ \hline
        {\tt prodcons} & 246 & 1(1) & 0.13s & 1(1) & 0.54s & 1(1) & 0.18s & 1(1) & 0.14s & 1(1) & 1.31s & 1(1) & 0.69s & 1(1) & 0.19s & $-$ & TO \\ \hline
        {\tt raytracer} & 526 & 8(4) & 0.12s & 8(4) & 0.16s & 8(4) & 0.23s & 8(4) & 0.13s & 5(3) & 0.98s & 8(4) & 0.23s & 8(4) & 0.21s & 8(4) & 2.68s \\ \hline
        {\tt biojava} & 852 & 7(3) & 0.36s & 7(3) & 1s & 7(3) & 3.21s & 7(3) & 0.36s & 6(3) & 6.83s & 7(3) & 1.43s & 7(3) & 45.94s & $-$ & TO \\ \hline
        {\tt clean} & 867 & 103(4) & 0.34s & 110(4) & 1.35s & 103(4) & 0.43s & 103(4) & 0.35s & 87(4) & 4.62s & 110(4) & 3.84s & 103(4) & 0.41s & $-$ & TO \\ \hline
        {\tt bubblesort} & 1.65K & 367(5) & 30.97s & 322(5) & 7.44s & 274(5) & 9.88s & 371(5) & 2m22s & 44(5) & 1m20s & 371(5) & 5m40s & $-$ & TO & $-$ & TO \\ \hline
        {\tt lang} & 1.81K & 400(1) & 1.16s & 310(1) & 2.66s & 102(1) & 1.85s & 400(1) & 1.12s & 0(0) & 2m48s & 400(1) & 2.68s & 400(1) & 9.12s & 400(1) & 42m57s \\ \hline
        {\tt jigsaw} & 3.18K & 6(6) & 16.81s & 5(5) & 9.12s & 6(6) & 11.97s & 6(6) & 1m11s & 4(4) & 1m21s & 6(6) & 9m17s & 6(6) & 1h22m & $-$ & TO \\ \hline
        {\tt sunflow} & 3.32K & 119(7) & 1.25s & 116(7) & 5.71s & 124(7) & 1.98s & 119(7) & 1.27s & 30(5) & 1m6s & 119(7) & 38m26s & 130(7) & 43m17s & $-$ & TO \\ \hline
        {\tt montecarlo} & 7.60K & 5066(3) & 0.44s & 5066(3) & 0.94s & 5066(3) & 0.49s & 5066(3) & 0.46s & 60(3) & 3.71s & 5066(3) & 0.94s & 5066(3) & 0.43s & $-$ & TO \\ \hline
        {\tt readwrite} & 9.88K & 228(4) & 38.52s & 228(4) & 14.52s & 228(4) & 8.26s & 228(4) & 37.8s & 190(4) & 1m34s & 228(4) & 12m52s & 228(4) & 1m2s & $-$ & TO \\ \hline
        {\tt bufwriter} & 10.26K & 8(4) & 1.31s & 8(4) & 12.15s & 8(4) & 18.24s & 8(4) & 1.33s & 8(4) & 1m1s & 8(4) & 22.64s & 8(4) & 23m9s & $-$ & TO \\ \hline
        {\tt luindex} & 15.95K & 15(15) & 1.36s & 15(15) & 10.43s & 15(15) & 48.93s & 15(15) & 1.37s & 7(7) & 1m21s & 15(15) & 11.46s & $-$ & TO & $-$ & TO \\ \hline
        {\tt ftpserver} & 17.10K & 85(21) & 5m9s & 77(21) & 51.8s & 78(21) & 32.73s & $-$ & TO & 65(21) & 10m11s & $-$ & TO & $-$ & TO & $-$ & TO \\ \hline
        {\tt moldyn} & 21.07K & 103(3) & 0.71s & 103(3) & 1.68s & 103(3) & 3.19s & 103(3) & 0.64s & 80(3) & 19.82s & 103(3) & 1.68s & 103(3) & 19m21s & $-$ & TO \\ \hline
        {\tt derby} & 75.08K & 30(11) & 25m24s & 29(10) & 8m59s & 30(11) & 16m19s & $-$ & TO & 26(9) & 1h8m & $-$ & TO & $-$ & TO & $-$ & TO \\ \hline
        {\tt graphchi} & 147.24K & 0(0) & 22.99s & 0(0) & 3m1s & 0(0) & 22.45s & 0(0) & 22.54s & 0(0) & 30m2s & 0(0) & 5m4s & 0(0) & 24.84s & $-$ & TO \\ \hline
        {\tt avrora} & 204.11K & 75(5) & 49m37s & 70(5) & 6m48s & 74(5) & 8m4s & 75(5) & 1h22m & 16(4) & 1m38m & $-$ & TO & $-$ & TO & $-$ & TO \\ \hline
        {\tt hsqldb} & 647.53K & 191(191) & 22m15s & 176(176) & 44m53s & 173(173) & 1h46m & $-$ & TO & $-$ & TO & $-$ & TO & $-$ & TO & $-$ & TO \\ \hline
        {\tt xalan} & 671.79K & 37(12) & 14m1s & 37(12) & 30m28s & 37(12) & 34m46s & $-$ & TO & $-$ & TO & $-$ & TO & $-$ & TO & $-$ & TO \\ \hline
        {\tt lusearch} & 751.32K & 232(44) & 30m29s & 230(43) & 22m58s & 212(42) & 22m20s & 232(44) & 30m28s & $-$ & TO & $-$ & TO & $-$ & TO & $-$ & TO \\ \hline
        {\tt lufact} & 891.51K & 21951(3) & 24.18s & 21951(3) & 1m42s & 21951(3) & 46.93s & 21951(3) & 25.33s & 20987(3) & 24m20s & 21951(3) & 1m42s & 21951(3) & 7m15s & $-$ & TO \\ \hline
        {\tt linkedlist} & 910.60K & 7095(4) & 2h5m & $-$ & TO & $-$ & TO & $-$ & TO & $-$ & TO & $-$ & TO & $-$ & TO & $-$ & TO \\ \hline
        {\tt cryptorsa} & 130.92K & 35(7) & 25m52s & 30(6) & 1h13m & 31(7) & 1h22m & $-$ & TO & $-$ & TO & $-$ & TO & $-$ & TO & $-$ & TO \\ \hline
        {\tt sor} & 1.90M & 0(0) & 15.48s & 0(0) & 37.14s & 0(0) & 10m20s & 0(0) & 14.8s & 0(0) & 50m32s & 0(0) & 37.67s & 0(0) & 1h18m & $-$ & TO \\ \hline
        {\tt tsp} & 15.22M & 143(6) & 22m54s & $-$ & TO & 143(6) & 19m49s & 143(6) & 20m21s & $-$ & TO & $-$ & TO & 143(6) & 19m55s & $-$ & TO \\ \hline
    \end{tabular}
    }
\end{table}
}

\end{document}